\documentclass[letterpaper, 10 pt]{ieeeconf}  
\IEEEoverridecommandlockouts 
\usepackage{geometry}
\newgeometry{top=0.75in, bottom=0.75in, right=0.75in, left=0.75in}

\newcommand{\bigzono}[1]{\Big\langle #1 \Big\rangle}

\newcommand{\zono}[1]{\langle #1 \rangle}
\usepackage{packagearticleieee}
\usepackage{amssymb}

\newtheorem{proposition}{Proposition}

\usepackage{cite}
\usepackage{booktabs}

 \usepackage{algorithm} 
 \usepackage{algpseudocode}
 \usepackage[utf8]{inputenc}
\usepackage{graphicx}
\usepackage{subcaption}

\usepackage{enumitem}

\makeatletter
\let\NAT@parse\undefined
\makeatother
\usepackage{url}
\usepackage[colorlinks=true,linkcolor=blue!80!black,citecolor=blue!80!black,urlcolor=blue!80!black]{hyperref}
\newcommand{\Rn}{\R^n}
\newcommand{\operator}[1]{{\normalfont \texttt{#1}}}

\newcommand{\CE}{R}
\newcommand{\Comp}{\textit{Complexity: }}

\DeclareMathSymbol{\shortminus}{\mathbin}{AMSa}{"39}

\newcommand{\id}{\mathsf{id}}

\def\tr{\text{tr}}

\title{\LARGE\bf Data-Driven Nonconvex Reachability Analysis \\ using Exact Set Propagation}

\author{Zhen Zhang$^1$, M. Umar B. Niazi$^2$, Michelle S. Chong$^3$, Karl H. Johansson$^2$, and Amr Alanwar$^1$
\thanks{$^1$ School of Computation, Information and Technology, Technical University of Munich, Germany. (Email: $\{$zhenzhang.zhang, alanwar$\}$@tum.de)}
\thanks{$^2$ Division of Decision and Control Systems, Digital Futures, School of Electrical Engineering and Computer Science, KTH Royal Institute of Technology, Stockholm, Sweden  (Email: $\{$mubniazi, kallej$\}$@kth.se)}
\thanks{$^3$ Department of Mechanical Engineering, Eindhoven University of Technology, the Netherlands (Email: m.s.t.chong@tue.nl)}
}

\begin{document}
\maketitle
\thispagestyle{empty}
\pagestyle{empty}

\begin{abstract}

This paper studies deterministic data-driven reachability analysis for dynamical systems with unknown dynamics and nonconvex reachable sets. Existing deterministic data-driven approaches typically employ zonotopic set representations, for which the multiplication between a zonotopic model set and a zonotopic state set cannot be represented algebraically exactly, thereby necessitating over-approximation steps in reachable-set propagation. To remove this structural source of conservatism, we introduce constrained polynomial matrix zonotopes (CPMZs) to represent data-consistent model sets, and show that the multiplication between a CPMZ model set and a constrained polynomial zonotope (CPZ) state set admits an algebraically exact CPZ representation. This property enables set propagation entirely within the CPZ representation, thereby avoiding propagation-induced over-approximation and even retaining the ability to represent nonconvex reachable sets. Moreover, we develop set-theoretic results that enable the intersection of data-consistent model sets as new data become available, yielding the proposed online refinement scheme that progressively tightens the data-consistent model set and, in turn, the resulting reachable set. Beyond linear systems, we extend the proposed framework to polynomial dynamics and develop additional set-theoretic results that enable both model-based and data-driven reachability analysis within the same algebraic representation. By deriving algebraically exact CPZ representations for monomials and their compositions, reachable-set propagation can be carried out directly at the set level without resorting to interval arithmetic or relaxation-based bounding techniques. Numerical examples for both linear and polynomial systems demonstrate a significant reduction in conservatism compared to state-of-the-art deterministic data-driven reachability methods.

\end{abstract}

\section{Introduction}

Reachability analysis characterizes the set of all states that a dynamical system can reach from a given initial set under admissible inputs and uncertainty, and constitutes a fundamental tool for safety verification and performance analysis of dynamical systems \cite{Althoff2010PhD, Kuhn1998}. In safety-critical applications such as autonomous vehicles, robotics, and aerospace systems, reachability guarantees are often mandated by regulatory standards, where constraint violations are unacceptable and worst-case behavior must be rigorously accounted for \cite{rierson2017developing}.

Classical reachability methods typically rely on accurate mathematical models of system dynamics, which are frequently unavailable or costly to obtain for complex systems. This limitation has motivated the development of data-driven reachability analysis, which leverages measured input--state data to infer reachable sets without explicit model identification \cite{alanwar2021data,Alanwar2023Datadriven}.

Data-driven reachability methods can be broadly classified into stochastic and deterministic frameworks. Stochastic approaches characterize reachable sets in a probabilistic sense and rely on distributional assumptions or statistical guarantees, including kernel distribution embeddings \cite{thorpe2019model,thorpe2021sreachtools}, empirical inverse Christoffel functions with PAC guarantees \cite{devonport2023data}, Gaussian process state-space models \cite{griffioen2023data}, and statistical reachability under distribution shifts \cite{hashemi2024statistical}. Data-driven reachability has also been investigated for human-in-the-loop systems and learning-based settings using Gaussian mixture models, optimized disturbance bounds, and neural network approximations \cite{govindarajan2017data,choi2023data,sivaramakrishnan2024stochastic}. While powerful, these approaches provide probabilistic guarantees and may not be suitable in applications where deterministic safety guarantees are required.

In contrast, deterministic data-driven reachability methods aim to compute set-valued enclosures that are guaranteed to contain all trajectories consistent with bounded disturbances and measured data \cite{alanwar2021data,Alanwar2023Datadriven}. Such guarantees are particularly important when probabilistic constraint violations cannot be tolerated, for instance, in collision avoidance and safety certification \cite{rierson2017developing}. Moreover, when data is limited or uncertainty distributions are unknown, deterministic formulations avoid potentially incorrect distributional assumptions and provide robustness against worst-case scenarios. Within this paradigm, zonotopic set representations and their constrained variants have become widely adopted due to their favorable computational properties and their ability to encode data consistency and prior knowledge \cite{scott2016constrained,alanwar2021data,Alanwar2023Datadriven,alanwar2022enhancing,farjadnia2024robust}.

A central challenge in deterministic reachability analysis is the accurate representation and propagation of nonconvex reachable sets. Nonconvexity naturally arises from obstacle-induced initial sets, logical constraints, and nonlinear system behavior. While convex over-approximations offer computational simplicity, they may introduce significant conservatism and obscure critical geometric features of the reachable set \cite{boyd2004convex,Alanwar2023Datadriven}. Although recent works have addressed aspects of nonconvexity through nonconvex optimization, data-driven set estimation, and reachability-based performance guarantees \cite{dietrich2024nonconvex,wang2023data,djeumou2022fly}, accurately propagating nonconvex reachable sets without resorting to conservative relaxations remains challenging, particularly in data-driven settings.

This difficulty becomes structural already for linear time-invariant (LTI) systems. In deterministic data-driven reachability, reachable sets are obtained by multiplying a set of system matrices consistent with data and bounded disturbances with a set-valued state-input bundle \cite{alanwar2021data,Alanwar2023Datadriven}. Existing approaches typically represent data-consistent models using matrix zonotopes or constrained matrix zonotopes and reachable sets using constrained zonotopes. However, constrained zonotopes do not admit an algebraically exact representation for the multiplication between constrained matrix zonotopes and constrained zonotopes. Hence, reachability propagation necessarily relies on outer approximations, which accumulate conservatism over time. Moreover, since constrained zonotopes represent convex sets, they cannot capture nonconvex reachable-set geometries, even when such nonconvexity arises naturally from the underlying uncertainty.

For polynomial dynamics, the limitations are even more pronounced. In the model-based setting, reachability analysis for polynomial systems commonly relies on bounding the image of sets under polynomial maps using Bernstein expansions or sum-of-squares optimization \cite{dang2012reachability,dreossi2017reachability,lin2022reachable,jones2019using}. While these methods provide formal guarantees, they inherently yield conservative outer or inner approximations due to polynomial bounding or certificate relaxation. Similarly, polynomial-zonotope-based nonlinear reachability methods typically employ nonlinear mapping followed by order reduction to maintain tractability, resulting in conservative flowpipe enclosures in general \cite{kochdumper2020sparse}. In the data-driven setting, deterministic frameworks for polynomial systems infer sets of data-consistent models and propagate reachable sets using set arithmetic, which again yields guaranteed over-approximations \cite{Alanwar2023Datadriven,park2024data}. Consequently, existing approaches generally do not provide algebraically exact set-level propagation within zonotopic representations for polynomial systems, neither in the model-based nor in the data-driven setting.

In this paper, we address the above limitations by developing an exact algebraic framework for deterministic data-driven reachability. 
The key idea is to exploit the richer algebraic structure of constrained polynomial zonotopes (CPZs) \cite{kochdumper2023constrained}, which generalize classical zonotopic representations by encoding polynomial dependencies. 
Building on this representation, we introduce constrained polynomial matrix zonotopes (CPMZs) to represent data-consistent model sets and show that the multiplication of a CPMZ with a CPZ admits an algebraically exact CPZ representation, explicitly incorporating constraints. 
In contrast to existing exact multiplication results for unconstrained polynomial zonotopes and matrix zonotopes \cite{kochdumper2020sparse,luo2023reachability}, the proposed construction enables exact propagation of nonconvex reachable sets in data-driven settings without relaxation.

Beyond exact propagation for a fixed dataset, we further develop an iterative set refinement mechanism that exploits the streaming availability of data. As new input--state trajectories are collected, successive data-consistent model sets can be intersected to yield progressively tighter enclosures of the true system dynamics. By integrating this refinement mechanism with the proposed exact propagation operators, the resulting reachable sets can be tightened online, reducing conservatism as more data become available.

Finally, the proposed algebraic framework extends naturally to polynomial systems. By leveraging exact multiplication and Cartesian product operations within the class of CPZs, we establish an exact reachability framework for polynomial dynamics in both the model-based and data-driven settings. In particular, reachable sets are propagated by evaluating monomial dynamics directly at the set level, avoiding interval arithmetic or relaxation-based over-approximations that are inherent in existing approaches \cite{conf:polyDissipat,alanwar2025polynomial,park2024data,althoff2015introduction}.

The main technical contributions of this paper are as follows:
\begin{enumerate}






  \item A new set representation, termed the constrained polynomial matrix zonotope (CPMZ), is introduced together with essential operations for reachability analysis and propagation.


 \item An algebraically exact CPZ representation is derived for the multiplication between a CPMZ data-consistent model set and a CPZ state set, and it is used to obtain a reachability recursion that propagates reachable sets through uncertain model sets without introducing additional enclosure beyond the assumed uncertainty bounds, while preserving nonconvex geometry.


\item The CPMZ framework is shown to naturally accommodate nonconvex zonotopic disturbance sets, enabling deterministic data-driven reachability under richer noise descriptions without conservative convex over-approximations.


\item An online model-set refinement mechanism is proposed that constructs a data-consistent model set from newly acquired input--state data and intersects it with the current model set, yielding a nested sequence of data-consistent model sets that tightens monotonically as more data become available.

\item For model-based reachability analysis of polynomial systems, we develop an algebraically exact set propagation method within the CPZ representation, which evaluates polynomial dynamics directly at the set level and avoids interval arithmetic and relaxation-based bounds.



\item CPMZ-based data-consistent model-set learning is combined with algebraically exact CPZ propagation to enable deterministic data-driven reachability analysis for polynomial systems, including online model-set refinement via set intersection.

\end{enumerate}

Overall, the proposed framework provides a unified algebraic foundation for deterministic data-driven reachability analysis, enabling exact set-based propagation of nonconvex reachable sets under model uncertainty and extending from linear to polynomial dynamical systems.

The remainder of this paper is organized as follows.
Section~\ref{sec:preliminaries} introduces the preliminaries and formulates the data-driven reachability problem.
Section~\ref{sec:DDSV} presents the proposed exact data-driven reachability framework for linear systems with nonconvex set representations.
Section~\ref{sec:reachnonlinear} extends the proposed framework to polynomial systems and establishes algebraically exact set propagation within the CPZ representation in both the model-based and data-driven settings.
Section~\ref{sec:numerical-simulations} provides numerical examples that illustrate the effectiveness of the proposed approach.
Finally, Section~\ref{sec:conclusion} concludes the paper and discusses future research directions.


\section{Preliminaries and Problem Formulation}\label{sec:preliminaries}

\subsection{Notations}
The sets of real and natural numbers are denoted by $\R$ and $\N$, respectively, with $\N_0 = \N\cup \{0\}$. 
We denote the matrix of zeros and ones of size $m \times n$ by $0_{m \times n}$ and $1_{m \times n}$, respectively, and the identity matrix by $I_n\in\R^{n\times n}$, where the subscripts are sometimes omitted to avoid clutter.
For a matrix $A$, $A^\top$ denotes the transpose and $A^\dagger$ the Moore–Penrose pseudoinverse.
By $A_{(i,j)}$, we denote the $(i,j)$-th entry and by $A_{(\cdot,j)}$, the $j$-th column. 
A slight abuse of notation is adopted for indexed matrices (e.g., $A_n$), where we use $A_n^{(i,j)}$ and $A_n^{(\cdot,j)}$.
For a vector or list $v$, $v_{(i)}$ denotes the $i$-th element, and $v_{(p_1:p_2)} \triangleq (v_{(p_1)}, \dots, v_{(p_2)})$ its restriction. 
The vectorization of a matrix $A \in \R^{n \times m}$ is denoted by $\mathrm{vec}(A) \in \R^{nm \times 1}$.
For a set $\mathcal{H} = \{1,2,\dots,{|\mathcal{H}|}\}$ with $|\mathcal{H}|$ denoting its cardinality, $A_{(\cdot,[1:|\mathcal{H}|])}$ denotes the matrix $[A_{(\cdot,1)},\dots, A_{(\cdot,|\mathcal{H}|)}]$.


\begin{table}[b]
    \centering
    \begin{tabular}{ll}
        \toprule
        CZ & Constrained Zonotope \\
        CPZ & Constrained Polynomial Zonotope \\
        MZ & Matrix Zonotope \\
        CMZ & Constrained Matrix Zonotope \\
        CPMZ & Constrained Polynomial Matrix Zonotope \\
        \bottomrule
    \end{tabular}
    \caption{List of acronyms}
    \label{tab:acronyms}
\end{table}

\subsection{Problem Statement}

We consider a system in discrete-time $k\in\N_0$ with an unknown system model:
\begin{equation} \label{eq:model-linear}
    x_{(k)} = f(x_{(k-1)},u_{(k-1)}) + w_{(k)},
\end{equation}
where $x_{(k)} \in \R^{n_x}$ is the state at time $k$, $f:\mathbb{R}^{n_x}\times\mathbb{R}^{n_u} \rightarrow \mathbb{R}^{n_x}$ a twice differentiable unknown function, $u_{(k)} \in \R^{n_u}$ is the control input, and $w_{(k)} \in \R^{n_x}$ is the unknown noise.
Reachability analysis computes the set containing all possible states $x_{(k)}$ that can be reached from a compact set $\mathcal{X}_0$ containing initial states given a compact set $\mathcal{U}_k\ni u_{(k)}$ of feasible inputs and a compact set $\mathcal{Z}_w\ni w_{(k)}$ of possible noise.

\begin{definition}
The \textit{exact reachable set}
\begin{align} 
    \label{eq:R}
    \mathcal{R}_{N} \triangleq \big\{ x_{(N)} \in \R^{n_x} \mid \ & x_{(k+1)} = f(x_{(k)} , u_{(k)}) + w_{(k)}, \nonumber \\&
    x_{(0)} \in \mathcal{X}_0,
    u_{(k)} \in \mathcal{U}_k, w_{(k)} \in \mathcal{Z}_w,\nonumber\\ &  \forall k \in \{0,...,N{-}1\} \big\}
\end{align}
is the set of all states that can be reached after $N$ time steps starting from the initial set $\mathcal{X}_0$
subject to inputs $u_{(k)} \in \mathcal{U}_k$ and noise $w_{(k)} \in \mathcal{Z}_w$, for $ k \in \{0,1,\dots,N-1\}$.
\hfill $\lrcorner$
\end{definition}

\begin{remark}[Notion of exactness]\label{rem:exactness}
Definition~1 uses the term exact reachable set to denote the standard reachable set induced by the system dynamics, inputs, and disturbances.
Throughout this paper, exact refers only to the algebraic exactness of set operations within the chosen set representation.
A set operation is algebraically exact if the resulting set representation equals the exact image of the operand sets under the corresponding mapping, without introducing additional relaxation or enclosure beyond the assumed uncertainty bounds.
This notion of exactness is representation-dependent and does not refer to the exact computation of the true reachable set.
Accordingly, the operators $\boxtimes$, $\boxplus$, and $\otimes$ in Section~\ref{sec:DDSV} are algebraically exact in this sense.
\end{remark}

Our objective is to compute data-driven reachable sets for the system~\eqref{eq:model-linear} using input--state data, without explicit knowledge of the system model. In contrast to model-based approaches, the system matrices are assumed to be unknown and are only constrained by measured data and bounded disturbances.

Beyond estimating a single reachable set, we seek a reachability framework that explicitly accounts for model uncertainty and the availability of newly acquired data. In particular, the goal is to construct reachable sets that (i) rigorously enclose all trajectories consistent with the data and noise bounds, (ii) preserve nonconvex geometric features of the initial set $\mathcal{X}_{0}$, and (iii) can be iteratively refined as additional input--state data become available.

More specifically, we aim to develop a data-driven reachability method that enables exact set-based propagation of nonconvex reachable sets under a set of data-consistent system models
, while allowing both the model set and the corresponding reachable sets to be progressively tightened through online set refinement.

The proposed framework is applies to the following classes of systems:
\begin{enumerate}
    \item \emph{Linear time-invariant (LTI) systems} (Section~\ref{sec:DDSV}), described by
    \begin{equation} \label{eq:model-linear}
        x_{(k)} = \Phi_{\mathrm{tr}} x_{(k-1)} + \Gamma_{\mathrm{tr}} u_{(k-1)} + w_{(k)},
    \end{equation}
    where $\begin{bmatrix} \Phi_{\mathrm{tr}} & \Gamma_{\mathrm{tr}} \end{bmatrix}$ denotes the true, but unknown, system model.
    \item \emph{Polynomial systems} (Section~\ref{sec:reachnonlinear}), for which exact set-based propagation of reachable sets is achieved in both the model-based and data-driven settings.
\end{enumerate}

\subsection{Set Representations}
For our data-driven reachability analysis, we define the required set representations below. 

\begin{definition}
\label{def:conZonotope}
Given an offset $c\in\Rn$, generator $G\in\R^{n \times h}$, and constraints $A \in \R^{n_c \times h}$ and $b \in \R^{n_c}$, a \textit{constrained zonotope} (CZ) is a compact set in $\Rn$ defined as (see \cite{scott2016constrained})
\begin{equation}
    \label{eq:con-zon}
    \mathcal{C}{=}\bigg\{c+ \sum_{k=1}^{h} \alpha_{(k)} G_{(\cdot,k)} \biggm|  \sum_{k=1}^h \alpha_{(k)} A_{(\cdot,k)}=b,  \alpha_{(k)}\in[\shortminus 1,1] \bigg \}.
\end{equation}
Furthermore, for reasons that will become apparent later, we associate an identifier $\id \in  \N^{1 \times h }$ with $\mathcal{C}$ for identifying the factors $\alpha_{(1)},\dots,\alpha_{(h)}$ in \eqref{eq:con-zon}.
To describe a CZ, we use a shorthand notation $\mathcal{C} = \zono{c,G,A,b,\id}_\text{CZ}$. 
\hfill $\lrcorner$
\end{definition}

Note that zonotopes are a special case of CZs, where constraints are empty \cite{kochdumper2023constrained}, i.e., $A=[~]$ and $B=[~]$.

\begin{definition}
Given an offset $c\in\Rn$, generator $G\in\R^{n \times h}$, exponent $E \in \N_0^{p \times h}$, and constraints $A \in \R^{n_c \times q}$, $b \in \R^{n_c}$, and $\CE \in \N_0^{p \times q}$, a \textit{constrained polynomial zonotope} (CPZ) is a compact set in $\Rn$ defined as (see \cite{kochdumper2023constrained})
\begin{multline}
    \label{eq:con-poly-zono}
    \mathcal{P} = \bigg \{ c + \sum_{i=1}^{h} \bigg( \prod_{k=1}^p \alpha_{(k)}^{E_{(k,i)}} \bigg) G_{(\cdot,i)} ~ \bigg | \\ 
    \sum_{i=1}^{q} \bigg( \prod_{k=1}^p \alpha_{(k)}^{\CE_{(k,i)}} \bigg) A_{(\cdot,i)} = b, \alpha_{(k)} \in [\shortminus 1,1]   \bigg \}.
\end{multline}

Similar to CZs, we associate an identifier $\id \in  \N^{1 \times p }$ with $\mathcal{P}$ for identifying the factors $\alpha_{(1)},\dots,\alpha_{(p)}$ in \eqref{eq:con-poly-zono}. 
We denote a CPZ as $\mathcal{P} = \zono{c,G,E,A,b,R,\id}_\text{CPZ}$.
\hfill $\lrcorner$
\end{definition}

\begin{example}
To understand the role of $\id$, consider
\begin{equation}
\label{eq:barp1}
    \mathcal{P}_1=\Biggl\langle    
    \begingroup
    \setlength\arraycolsep{1pt}
    \begin{bmatrix} 0 \\ 2 \\ 1 \end{bmatrix},
    \begin{bmatrix} 0 & 1 \\ 3 & 2 \\ 1 & 5 \end{bmatrix},
    \begin{bmatrix} 4 & 1 \\ 0 & 2 \end{bmatrix},
    \begin{bmatrix} 1 & 2 \\ 0 & 0 \\ 3 & 4 \end{bmatrix},
    \begin{bmatrix} 2 \\ 0 \\ 2 \end{bmatrix},
    \begin{bmatrix} 4 & 2 \\ 0 & 2 \end{bmatrix},
    \begin{bmatrix} 1 & 2 \end{bmatrix}
    \endgroup
    \Biggr\rangle_\text{CPZ}
\end{equation}
which describes the following CPZ
\begin{multline}
    \mathcal{P}_1 = \Bigg\{ 
    \begin{bmatrix} 0 \\ 2 \\ 1 \end{bmatrix} 
    +\begin{bmatrix} 0 \\ 3 \\ 1 \end{bmatrix} 
    \alpha_{(1)}^4 
    +\begin{bmatrix} 1 \\ 2 \\ 5 \end{bmatrix} 
    \alpha_{(1)} \alpha_{(2)}^2  \Biggm| \\ 
    \begin{bmatrix} 1 \\ 0 \\ 3 \end{bmatrix}
    \alpha_{(1)}^4
    +\begin{bmatrix} 2 \\ 0 \\ 4 \end{bmatrix}
    \alpha_{(1)}^2\alpha_{(2)}^2
    =\begin{bmatrix} 2 \\ 0 \\ 2 \end{bmatrix},
    \alpha_{(1)},\alpha_{(2)} \in [\shortminus 1, 1]\Bigg\}.
\end{multline}
Here, $\id = \begin{bmatrix} 1 & 2\end{bmatrix}$ identifies the dependent factor $\alpha_{(1)}$ with $\id_{(1)}=1$ and $\alpha_{(2)}$ with $\id_{(2)}=2$.
\hfill $\lrcorner$
\end{example}

\begin{definition} \label{def:conmatzonotopes}  
Given an offset $C \in \R^{m \times n}$, generators $G=\begin{bmatrix} G_{(1)}\dots G_{(\gamma)}\end{bmatrix} \in \R^{m \times (n\gamma)}$, constraints $A=\begin{bmatrix} A_{(1)} \dots A_{(\gamma)}\end{bmatrix} \in \R^{n_c \times (n_a\gamma)}$, and $B \in \R^{n_c \times n_a}$, a \textit{constrained matrix zonotope} (CMZ) is a compact set in $\R^{m\times n}$ defined as (see \cite{Alanwar2023Datadriven})
\begin{equation}
    \label{eq:con-mat-zono}
    \mathcal{N} =\Big\{ C + \sum_{k=1}^{\gamma} \alpha_{(k)} G_{(k)} \Bigm|
    \sum_{k=1}^{\gamma} \alpha_{(k)} A_{(k)} = B,  \alpha_{(k)} \in [\shortminus 1,1]  \Big\} .
\end{equation}
We associate an identifier $\id \in  \N^{1 \times p }$ with $\mathcal{N}$ for identifying the factors $\alpha_{(1)},\dots,\alpha_{(p)}$.
Denote $\mathcal{N} = \zono{C,G,A,B,\id}_\text{CMZ}$. 
\hfill $\lrcorner$
\end{definition}

Note that matrix zonotopes are a special case of CMZs, denoted by $\mathcal{N}=\zono{C,G,[~],[~],\id}_\text{CMZ}$. 

\begin{definition}
\label{def:conmatpolyzonotopes}  
Given an offset $C\in \R^{m \times n}$, generators $G=\begin{bmatrix} G_{(1)} \dots G_{(\gamma)} \end{bmatrix} \in \R^{m \times (n  \gamma)}$, exponent $E \in \N_0^{p \times \gamma}$, constraints $A=\begin{bmatrix} A_{(1)} \dots A_{(\gamma)} \end{bmatrix} \in \R^{n_c \times (n_a \gamma)}$, $B \in \R^{n_c \times n_a}$, and $\CE \in \N_0^{p \times \gamma}$, a \textit{constrained polynomial matrix zonotope} (CPMZ) is 
\begin{multline}
    \label{eq:con-poly-mat-zono}
    \mathcal{Y} = \Big\{ C + \sum_{i=1}^{\gamma} \bigg( \prod_{k=1}^p \alpha_{(k)}^{E_{(k,i)}} \bigg) \, G_{(i)} \Bigm| \\
    \sum_{i=1}^{\gamma} \bigg( \prod_{k=1}^p \alpha_{(k)}^{\CE_{(k,i)}} \bigg) A_{(i)} = B \,,   \alpha_{(k)} \in [\shortminus 1,1] \Big\} \; .
\end{multline}
We associate an identifier $\id \in \N^{1\times p}$ with $\mathcal{Y}$ for identifying the factors $\alpha_{(1)},\dots,\alpha_{(p)}$.
Furthermore, we use the shorthand notation $\mathcal{Y} = \zono{C,G,E,A,B,R,\id}_\text{CPMZ}$. 
\hfill $\lrcorner$
\end{definition}

To perform operations between two CPZs, the \operator{mergeID} operator is needed to make them compatible with each other.

\begin{proposition}[\operator{mergeID} \cite{kochdumper2020sparse}] \label{prop:mergeID}  
Given two CPZs
\begin{align*}
\mathcal{P}_1 &= \zono{c_{1}, G_{1}, E_{1}, A_{1}, b_{1}, R_{1},  \id_{1}}_\text{CPZ} \\
\mathcal{P}_2 &= \zono{c_{2}, G_{2}, E_{2}, A_{2}, b_{2}, R_{2}, \id_{2}}_\text{CPZ}
\end{align*}
the \operator{mergeID} operator returns two adjusted CPZs that are equivalent to $\mathcal{P}_1$ and $\mathcal{P}_2$:
\begin{align}
      \operator{mergeID}&(\mathcal{P}_1,\mathcal{P}_2) =& \nonumber \\
       & \qquad\big \{ \underbrace{\langle C_{{1}}, G_{{1}}, \overline{E}_{1}, A_{{1}}, B_{{1}}, \overline{R}_{1}, \overline{\id}\rangle_{\text{CPZ}}}_{\bar{_1}}, \nonumber \\
  &\qquad\underbrace{\langle c_{2}, G_{2}, \overline{E}_{2}, A_{2}, b_{2}, \overline{R}_{2},\overline{\id}\rangle_{\text{CPZ}}}_{\bar{\mathcal{P}}_2} \big \}.  
    \end{align}
with $\overline{\id}= \begin{bmatrix} \id_{1} & \id_2^{(\cdot,\mathcal{H})} \end{bmatrix}$, $\mathcal{H} = \left\{ i~ |~ \id_{2}^{(i)} \not\in \id_{1} \right\}$, and
\begin{align*}
    \overline{E}_{1} &= \begin{bmatrix} E_{1} \\ 0_{|\mathcal{H}|\times h_1} \end{bmatrix} \in \R^{a \times h_{1}}, \qquad
    \overline{R}_{1} = \begin{bmatrix} R_{1} \\ 0_{|\mathcal{H}|\times q_{1}} \end{bmatrix} \in \R^{a \times q_{1}} \\
    \overline{E}_{2}^{(i,\cdot)} &= \begin{cases} E_{2}^{(j,\cdot)}, & \mathrm{if} ~ \exists j~\overline{\id}_{(i)} = \id_{2}^{(j)} \\ 
    0_{1\times h_{2}}, & \mathrm{otherwise} 
    \end{cases}\\
    \overline{R}_{2}^{(i,\cdot)} &= \begin{cases} R_{2}^{(j,\cdot)}, & \mathrm{if} ~ \exists j~\overline{\id}_{(i)} = \id_{2}^{(j)} \\ 
    0_{1\times q_{2}}, & \mathrm{otherwise}
    \end{cases}
\end{align*}
where $i = 1, \dots, a$ with $a = |\mathcal{H}|+p_{1}$ for $\id_{1} \in  \N^{1 \times p_{1} }$.
\hfill $\lrcorner$
\end{proposition}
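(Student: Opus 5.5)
To establish the \operator{mergeID} proposition, I will show that each adjusted CPZ $\bar{\mathcal{P}}_j$ describes exactly the same point set as $\mathcal{P}_j$, for $j=1,2$. The argument works directly from the definition \eqref{eq:con-poly-zono}. The only changes \operator{mergeID} makes are relabeling and padding the factor list and inserting zero rows into the exponent matrices $E_j$ and $R_j$. Offsets, generators and constraint data are untouched. So equivalence reduces to checking that every monomial is unchanged as a function of the underlying factors.

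For $\bar{\mathcal{P}}_1$, the first $p_1$ entries of $\overline{\id}$ coincide with $\id_1$. The new factors $\alpha_{(p_1+1)},\dots,\alpha_{(a)}$ carry exponent rows equal to zero in both $\overline{E}_1$ and $\overline{R}_1$. Hence for every column $i$,
\[
\prod_{k=1}^{a}\alpha_{(k)}^{\overline{E}_1^{(k,i)}}=\prod_{k=1}^{p_1}\alpha_{(k)}^{E_1^{(k,i)}},
\]
and the same identity holds for the constraint monomials. The extra factors therefore enter neither the point expression nor the constraint. Their ranges $[\shortminus 1,1]$ are nonempty, so projecting them out changes nothing. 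Concretely, any feasible $(\alpha_{(1)},\dots,\alpha_{(p_1)})$ extends by, say, zeros to a feasible $a$-tuple, and any feasible $a$-tuple restricts to a feasible $p_1$-tuple producing the same point. Both inclusions follow.

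For $\bar{\mathcal{P}}_2$, I will define the injective map $\sigma:\{1,\dots,p_2\}\to\{1,\dots,a\}$ by $\overline{\id}_{(\sigma(j))}=\id_2^{(j)}$. It is well defined because $\overline{\id}$ contains every entry of $\id_2$: shared entries appear in the $\id_1$ part, and the others are placed at positions $\mathcal{H}$. It is injective because identifiers within $\overline{\id}$ are distinct, given that $\id_1$ and $\id_2$ each have distinct entries. By construction, row $\sigma(j)$ of $\overline{E}_2$ (respectively $\overline{R}_2$) equals row $j$ of $E_2$ (respectively $R_2$), and all rows outside the image of $\sigma$ are zero. Substituting $\beta_{(j)}=\alpha_{(\sigma(j))}$ then gives
\[
\prod_{k=1}^{a}\alpha_{(k)}^{\overline{E}_2^{(k,i)}}=\prod_{j=1}^{p_2}\beta_{(j)}^{E_2^{(j,i)}}.
\]
With this identity, the same extension/restriction argument as for $\bar{\mathcal{P}}_1$ yields $\bar{\mathcal{P}}_2=\mathcal{P}_2$.

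The only delicate step is the well-definedness of $\sigma$, which relies on the implicit assumption that identifiers within each CPZ are distinct. The real purpose of \operator{mergeID}, namely that shared identifiers now refer to the same factor position in both sets, is then immediate from the construction of $\overline{\id}$. That is what makes later joint operations such as exact products meaningful.
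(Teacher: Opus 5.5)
Your proof is correct. The paper gives no proof of its own and defers to \cite{kochdumper2020sparse}; your argument is the standard justification for this operator, written out more explicitly: the appended zero rows of $\overline{E}_j,\overline{R}_j$ are inert because $\alpha^0=1$, and the remaining change is a re-indexing of the factors, which your injection $\sigma$ together with the extension/restriction of feasible factor tuples makes precise.
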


\begin{example}
    Consider a CPZ
\begin{displaymath}
    {\mathcal{P}}_2 = \Biggl \langle
    \begingroup
    \setlength\arraycolsep{2pt}
    \begin{bmatrix}
    3\\3\\4
    \end{bmatrix},\begin{bmatrix}
    2& 2\\3 & 0\\1 & 4
    \end{bmatrix},\begin{bmatrix}
    3 & 2\\3 & 0
    \end{bmatrix},\begin{bmatrix}
    1& 3\\2 & 4
    \end{bmatrix},\begin{bmatrix}
    2\\ 5
    \end{bmatrix}, 
    \begin{bmatrix}
    2 & 0\\2 & 3
    \end{bmatrix},\begin{bmatrix}
    1  & 3
    \end{bmatrix}
    \endgroup
    \Biggr \rangle_\text{CPZ}
\end{displaymath}
describing the following set
\begin{multline}
 {\mathcal{P}}_2 = \Bigg\{ \begin{bmatrix}
 3\\3\\4
\end{bmatrix} + \begin{bmatrix}
 2\\3\\1
\end{bmatrix} \alpha_{(1)}^3\alpha_{(2)}^3 + \begin{bmatrix}
 2\\0\\4
\end{bmatrix} \alpha_{(1)}^2  \Biggm| \\
\alpha_{(1)}^2\alpha_{(2)}^2\begin{bmatrix}
 1\\2
\end{bmatrix} 
+ \alpha_{(2)}^3\begin{bmatrix}
 3\\4
\end{bmatrix}=\begin{bmatrix}
 2\\5
\end{bmatrix}, \alpha_{(1)},\alpha_{(2)} \in[-1,1]\Bigg\}
\end{multline}
where $\id = \begin{bmatrix} 1 & 3\end{bmatrix}$ identifies the dependent factor $\alpha_{(1)}$ with $\id_{(1)}=1$ and $\alpha_{(2)}$ with $\id_{(2)}=3$. If we apply the operator \operator{mergeID}(${\mathcal{P}}_1, {\mathcal{P}}_2$) where ${\mathcal{P}}_1$ is defined in \eqref{eq:barp1}, we get the following sets with common identifiers.
\begin{align*}
    \bar{\mathcal{P}}_1 &= \Biggl \langle
    \begingroup
    \setlength\arraycolsep{2pt}
    \begin{bmatrix}
    0\\2\\1
    \end{bmatrix},\begin{bmatrix}
    0 & 1\\3 & 2\\1 & 5
    \end{bmatrix},\begin{bmatrix}
    4 & 1\\0 & 2\\0 & 0 
    \end{bmatrix},\begin{bmatrix}
    1 & 2\\0 & 0\\3 & 4
    \end{bmatrix},\begin{bmatrix}
    2\\0\\2 
    \end{bmatrix},\begin{bmatrix}
    4 & 2\\0 & 2\\0 & 0 
    \end{bmatrix},\begin{bmatrix}
    1 &2 & 3
    \end{bmatrix} 
    \endgroup
    \Biggr \rangle \nonumber
\\
    \bar{\mathcal{P}}_2 &= \Biggl \langle
    \begingroup
    \setlength\arraycolsep{2pt}
    \begin{bmatrix}
    3\\3\\4
    \end{bmatrix},\begin{bmatrix}
    2& 2\\3 & 0\\1 & 4
    \end{bmatrix},\begin{bmatrix}
    3 & 2\\0 & 0\\ 3& 0
    \end{bmatrix},\begin{bmatrix}
    1 & 3\\2 & 4
    \end{bmatrix},\begin{bmatrix}
    2\\5
    \end{bmatrix},\begin{bmatrix}
    2 & 0\\0 & 0\\2 & 3 
    \end{bmatrix},\begin{bmatrix}
    1 &2 &3
    \end{bmatrix}  
    \endgroup
    \Biggr \rangle.
\end{align*}
\hfill $\lrcorner$
\end{example}

\subsection{Assumptions}
We assume access to $n_T \in \N$ input-state trajectories of length $T_i + 1$, with inputs $\{ u_{(k)}\}^{T_i-1}_{k=0}$ and states $\{ x_{(k)}\}^{T_i}_{k=0}$ for $i\in\{0,\dots,n_T-1\}$. For simplicity, we consider the offline data from a single trajectory ($n_T = 1$) of length $T_0$ and organize the input and
noisy state data into matrices:
\begin{equation}
    \label{eq:offline-data}
    \begin{array}{c}
    X_0^+ = \begin{bmatrix}  x_{(1)} & x_{(2)} & \dots & x_{(T_0)}\end{bmatrix},\\[2pt]
    X_0^- = \begin{bmatrix}  x_{(0)}& x_{(1)} & \dots & x_{(T_0-1)}\end{bmatrix}, \\[2pt]
    U_0^- = \begin{bmatrix} u_{(0)} &  u_{(1)}  & \dots & u_{(T_0-1)} \end{bmatrix}.
    \end{array}  
\end{equation} 
Let $D_0^-= 
\begingroup
\setlength\arraycolsep{2pt}
\begin{bmatrix} X_0^{-\top} & U_0^{-\top} \end{bmatrix}^\top
\endgroup$ and $D_0= 
\begingroup
\setlength\arraycolsep{2pt}
\begin{bmatrix} X_0^{+\top} & X_0^{-\top} &U_0^{-\top}\end{bmatrix}^\top
\endgroup$. 

\begin{assumption}\label{ass:zon-noise}
    For all $k\in \Z_{\geq 0}$, the noise $w_{(k)}$ is bounded by a known CPZ $\mathcal{Z}_w$, which includes the origin.
    \hfill $\lrcorner$
\end{assumption}
\begin{assumption}\label{ass:rank_D}
    The data matrix $D_0^-$ has full row rank, i.e., $\mathrm{rank}(D_0^-) = n_x + n_u$.
    \hfill $\lrcorner$
\end{assumption}

Let all system matrices $\begin{bmatrix} \Phi & \Gamma \end{bmatrix}$ of \eqref{eq:model-linear} consistent with the data $D_-$ be
\begin{equation}
    \mathcal{N}_{\Sigma} = \Big\{ 
    \begingroup
    \setlength\arraycolsep{3pt}
    \begin{bmatrix} \Phi & \Gamma \end{bmatrix}
    \endgroup
    \Bigm| X_0^+ = \Phi X_0^- + \Gamma U_0^- + W_0^-, W_0^- \in \mathcal{M}_w \Big\}. \label{eq:Nsig}
\end{equation}
By definition, $\begin{bmatrix} \Phi_\tr & \Gamma_\tr \end{bmatrix} \in \mathcal{N}_{\Sigma}$.

\section{Data-driven Reachability Analysis with Nonconvex Set Representations}\label{sec:DDSV}

Given system \eqref{eq:model-linear} with unknown matrices $\Phi_\tr$ and $\Gamma_\tr$, the presence of noise implies multiple feasible matrices $\begin{bmatrix} \Phi & \Gamma \end{bmatrix}$ are consistent with the observed data. 
For the reachability analysis to be reliable, one must incorporate all such data-consistent models.
To this end, we extend the offline approach proposed by \cite{Alanwar2023Datadriven} to continue refining the estimated set of models by leveraging online data.

Our proposed 
data-driven reachability analysis
algorithm 
comprises two steps: 
first, determining a set of models consistent with observed data given bounded uncertainties; and second, using this model set to compute the region in state space containing all possible true states.
Using offline data \eqref{eq:offline-data}, we compute an initial set of models represented by a CPMZ. Then, in the online phase, we refine this CPMZ iteratively by collecting online input-state trajectories.
For computing the reachable set as a CPZ in the online phase, we propose an exact multiplication method for the time update to multiply the CPMZ representing a set of models with the CPZ representing the previous reachable set.

\begin{figure}
    \centering
    \includegraphics[width=1\linewidth]{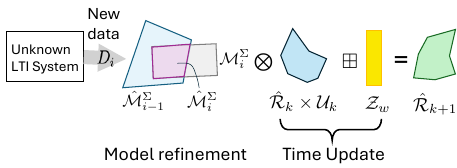}
    \caption{Data-driven reachability analysis with model refinement for LTI system. Legend: $\otimes$ denotes the exact multiplication operation and $\boxplus$ denotes the exact addition operation.}
    \label{fig:algo-cartoon}
\end{figure}
\subsection{Set of Models}


We represent the sequence of unknown noise as $\{w_{(k)}\}_{k=0}^{T_0}$. From Assumption~\ref{ass:zon-noise}, it follows that 
\[
W_0^- = \begin{bmatrix} w_{(0)} & \cdots & w_{(T_0-1)} \end{bmatrix} \in \mzon_w,
\]
where $\mathcal{M}_w =\zono{ \tilde C,\tilde{G}, \tilde  E,\tilde{A},\tilde B,\tilde R }$ is the CPMZ resulting from the concatenation of noise CPZ 
\begin{align} \mathcal{Z}_w=\zono{c_{w},\begin{bmatrix} g_{w}^{(1)}& \dots & g_{w}^{(\gamma_{w})}\end{bmatrix},E_{w},\nonumber\\\begin{bmatrix} a_{w}^{(1)}& \dots & a_{w}^{(\gamma_{w})}\end{bmatrix},B_{w},R_{w}}
\end{align}
as follows:

\begin{align*}
&\tilde{G}=\begin{bmatrix} G^{(1)}&\dots&G^{(\gamma_w)}\end{bmatrix},  \\&
\tilde{A}=\begin{bmatrix} A^{(1)}&\dots&A^{(\gamma_w)}\end{bmatrix},\\&
   \tilde  C = \begin{bmatrix}c_{w} & \dots & c_{w}\end{bmatrix}, \\&
    G^{(1+(i-1)T)} = \begin{bmatrix} g_{w}^{(i)} & 0_{n \times  (T-1)}\end{bmatrix}, \\&
    G^{(j+(i-1)T)} = \begin{bmatrix} 0_{n \times  (j-1)} &g_{w}^{(i)}  & 0_{n \times  (T-j)}\end{bmatrix}, \\&
    G^{(T+(i-1)T)} = \begin{bmatrix} 0_{n \times (T-1)}& g_{w}^{(i)}\end{bmatrix},\\&
    A^{(1+(i-1)T)} = \begin{bmatrix} a_{w}^{(i)} & 0_{n \times  (T-1)}\end{bmatrix}, \\&
    A^{(j+(i-1)T)} = \begin{bmatrix} 0_{n \times  (j-1)} &a_{w}^{(i)}  & 0_{n \times  (T-j)}\end{bmatrix}, \\&
    A^{(T+(i-1)T)} = \begin{bmatrix} 0_{n \times (T-1)}& a_{w}^{(i)}\end{bmatrix},\\&
    \tilde E = E_{w},\\&
    \tilde R = R_{w},
\end{align*}
$\forall i =\{1, \dots, \gamma_{w}\}$, $j=\{2,\dots,T-1\}$.

Using the offline data \eqref{eq:offline-data}, we compute the set of models $\mathcal{M}_0^{\Sigma}$ that is consistent with the data as follows. 

\begin{lemma}
\label{lm:sigmaM1}
Consider the offline input-state data \eqref{eq:offline-data} such that Assumption~\ref{ass:rank_D} holds and consider the CPMZ
\begin{align}
    \mathcal{M}_0^\Sigma = (X_0^+ - \mathcal{M}_w) \begin{bmatrix} 
    X_0^- \\ U_0^- 
    \end{bmatrix}^{\dagger}.
    \label{eq:zonoAB1}
\end{align} 
Then, $\begin{bmatrix} \Phi_\tr & \Gamma_\tr \end{bmatrix} \in \mathcal{M}_0^\Sigma$ and $\mathcal{N}_\Sigma \subseteq \mathcal{M}_0^\Sigma$. 
\hfill $\lrcorner$
\end{lemma}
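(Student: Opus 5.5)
The plan is to show directly that every data-consistent model lies in the set on the right-hand side of \eqref{eq:zonoAB1}; the claim $\begin{bmatrix} \Phi_\tr & \Gamma_\tr \end{bmatrix} \in \mathcal{M}_0^\Sigma$ then follows from $\begin{bmatrix} \Phi_\tr & \Gamma_\tr \end{bmatrix} \in \mathcal{N}_\Sigma$, which was noted right after \eqref{eq:Nsig}. So I would prove $\mathcal{N}_\Sigma \subseteq \mathcal{M}_0^\Sigma$ and obtain the membership of the true model as a corollary.

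Take an arbitrary $\begin{bmatrix} \Phi & \Gamma \end{bmatrix} \in \mathcal{N}_\Sigma$. By \eqref{eq:Nsig} there exists $W_0^- \in \mathcal{M}_w$ with $X_0^+ - W_0^- = \begin{bmatrix} \Phi & \Gamma \end{bmatrix} D_0^-$. Assumption~\ref{ass:rank_D} says $D_0^-$ has full row rank, so $D_0^- (D_0^-)^\dagger = I_{n_x+n_u}$. Right-multiplying both sides by $(D_0^-)^\dagger$ gives
\begin{equation*}
\begin{bmatrix} \Phi & \Gamma \end{bmatrix} = (X_0^+ - W_0^-)\,(D_0^-)^\dagger .
\end{equation*}
Since $W_0^- \in \mathcal{M}_w$, the matrix $(X_0^+ - W_0^-)(D_0^-)^\dagger$ is an element of $(X_0^+ - \mathcal{M}_w)(D_0^-)^\dagger$, understood pointwise. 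This yields the inclusion.

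The remaining step is to check that this pointwise set coincides with the CPMZ obtained by applying the affine map $M \mapsto (X_0^+ - M)(D_0^-)^\dagger$ to the CPMZ parameters. Write $\mathcal{M}_w$ as in Definition~\ref{def:conmatpolyzonotopes}, with offset $\tilde C$, generators $\tilde G^{(i)}$, and exponents $\tilde E$. Then every element has the form $(X_0^+ - \tilde C)(D_0^-)^\dagger - \sum_i (\prod_k \alpha_{(k)}^{\tilde E_{(k,i)}})\, \tilde G^{(i)} (D_0^-)^\dagger$, and the factors $\alpha$ are subject to the unchanged constraints $\tilde A,\tilde B,\tilde R$. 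This is exactly a CPMZ with offset $(X_0^+ - \tilde C)(D_0^-)^\dagger$, generators $-\tilde G^{(i)}(D_0^-)^\dagger$, and the same $\tilde E,\tilde A,\tilde B,\tilde R,\id$.

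I expect this representation step to be the main subtlety, because it is what makes ``the CPMZ $\mathcal{M}_0^\Sigma$'' well defined. The point to verify is that the map is affine in the generator coefficients while the constraint equations involve only the factors $\alpha$, so the constraints carry over unchanged and no over-approximation is introduced. The rest of the proof is routine.
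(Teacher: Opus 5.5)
Your proposal is correct and matches the paper's route. The paper defers to Lemma~1 of \cite{Alanwar2023Datadriven}, and that proof is exactly your argument: write $X_0^+ - W_0^- = \begin{bmatrix}\Phi & \Gamma\end{bmatrix} D_0^-$ for some $W_0^- \in \mathcal{M}_w$, then right-multiply by $(D_0^-)^\dagger$ using $D_0^-(D_0^-)^\dagger = I$ from Assumption~\ref{ass:rank_D}. Your extra check that the affine image $M \mapsto (X_0^+ - M)(D_0^-)^\dagger$ of a CPMZ is again a CPMZ, with unchanged exponents and constraints, is correct and is a detail the paper leaves implicit.
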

\begin{proof}
The result follows directly from Lemma 1 in \cite{Alanwar2023Datadriven}.
\end{proof}

\begin{remark}
Requiring the full row rank in Assumption~\ref{ass:rank_D}, i.e., $\begin{bmatrix} X^{-\top} & U^{-\top}\end{bmatrix}^{\top}=n_x+n_u$, implies that there exists a right-inverse of the matrix $\left[\begin{array}{ll}
X^{-\top} & U^{-\top}
\end{array}\right]^{\top}$.
\hfill $\lrcorner$
\end{remark}

\subsection{Intersection Between Sets of models}

For the following result, we omit the identifier $\id$ of CPMZs as it is irrelevant when intersecting two sets.

\begin{proposition}[Intersection of CPMZs] 
Consider two CPMZs 
\begin{align*}
    \mathcal{Y}_1 &= \zono{C_{1},G_{1},E_{1},A_{1},B_{1},R_{1}}_\text{CPMZ}\subset \R^{n_x \times n},\\ 
    \mathcal{Y}_2 &= \zono{C_{2},G_{2},E_{2},A_{2},B_{2},R_{2}}_\text{CPMZ}\subset\R^{n_x \times n}.
\end{align*}
Then, their intersection is given by
\begin{equation}
\mathcal{Y}_1 \cap \mathcal{Y}_2=  \bigzono{ C_{1},G_{1},\hat{E},\hat{A},\hat{B},\hat{R} }_\text{CPMZ} 
\label{generalintersection}
\end{equation}
where 
\begin{align}
&\hat{E}\!=\!\left[\begin{array}{cc} E_{1}\\0_{ p_2\times \gamma_1}\end{array}\right],\\
&\hat{A}\!=\!\left[\arraycolsep=2pt\begin{array}{cccc}
\hat{A}_{1} & 0_{n_{c_1}n_{a_1} \times \gamma_2}& 0_{n_{c_1}n_{a_1} \times \gamma_1}& 0_{n_{c_1}n_{a_1} \times \gamma_2}\\
0_{n_{c_2}n_{a_2} \times \gamma_1}&\hat{A}_{2}& 0_{n_{c_2}n_{a_2} \times \gamma_1}& 0_{n_{c_2}n_{a_2} \times \gamma_2}\\ 0_{n_xn \times \gamma_1}& 0_{n_xn \times \gamma_2}&
\hat{G}_{1}&-\hat{G}_{2}
\end{array}\right], \\ &\hat{B}\!=\!\left[\begin{array}{c}
\text{vec}\big({B}_{1}\big) \\
\text{vec}\big({B}_{2}\big) \\
\text{vec}\big({C}_{2}-{C}_{1}\big)
\end{array}\right]\\ &\hat{R}\!=\!\left[\begin{array}{cccc}
R_{1} & 0_{p_1 \times \gamma_2}& E_{1}& 0_{p_1 \times \gamma_2}\\
0_{p_2 \times \gamma_1}&R_{2}& 0_{p_2 \times \gamma_1}& E_{2}
\end{array}\right]
\end{align}
with $\hat{A}_{i}= [\text{vec}({A}_{i}^{(1)}),\dots,\text{vec}({A}_{i}^{({\gamma_i})})]$ and
$\hat{G}_{i}= [\text{vec}({G}_{i}^{(1)}),\dots,\text{vec}({G}_{i}^{({\gamma_i})})]$, for $i=1,2$.
\hfill $\lrcorner$
\label{Intersection}
\end{proposition}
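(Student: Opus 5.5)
We prove the set equality by double inclusion. The key idea is that the right-hand side of \eqref{generalintersection} uses the stacked factor vector $\alpha=[\alpha_1^\top\ \alpha_2^\top]^\top\in[-1,1]^{p_1+p_2}$. Here $\alpha_1$ are the factors of $\mathcal{Y}_1$ and $\alpha_2$ those of $\mathcal{Y}_2$, treated as independent (which is why identifiers are dropped). With this convention the padded exponent $\hat E$ makes the generator part depend only on $\alpha_1$: the resulting matrix is $C_1+\sum_{i=1}^{\gamma_1}\big(\prod_k\alpha_{1,(k)}^{E_1^{(k,i)}}\big)G_1^{(i)}$, i.e.\ a point of $\mathcal{Y}_1$ parametrized by $\alpha_1$. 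The task therefore reduces to showing that the three blocks of constraints in $(\hat A,\hat B,\hat R)$ encode exactly three conditions: the constraints of $\mathcal{Y}_1$ on $\alpha_1$, the constraints of $\mathcal{Y}_2$ on $\alpha_2$, and equality of the two parametrized points.

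First, we rewrite each matrix constraint in vectorized form. Since $\mathrm{vec}$ is linear, $\sum_i m_i(\alpha_1)A_1^{(i)}=B_1$, with $m_i(\alpha_1)=\prod_k\alpha_{1,(k)}^{R_1^{(k,i)}}$, is equivalent to $\hat A_1\,[m_1(\alpha_1),\dots,m_{\gamma_1}(\alpha_1)]^\top=\mathrm{vec}(B_1)$. The same holds for $\mathcal{Y}_2$. Next, we read the constraint columns of $\hat A$ together with the matching columns of $\hat R$ block by block:
\begin{itemize}
\item The first $\gamma_1$ columns carry exponents $[R_1;0]$, so they involve only $\alpha_1$ and reproduce the $\mathcal{Y}_1$ constraints in the first block row.
\item The next $\gamma_2$ columns carry exponents $[0;R_2]$ and reproduce the $\mathcal{Y}_2$ constraints in the second block row.
\item The last $\gamma_1+\gamma_2$ columns carry exponents $[E_1;0]$ and $[0;E_2]$. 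Their monomials are therefore exactly the generator monomials of $\mathcal{Y}_1$ (in $\alpha_1$) and of $\mathcal{Y}_2$ (in $\alpha_2$).
\end{itemize}
In the third block row these columns multiply $\hat G_1$ and $-\hat G_2$, with right-hand side $\mathrm{vec}(C_2-C_1)$. This row is thus the vectorization of $C_1+\sum_i(\cdot)G_1^{(i)}=C_2+\sum_j(\cdot)G_2^{(j)}$.

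With this reading, the inclusion $\supseteq$ follows directly. Given any feasible $\alpha=(\alpha_1,\alpha_2)$ of the right-hand side, the point it produces lies in $\mathcal{Y}_1$ via $\alpha_1$ by the first block, and it equals a point of $\mathcal{Y}_2$ produced by $\alpha_2$ by the second and third blocks. The inclusion $\subseteq$ is the converse. Given $Y\in\mathcal{Y}_1\cap\mathcal{Y}_2$, we pick witnesses $\alpha_1$ for $\mathcal{Y}_1$ and $\alpha_2$ for $\mathcal{Y}_2$. Stacking them satisfies all three constraint blocks, and the generator formula returns $Y$.

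The main obstacle is purely bookkeeping: making the constraint structure of a CPMZ with vectorized (rather than matrix) constraints fit the definition, and checking dimensional consistency. The intersection has $n_a=1$, with $n_c=n_{c_1}n_{a_1}+n_{c_2}n_{a_2}+n_xn$. The column count of $\hat A$ must match that of $\hat R$, namely $2\gamma_1+2\gamma_2$. The zero-block sizes in $\hat A$ are presumably meant per column, so I would state the convention for the vectorized constraint matrices explicitly. I would also verify that the exponent rows align with the factor ordering $(\alpha_1,\alpha_2)$.
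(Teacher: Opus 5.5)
Your proof is correct and follows essentially the same route as the paper: double inclusion with the stacked, independent factor vector $(\alpha_1,\alpha_2)$, vectorizing the matrix constraints, and reading the three block rows of $(\hat A,\hat B,\hat R)$ as the constraints of $\mathcal{Y}_1$, the constraints of $\mathcal{Y}_2$, and the equality of the two parametrized points. Your bookkeeping remark is also fair: the result has $n_a=1$ and $2\gamma_1+2\gamma_2$ constraint terms but only $\gamma_1$ generators, which the paper's CPMZ definition (tying the number of constraint terms to $\gamma$) does not strictly accommodate.
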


\begin{proof}
Let $\hat{\mathcal{Y}}$ denote the right-hand side of \eqref{generalintersection} and choose $\mathbf{Y}\in \mathcal{Y}_1 \cap \mathcal{Y}_2$. Then
\begin{align}
 \exists {\alpha}_1^{([1:p_1])}:\sum_{i=1}^{\gamma_1} \bigg( \prod_{k=1}^{p_1} \alpha_{(k)}^{\CE_{1}^{(k,i)}} \bigg) A_{1}^{(i)} = B_{1} \label{intero1}\\
 \exists {\alpha}_2^{([1:p_2])}:\sum_{i=1}^{\gamma_2} \bigg( \prod_{k=1}^{p_2} \alpha_{(k)}^{\CE_{2}^{(k,i)}} \bigg) A_{2}^{(i)} = B_{2} \label{intero2}
\end{align}
such that
 \begin{align}
\underbrace{C_{1} + \sum_{i=1}^{\gamma_1} \bigg( \prod_{k=1}^{p_1} \alpha_{(k)}^{E_{1}^{(k,i)}} \bigg)  G_{1}^{(i)}}_{\mathbf{Y} \in \mathcal{Y}_1} = \underbrace{C_{2} + \sum_{i=1}^{\gamma_2} \bigg( \prod_{k=1}^{p_2} \alpha_{(k)}^{E_{2}^{(k,i)}} \bigg)  G_{{1}}^{(i)}}_{\mathbf{Y} \in \mathcal{Y}_2} \label{intero3}
\end{align}
Let $\hat \alpha=(\alpha_1,\alpha_2)$. To account for the potential dimensional mismatch in $B_1$ and $B_2$, there exists $ {\hat\alpha}_{([1:\hat p])}$ satisfies:
\begin{align}  
 &\sum_{i=1}^{\gamma_1} \bigg( \prod_{k=1}^{p_1} \hat{\alpha}_{(k)}^{\hat\CE^{(k,i)}} \bigg) \text{vec}\big(A_{1}^{(i)}\big)= \text{vec}\big(B_{1}\big), \nonumber\\ & \sum_{i=\gamma_1+1}^{\gamma_1+\gamma_2} \bigg( \prod_{k=p_1+1}^{p_1+p_2} \hat{\alpha}_{(k)}^{\hat\CE^{(k,i)}} \bigg) \text{vec}\big(A_{2}^{(i)}\big)= \text{vec}\big(B_{2}),\nonumber\\
 &\sum_{i=1+\gamma_1+\gamma_2}^{\gamma_1+\gamma_2+\gamma_1} \bigg( \prod_{k=1}^{p_1} \hat{\alpha}_{(k)}^{\hat R^{(k,i)}} \bigg) \text{vec}\big(G_{1}^{(i)}\big)-\nonumber\\&~~~~~~~~~~~~~~~~~
          \sum_{i=1+2\gamma_1+\gamma_2}^{2\gamma_1+2\gamma_2} \bigg( \prod_{k=p_1+1}^{p_1+p_2} \alpha_{(k)}^{\hat R^{(k,i)}} \bigg)  \text{vec}\big(G_{2}^{(i)})\nonumber\\&~~~~~~~~~~~~~~~~~~~~~~~~~~~~~~~~~~~~= \text{vec}(C_{2} - C_{1})
         \label{veccon1}
\end{align}
and
\begin{align}
\mathbf{Y} =\! C_{1} + \sum_{i=1}^{\gamma_1} \bigg( \prod_{k=1}^{p_1+p_2} \hat{\alpha}_{(k)}^{\hat{E}^{(k,i)}} \bigg) G_{1}^{(i)}, \label{intersectout}
\end{align}
Thus, $\mathbf{Y} \in  \hat{\mathcal{Y}}$ and $\mathcal{Y}_1 \cap \mathcal{Y}_2 \subset \hat{\mathcal{Y}}$. 

Conversely, let $\mathbf{Y} \in  \hat{\mathcal{Y}}$. Then there exists $ {\hat\alpha}_{([1:\hat p])}$ such that
\begin{align}
\mathbf{Y} =\! C_{1} + \sum_{i=1}^{\gamma_1} \bigg( \prod_{k=1}^{\hat{p}} \hat{\alpha}_{(k)}^{\hat{E}^{(k,i)}} \bigg) G_{1}^{(i)}, \label{intersectoutend}
\end{align}
and \eqref{veccon1} holds. Partitioning $\hat \alpha$ as $\hat \alpha=(\alpha_1,\alpha_2)$, then there exists $ \hat{\alpha}_{([1:\hat p])}$ such that \eqref{intero1}, \eqref{intero2} and \eqref{intero3} holds.
These conditions imply that $\mathbf{Y} \in {\mathcal{Y}}_1$ and $\mathbf{Y} \in  {\mathcal{Y}}_2$. Thus, $\mathbf{Y} \in  {\mathcal{Y}}_1 \cap {\mathcal{Y}}_2$ and $\hat{\mathcal{Y}} \in  {\mathcal{Y}}_1 \cap {\mathcal{Y}}_2$.
\end{proof}

\Comp
The computations of $\text{vec}(B_{1})$, $\text{vec}(B_{2})$, and $\text{vec}(C_{2} - C_{1})$ require 
$\mathcal{O}(n_{c_1}n_{a_1})$, $\mathcal{O}(n_{c_2}n_{a_2})$, and $\mathcal{O}(2n_x n)$ operations, respectively.
The constructions of $\hat{A}_{1}$, $\hat{A}_{2}$, $\hat{G}_{1}$, and $\hat{G}_{2}$ involve complexities 
$\mathcal{O}(n_{c_1}n_{a_1}\gamma_1)$, $\mathcal{O}(n_{c_2}n_{a_2}\gamma_2)$, $\mathcal{O}(n_x n \gamma_1)$, and 
$\mathcal{O}(n_x n \gamma_2)$, respectively. Thus, the overall computational complexity of intersection \eqref{Intersection} is
$
\mathcal{O}\left(\gamma_1(n_{c_1}n_{a_1} + n_x n) + \gamma_2(n_{c_2}n_{a_2} + n_x n)\right).
$
\hfill $\diamond$

\begin{remark}
The intersection $\mathcal{Y}_1 \cap \mathcal{Y}_2 = \hat{\mathcal{Y}}$ in \eqref{generalintersection} yields constraint matrices $\hat{A} \in \R^{\hat{m} \times (\gamma_1 + \gamma_2)}$ and $\hat{B} \in \R^{\hat{m} \times 1}$, with $\hat{m} = n_{c_1}n_{a_1} + n_{c_2}n_{a_2} + n_x n$. This corresponds to a special case of CPMZs with $n_a = 1$, adopted to facilitate exact multiplications with CPZs in subsequent computations. When $\hat{m}/\hat{n}_c \in \N$, the vector $\hat{B}$ can be reshaped as a matrix $\hat{B} \in \R^{\hat{n}_c \times (\hat{m}/\hat{n}_c)}$ by the reshaping operator defined as follows:
\[
\operator{Convert}(\hat B\in \R^{\hat m \times 1},\hat n_c)=\hat{B}\in \R^{\hat n_{{c}} \times (\hat{m}/\hat{n}_c)}.
\]
Accordingly, the reshaped constraint matrix $\hat{A}$ lies in the space
$
\hat{A} \in \R^{\hat n_{{c}} \times \left( (\hat{m}/\hat{n}_c) \cdot (\gamma_1 + \gamma_2) \right)},
$
where
$
\hat{A}=\big[\operator{Convert}(\hat{A}^{(1)}), \dots,\operator{Convert}(\hat{A}^{(\gamma_1+\gamma_2)})\big].
$
\hfill $\lrcorner$
\end{remark}

\subsection{Iterative Refinement of Set of Models}

Similar to Lemma~\ref{lm:sigmaM1}, given any input-state data $D_i = (X_i^+, X_i^-, U_i^-)$ of system \eqref{eq:model-linear} over the interval $t \in [T_{i-1}, T_i]$, for $i\geq 1$, and that Assumption~\ref{ass:rank_D} holds, the resulting CPMZ $\mathcal{M}_i^\Sigma$ contains all matrices $\begin{bmatrix}\Phi & \Gamma\end{bmatrix}$ consistent with the data and noise bound, i.e., $\mathcal{N}_\Sigma \subseteq \mathcal{M}_i^\Sigma$.
At iteration~$i$ of the refinement phase ($i \geq 1$), we intersect the set of models 
${\mathcal{M}}_{i}^{\Sigma}$ derived from newly obtained input-state trajectories with the previous model set $\hat{\mathcal{M}}_{i-1}^{\Sigma}$ to get refined set $\hat{\mathcal{M}}_{i}^{\Sigma}$. The initial set of models and refined sets are subsequently used to compute an over-approximation of the reachable sets of the unknown system, as illustrated in Fig.~\ref{fig:algo-cartoon}.

\begin{proposition}
\label{prop:reach_lin}
Consider data $D_i = (X_i^+,X_i^-,U_i^-)$, for $i\geq 0$, with $D_0$ an initial data and $D_1,D_2,\dots$ subsequent new data obtained from system \eqref{eq:model-linear}, where each data $D_i$ satisfies Assumption~\ref{ass:rank_D} and results in a set of models $\mathcal{M}_i^\Sigma$ using Lemma~\ref{lm:sigmaM1}.
The refined set of models  
\begin{align}
&\hat{\mathcal{M}}_i^\Sigma=
\mathcal{M}_i^\Sigma \cap \hat{\mathcal{M}}_{i-1}^\Sigma, \quad i\geq 1,\quad \text{with}\quad \hat{\mathcal{M}}_0^\Sigma=
\mathcal{M}_0^\Sigma \label{cmzintersection}
\end{align}
is a CMZ and contains the true model $\begin{bmatrix} \Phi_\tr & \Gamma_\tr \end{bmatrix}$.
\end{proposition}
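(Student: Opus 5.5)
The natural route is induction on $i$, using Lemma~\ref{lm:sigmaM1} for membership of the true model and Proposition~\ref{Intersection} for closure of the representation under intersection. Note that the statement says ``CMZ'', while Lemma~\ref{lm:sigmaM1} and Proposition~\ref{Intersection} are phrased for CPMZs. I will prove that each $\hat{\mathcal{M}}_i^\Sigma$ is a CPMZ and read ``CMZ'' in that sense; it is a genuine CMZ only when the noise set $\mathcal{Z}_w$ is a constrained zonotope, i.e.\ all exponents are linear.

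\emph{Base case $i=0$.} Here $\hat{\mathcal{M}}_0^\Sigma=\mathcal{M}_0^\Sigma$. This set is a CPMZ because $\mathcal{M}_w$ is a CPMZ, and $X_0^+ - \mathcal{M}_w$ followed by right multiplication by the fixed matrix $(D_0^-)^\dagger$ is an affine map. Such a map keeps the same factors, exponents and constraints $(E,A,B,R)$ and only transforms the offset and generators: $C\mapsto (X_0^+-\tilde C)(D_0^-)^\dagger$ and $G^{(j)}\mapsto -\tilde G^{(j)}(D_0^-)^\dagger$. Lemma~\ref{lm:sigmaM1} then gives $[\Phi_\tr\ \Gamma_\tr]\in\mathcal{M}_0^\Sigma$.

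\emph{Inductive step.} Assume $\hat{\mathcal{M}}_{i-1}^\Sigma$ is a CPMZ containing $[\Phi_\tr\ \Gamma_\tr]$.
1. The data $D_i$ satisfies Assumption~\ref{ass:rank_D}, so Lemma~\ref{lm:sigmaM1} applied to $D_i$ gives a CPMZ $\mathcal{M}_i^\Sigma$ with $[\Phi_\tr\ \Gamma_\tr]\in\mathcal{N}_\Sigma\subseteq\mathcal{M}_i^\Sigma$. The reason is that the true model together with the actual noise realization $W_i^-\in\mathcal{M}_w$ reproduces $X_i^+$.
2. Both sets lie in $\R^{n_x\times(n_x+n_u)}$, so Proposition~\ref{Intersection} applies and expresses $\mathcal{M}_i^\Sigma\cap\hat{\mathcal{M}}_{i-1}^\Sigma$ explicitly as a CPMZ $\langle C_1,G_1,\hat E,\hat A,\hat B,\hat R\rangle$.
3. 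The true model lies in both sets, so it lies in their intersection.

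Induction then closes the argument. As a by-product, the sequence is nested, $\hat{\mathcal{M}}_i^\Sigma\subseteq\hat{\mathcal{M}}_{i-1}^\Sigma$.

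\emph{Main obstacle.} The delicate point is that the factors $\alpha$ must be kept independent. The two operands of each intersection share no factors: each $\mathcal{M}_i^\Sigma$ is driven by its own noise realization, so its factors need fresh identifiers. Proposition~\ref{Intersection} stacks $\alpha_1$ and $\alpha_2$ as separate blocks, which is consistent with this. If identifiers were instead reused across the $\mathcal{M}_i^\Sigma$, a \operator{mergeID} could wrongly couple noises from different batches and shrink the set, so that the true model might be excluded. I would therefore state explicitly that each $\mathcal{M}_i^\Sigma$ is built with fresh identifiers, and then the containment argument above goes through.

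A secondary point is dimensional compatibility of the constraint blocks. This is handled by the vectorized form in Proposition~\ref{Intersection}, with the reshaping remark if a matrix form is required, so it does not affect membership.
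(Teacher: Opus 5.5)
Your proof is correct and follows the same argument as the paper, whose proof just says the result is a straightforward consequence of Lemma~\ref{lm:sigmaM1}; your induction, your appeal to Proposition~\ref{Intersection} for the representation, and your reading of ``CMZ'' as CPMZ (justified because the noise set is a CPZ under Assumption~\ref{ass:zon-noise}) spell out what the paper leaves implicit. The fresh-identifier caveat does no harm but is not needed for this statement: set intersection does not depend on identifiers, and Proposition~\ref{Intersection} already stacks the two factor vectors as independent blocks.
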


\begin{proof}
A straightforward consequence of Lemma~\ref{lm:sigmaM1}.
\end{proof}

\subsection{Exact Multiplication for Non-Convex Reachability}
At time $k$, as depicted in Fig.~\ref{fig:algo-cartoon}, we need to iteratively multiply $\hat{\mathcal{M}}_i^\Sigma$ with the cartesian product of reachable set $\hat{\mathcal{R}}_k$ and the input set $\mathcal{U}_k$ to obtain the next reachable set $\hat{\mathcal{R}}_{k+1}$.
Therefore, we provide an exact multiplication method for CPMZ with CPZ. 

Consider ${\mathcal{Y}} = \zono{C_{\mathcal{Y}}, G_{\mathcal{Y}}, E_{\mathcal{Y}}, A_{\mathcal{Y}}, B_{\mathcal{Y}}, R_{\mathcal{Y}}, \id_{\mathcal{Y}}}_\text{CPMZ}\subset \R^{n_x \times n }$ and ${\mathcal{P}} = \zono{c_{\mathcal{P}}, G_{\mathcal{P}}, E_{\mathcal{P}}, A_{\mathcal{P}}, b_{\mathcal{P}}, R_{\mathcal{P}}, \id_{\mathcal{P}}}_\text{CPZ}\subset \R^{n}$. Since both sets may contain shared dependent factors, consistent indexing of exponent matrices is required. To address this, we leverage the \operator{mergeID} operator, which constructs a unified exponent representation to preserve these dependencies:
\begin{align}
  \operator{mergeID}(\mathcal{Y},\mathcal{P}) = &\big \{ \underbrace{\langle C_{\mathcal{Y}}, G_{\mathcal{Y}}, \overline{E}_\mathcal{Y}, A_{\mathcal{Y}}, B_{\mathcal{Y}}, \overline{R}_\mathcal{Y}, \id_\mathcal{YP} \rangle_\text{CPMZ}}_{\bar{\mathcal{Y}}}, \nonumber \\& 
  \underbrace{\langle c_{\mathcal{P}}, G_{\mathcal{P}}, \overline{E}_{\mathcal{P}}, A_{\mathcal{P}}, b_{\mathcal{P}}, \overline{R}_{\mathcal{P}}, \id_{\mathcal{YP}} \rangle_\text{CPZ}}_{\bar{\mathcal{P}}} \big \}.  \label{mergetwoexact}
\end{align}

\begin{proposition}(Exact Multiplication between CPMZ and CPZ) \label{prop:multi}  
Given a CPMZ ${\mathcal{Y}} = \zono{C_{\mathcal{Y}}, G_{\mathcal{Y}}, E_{\mathcal{Y}}, A_{\mathcal{Y}}, B_{\mathcal{Y}}, R_{\mathcal{Y}}, \id_{\mathcal{Y}}}_\text{CPMZ}\subset \R^{n_x \times n }$ and a CPZ $\mathcal{P} = \langle c_{\mathcal{P}}, G_{\mathcal{P}}, {E}_{\mathcal{P}}, A_{\mathcal{P}}, b_{\mathcal{P}}, {R}_{\mathcal{P}}, \id_{\mathcal{P}} \rangle_\text{CPZ}\subset \R^{n}$, the following identity holds
\begin{multline}
\mathcal{Y} \otimes \mathcal{P} =  \bigzono{C_{\mathcal{Y}} c_{\mathcal{P}},\begin{bmatrix} G_{\mathcal{Y}}c_{\mathcal{P}} & C_{\mathcal{Y}} G_{\mathcal{P}}& G_{f} \end{bmatrix} , E_{\mathcal{YP}}, \\
    A_{\mathcal{Y}\mathcal{P}}, {B}_{\mathcal{Y}\mathcal{P}} , R_{\mathcal{YP}} ,\id_{\mathcal{YP}}}_\text{CPZ},  \label{eq:matczono}
\end{multline}
where $\mathcal{Y}\otimes \mathcal{P} \subset \R^{n_x}$ and
\begin{align*}
    A_{\mathcal{Y}\mathcal{P}} &= \begin{bmatrix} \text{vec}(A^{(1)}_{\mathcal{Y}}) & \dots & \text{vec}
    (A_{\mathcal{Y}}^{(\gamma)}) & 0_{n_{c}n_{a}  \times q_{\mathcal{P}}} \\
     0_{m_{\mathcal{P}}  \times 1} & \dots & 0_{m_{\mathcal{P}}  \times 1} & A_{\mathcal{P}}
    \end{bmatrix} \\
    {B}_{\mathcal{Y}\mathcal{P}} &=\begin{bmatrix} \text{vec}(B_{\mathcal{Y}}) \\  b_{\mathcal{P}}\end{bmatrix} \\
    E_\mathcal{YP} &= \bigg[ \overline{E}_{\mathcal{Y}}, \overline{E}_{\mathcal{P}}, 
    \Big[\overline{E}_{\mathcal{Y}}^{(\cdot,1)} + \overline{E}_{\mathcal{P}}^{(\cdot,1)}\Big],
    \dots,
    \Big[\overline{E}_{\mathcal{Y}}^{(\cdot,1)} + \overline{E}_{\mathcal{P}}^{(\cdot,h_\mathcal{P})}\Big], \\
    & \qquad\dots, \Big[\overline{E}_{\mathcal{Y}}^{(\cdot,\gamma)} + \overline{E}_{\mathcal{P}}^{(\cdot,1)}\Big],
    \dots, \Big[\overline{E}_{\mathcal{Y}}^{(\cdot,\gamma)}+\overline{E}_{\mathcal{P}}^{(\cdot,h_\mathcal{P})}\Big] \bigg]  
    \\
    {R_\mathcal{YP}} &= \begin{bmatrix} \overline{R}_{\mathcal{Y}} & \overline{R}_{\mathcal{P}} \end{bmatrix} \\
    G_{f} &= \begin{bmatrix} g_{f}^{(1)} & \dots & g_{f}^{(h_\mathcal{P}\gamma)} \end{bmatrix}
\end{align*}
with $\overline{E}_\mathcal{Y}\in\R^{a \times \gamma}$, $\overline{E}_\mathcal{P}\in\R^{a \times h_{\mathcal{P}}}$, $\overline{R}_\mathcal{Y}\in\R^{a \times \gamma}$, $\overline{R}_\mathcal{P}\in\R^{a \times h_{\mathcal{P}}}$, and $\id_{\mathcal{YP}}\in\N^{1 \times a }$ are induced by the 
$\operatorname{mergeID}(\mathcal{Y},\mathcal{P})$ operation
as defined in Proposition~\ref{prop:mergeID}. And
\begin{displaymath}
    g_{f}^{(k)} =  G^{(i)}_{\mathcal{Y}} G_{\mathcal{P}}^{(\cdot,j)}, \quad k=h_{\mathcal{P}} (i-1) + j,
\end{displaymath}
for $i=1,\dots,\gamma$,\;$j=1,\dots,h_\mathcal{P}$.
\end{proposition}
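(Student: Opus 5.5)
We prove the identity by double inclusion, with $\mathcal{Y}\otimes\mathcal{P}$ read as the set $\{\mathbf{Y}x \mid \mathbf{Y}\in\mathcal{Y},\, x\in\mathcal{P}\}$, where the two are evaluated at a common factor vector whenever they share identifiers. First we apply $\operatorname{mergeID}(\mathcal{Y},\mathcal{P})$ as in \eqref{mergetwoexact}. By Proposition~\ref{prop:mergeID} (whose argument carries over verbatim to the matrix case, since it only pads exponent rows with zeros), $\bar{\mathcal{Y}}=\mathcal{Y}$ and $\bar{\mathcal{P}}=\mathcal{P}$ as sets. Both are then parametrized by one factor vector $\alpha\in[-1,1]^a$ indexed by $\id_{\mathcal{YP}}$. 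Write $\mu_i(\alpha)=\prod_k\alpha_{(k)}^{\overline{E}_{\mathcal{Y}}^{(k,i)}}$ and $\nu_j(\alpha)=\prod_k\alpha_{(k)}^{\overline{E}_{\mathcal{P}}^{(k,j)}}$, and define the constraint monomials analogously from $\overline{R}_{\mathcal{Y}}$ and $\overline{R}_{\mathcal{P}}$.

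\emph{Inclusion $\subseteq$.} Take $\mathbf{Y}\in\mathcal{Y}$ and $x\in\mathcal{P}$ generated by a common $\alpha$ satisfying both constraint systems. Expanding bilinearly gives
\begin{equation*}
\mathbf{Y}x=C_{\mathcal{Y}}c_{\mathcal{P}}+\sum_{i}\mu_i G_{\mathcal{Y}}^{(i)}c_{\mathcal{P}}+\sum_j\nu_j C_{\mathcal{Y}}G_{\mathcal{P}}^{(\cdot,j)}+\sum_{i,j}\mu_i\nu_j\,G_{\mathcal{Y}}^{(i)}G_{\mathcal{P}}^{(\cdot,j)}.
\end{equation*}
Since $\mu_i\nu_j$ is the monomial with exponent column $\overline{E}_{\mathcal{Y}}^{(\cdot,i)}+\overline{E}_{\mathcal{P}}^{(\cdot,j)}$, the ordering $k=h_{\mathcal{P}}(i-1)+j$ matches the columns of $G_f$ to the last block of $E_{\mathcal{YP}}$. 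The first two blocks of $E_{\mathcal{YP}}$ correspond to $G_{\mathcal{Y}}c_{\mathcal{P}}$ and $C_{\mathcal{Y}}G_{\mathcal{P}}$.

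For the constraints, we vectorize the matrix constraint $\sum_i(\cdot)A_{\mathcal{Y}}^{(i)}=B_{\mathcal{Y}}$ into $\sum_i(\cdot)\,\mathrm{vec}(A_{\mathcal{Y}}^{(i)})=\mathrm{vec}(B_{\mathcal{Y}})$ and stack it with the CPZ constraint. The result is exactly the block-diagonal system $(A_{\mathcal{YP}},B_{\mathcal{YP}})$ with exponent matrix $R_{\mathcal{YP}}=[\overline{R}_{\mathcal{Y}}\ \overline{R}_{\mathcal{P}}]$. Hence the same $\alpha$ certifies that $\mathbf{Y}x$ lies in the right-hand side of \eqref{eq:matczono}.

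\emph{Inclusion $\supseteq$.} Take a point of the right-hand side generated by some feasible $\alpha$. The vectorized constraints then split into the two original constraint systems, so $\alpha$ defines an admissible $\mathbf{Y}\in\bar{\mathcal{Y}}$ and $x\in\bar{\mathcal{P}}$. Reading the displayed expansion backwards, the point equals $\mathbf{Y}x$.

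The main obstacle is bookkeeping rather than analysis, and it comes in two places.
\begin{enumerate}[label=(\roman*)]
\item The exponent matrix $R_{\mathcal{YP}}$ and the constraint columns $A_{\mathcal{YP}}$ must align: there are $\gamma$ vectorized CPMZ constraint columns followed by $q_{\mathcal{P}}$ CPZ columns. The statement's dimensions (e.g.\ $\overline{R}_{\mathcal{P}}\in\R^{a\times h_{\mathcal{P}}}$) should be read with $q_{\mathcal{P}}$, so I would fix the indexing explicitly.
\item The vectorization of $\sum_i(\cdot)A^{(i)}_{\mathcal{Y}}=B_{\mathcal{Y}}$ must be an equivalence. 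It is, because $\mathrm{vec}$ is a linear bijection.
\end{enumerate}
Finally, no relaxation occurs because the products $\mu_i\nu_j$ are kept as monomials in the shared factors instead of being bounded by independent new factors. This is precisely what makes the representation algebraically exact in the sense of Remark~\ref{rem:exactness}.
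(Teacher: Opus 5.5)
Your proposal is correct and follows essentially the same route as the paper's proof. Both apply $\operatorname{mergeID}$ so that the two sets share one factor vector, expand $\mathbf{Y}x$ bilinearly to match the three generator blocks to the columns of $E_{\mathcal{YP}}$, vectorize the CPMZ constraint and stack it block-diagonally with the CPZ constraint, and then run the argument backwards for the converse inclusion. Two of your clarifications are consistent with what the paper's proof implicitly relies on: reading $\otimes$ as the dependency-preserving product evaluated at a common factor vector, and correcting $\overline{R}_{\mathcal{P}}$ to have $q_{\mathcal{P}}$ columns (the paper's proof sums the constraint terms up to $\gamma+q_{\mathcal{P}}$).
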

\begin{proof}
    Let $\hat{\mathcal{P}}$ be the right-hand side of \eqref{eq:matczono} and let $Y\in \mathcal{Y}$ and $p \in \mathcal{P}$. 
    We will prove that $\mathcal{Y}\mathcal{P} \subseteq \hat{\mathcal{P}}$ and $\hat{\mathcal{P}} \subseteq \mathcal{Y}\mathcal{P}$, for all $Y\in\mathcal{Y}$ and $p\in\mathcal{P}$. 
    Note that with the implementation of $\operator{mergeID}(\mathcal{Y},\mathcal{P})$, $Y$ and $p$ can be written as:
    \begin{align}
    \exists \hat{\alpha}_{([1:a])} &: & Y &= C_{\mathcal{Y}} + \sum_{i=1}^{\gamma}\bigg( \prod_{k=1}^{a}
    \hat{\alpha}_{(k)}^{\overline{E}_\mathcal{Y}^{(k,i)}} \bigg)  G_{\mathcal{Y}}^{(i)}
    \nonumber\\
    \exists \hat{\alpha}_{([1:a])} &: & p 
    &=  c_{\mathcal{P}} +   \sum_{i=1}^{h_\mathcal{P}} \bigg( \prod_{k=1}^{a}
    \hat{\alpha}_{(k)}^{\overline{E}_\mathcal{P}^{(k,i)}} \bigg) G_{\mathcal{P}}^{(\cdot,i)} \label{alphadetail}
    \end{align}
    where $a = |\mathcal{H}|+\gamma$, $\mathcal{H} = \left\{ i~ |~ \id_{{\mathcal{P}}(i)} \not\in \id_{\mathcal{Y}} \right\}$. 
    Thus, we have
    \begin{multline}
    Yp = C_{\mathcal{Y}} c_{\mathcal{P}} +  \sum_{i=1}^{\gamma}\bigg( \prod_{k=1}^{a}
    \hat{\alpha}_{(k)}^{\overline{E}_\mathcal{Y}^{(k,i)}} \bigg)  G_{\mathcal{Y}}^{(i)}  c_{\mathcal{P}} \\
    + C_{\mathcal{Y}}\sum_{i=1}^{h_{\mathcal{P}}} \bigg( \prod_{k=1}^{a}
    \hat{\alpha}_{(k)}^{\overline{E}_\mathcal{P}^{(k,i)}} \bigg)  G_{\mathcal{P}}^{(\cdot,i)} \\
    + \sum_{i=1}^{\gamma} \sum_{j=1}^{h_{\mathcal{P}}} \bigg( \prod_{k=1}^{a}
    \hat{\alpha}_{(k)}^{\overline{E}_\mathcal{Y}^{(k,i)}} \bigg) \bigg( \prod_{k=1}^{a}
    \hat{\alpha}_{(k)}^{\overline{E}_\mathcal{P}^{(k,j)}} \bigg)  G_{\mathcal{Y}}^{(i)}  G_{\mathcal{P}}^{(\cdot,j)} . \label{eq:P_details}
    \end{multline}
    For the second and third terms on the right-hand side of \eqref{eq:P_details}, we have two sets of factors, each containing $\gamma$ and $h_\mathcal{P}$ elements. Consequently, the first $\gamma + h_\mathcal{P}$ entries of $\hat{\alpha}$ and the columns of $E_\mathcal{YP}$ are defined accordingly as follows:
    \begin{align}
    \hat{\alpha}_{([1:\gamma+h_\mathcal{P}])} &=\bigg[\prod_{k=1}^{a}
    \hat{\alpha}_{(k)}^{\overline{E}_\mathcal{Y}^{(k,1)}}, \dots,\prod_{k=1}^{a}
    \hat{\alpha}_{(k)}^{\overline{E}_\mathcal{Y}^{(k,\gamma)}} \nonumber\\
    & \hspace{1cm} 
    \prod_{k=1}^{a} \hat{\alpha}_{(k)}^{\overline{E}_\mathcal{P}^{(k,1)}}, \dots, 
    \prod_{k=1}^{a} \hat{\alpha}_{(k)}^{\overline{E}_\mathcal{P}^{(k,h_\mathcal{P})}}\bigg] \\{E}_{\mathcal{YP}}^{([1:\gamma+h_\mathcal{P}])} &= \Big[ \overline{E}_{\mathcal{Y}},\overline{E}_{\mathcal{P}}\Big].
\end{align} 
Because of $\operator{mergeID}(\mathcal{Y},\mathcal{P})$ in \eqref{mergetwoexact}, the fourth term on the right-hand side of \eqref{eq:P_details} can be expressed as:
\begin{multline}
\sum_{i=1}^{\gamma} \sum_{j=1}^{h_{\mathcal{P}}} \bigg( \prod_{k=1}^{a}
  \hat{\alpha}_{(k)}^{\overline{E}_\mathcal{Y}^{(k,i)}}  
  \hat{\alpha}_{(k)}^{\overline{E}_\mathcal{P}^{(k,j)}} \bigg)  G_{\mathcal{Y}}^{(i)}  G_{\mathcal{P}}^{(\cdot,j)}= \\
  \sum_{i=1}^{\gamma} \sum_{j=1}^{h_{\mathcal{P}}} \bigg( \prod_{k=1}^{a}
  \hat{\alpha}_{(k)}^{\overline{E}_\mathcal{Y}^{(k,i)}+  
  \overline{E}_\mathcal{P}^{(k,j)}} \bigg)  G_{\mathcal{Y}}^{(i)}  G_{\mathcal{P}}^{(\cdot,j)}. \label{1:n+p}
\end{multline} 
Concatenating the factors in \eqref{1:n+p}, we have
\begin{multline*}
\hat{\alpha}_{([\gamma+h_\mathcal{P}+1:\gamma+h_\mathcal{P}+\gamma h_\mathcal{P}])} \\ 
=\bigg[\prod_{k=1}^{a}
  \hat{\alpha}_{(k)}^{\overline{E}_\mathcal{Y}^{(k,1)}+  
  \overline{E}_\mathcal{P}^{(k,1)}}, \dots, \prod_{k=1}^{a}
  \hat{\alpha}_{(k)}^{\overline{E}_\mathcal{Y}^{(k,1)}+  
  \overline{E}_\mathcal{P}^{(k,h_\mathcal{P})}}, \dots, \\
  \prod_{k=1}^{a}
  \hat{\alpha}_{(k)}^{\overline{E}_\mathcal{Y}^{(k,\gamma)}+  \overline{E}_\mathcal{P}^{(k,1)}}, \dots,\prod_{k=1}^{a}
  \hat{\alpha}_{(k)}^{\overline{E}_\mathcal{Y}^{(k,\gamma)}+  \overline{E}_\mathcal{P}^{(k,h_\mathcal{P})}}\bigg],
\end{multline*}
 which results in $E_\mathcal{YP}^{([\gamma+h_\mathcal{P}+1:\gamma+h_\mathcal{P}+\gamma h_\mathcal{P}])}$ and $G_{f}$ as follows:
 \begin{align*}
    &E_\mathcal{YP}^{([\gamma+h_\mathcal{P} + 1 : \gamma+h_\mathcal{P} + \gamma h_\mathcal{P}])}= \bigg[  
    \Big[\overline{E}_{\mathcal{Y}}^{(\cdot,1)} + \overline{E}_{\mathcal{P}}^{(\cdot,1)}\Big],
    \dots, 
    \Big[\overline{E}_{\mathcal{Y}}^{(\cdot,1)} \\
    & + \overline{E}_{\mathcal{P}}^{(\cdot,h_\mathcal{P})}\Big], 
    \dots, \Big[\overline{E}_{\mathcal{Y}}^{(\cdot,\gamma)} + \overline{E}_{\mathcal{P}}^{(\cdot,1)}\Big],\dots, \Big[\overline{E}_{\mathcal{Y}}^{(\cdot,\gamma)}+\overline{E}_{\mathcal{P}}^{(\cdot,h_\mathcal{P})}\Big] \bigg] \\
    &G_{f} = \bigg[ G^{(1)}_{\mathcal{Y}} G_{\mathcal{P}}^{(\cdot,1)},{...}, G^{(1)}_{\mathcal{Y}} G_{\mathcal{P}}^{(\cdot,h_\mathcal{P})},{...},G^{(\gamma)}_{\mathcal{Y}} G_{\mathcal{P}}^{(\cdot,h_\mathcal{P})} \bigg].
    \end{align*}
    Secondly, we find the constraints on $\hat{\alpha}_{([1:a])}$ in \eqref{alphadetail}. For  $Y\in \mathcal{Y}$ and  $p \in \mathcal{P}$, $Yp$ defined in~\eqref{eq:P_details} should satisfy the constraints simultaneously.
    From \eqref{mergetwoexact}, we have
    \begin{multline}
    \sum_{i=1}^{\gamma} \bigg( \prod_{k=1}^{\gamma}
(\alpha_{\mathcal{Y}}^{(k)})^{R_\mathcal{Y}^{(k,i)}} \bigg) \text{vec}(A_{\mathcal{Y}}^{(i)}) =\\ 
    \sum_{i=1}^{\gamma} \bigg( \prod_{k=1}^{a}
    \hat{\alpha}_{(k)}^{\overline{R}_\mathcal{Y}^{(k,i)}} \bigg) \text{vec}(A_{\mathcal{Y}}^{(i)})=
    \text{vec}(B_{\mathcal{Y}})
    \label{constraint1}
    \end{multline} 
    and
    \begin{multline}
    \sum_{i=1}^{q_\mathcal{P}} \bigg( \prod_{k=1}^{h_\mathcal{P}}
(\alpha_{\mathcal{P}}^{(k)})^{R_\mathcal{P}^{(k,i)}} \bigg)A_{\mathcal{P}}^{{(\cdot,i)}} = \\
    \sum_{i=1}^{q_\mathcal{P}} \bigg( \prod_{k=1}^{a}
    \hat{\alpha}_{(k)}^{\overline{R}_\mathcal{P}^{(k,i)}} \bigg)A_{\mathcal{P}}^{{(\cdot,i)}}=b_{\mathcal{P}}.
    \label{constraint2}
    \end{multline}
    Combining \eqref{constraint1} and \eqref{constraint2}, the following holds
    \begin{multline}
    \sum_{i=1}^{\gamma} \bigg( \prod_{k=1}^{a}
    \hat{\alpha}_{(k)}^{\overline{R}_\mathcal{Y}^{(k,i)}} \bigg)\begin{bmatrix} \text{vec}(A_{\mathcal{Y}}^{{(\cdot,i)}}) \\  0_{m_\mathcal{P}\times1}\end{bmatrix}\\
    +\sum_{i=1}^{q_\mathcal{P}} \bigg( \prod_{k=1}^{a}
    \hat{\alpha}_{(k)}^{\overline{R}_\mathcal{P}^{(k,i)}} \bigg)\begin{bmatrix} 0_{n_{c}n_{a}\times1} \\  A_{\mathcal{P}}^{{(\cdot,i)}}\end{bmatrix}=\begin{bmatrix} \text{vec}(B_{\mathcal{Y}}) \\  b_{\mathcal{P}}\end{bmatrix}
    \label{constraint3} 
    \end{multline}
    which results in $A_{\mathcal{Y}\mathcal{P}}$ and $B_{\mathcal{Y}\mathcal{P}}$. 
    Thus, $Yp \in \hat{\mathcal{P}}$ and therefore $\mathcal{Y} \mathcal{P} \subseteq \hat{\mathcal{P}}$. 
    Conversely, let $\hat{p} \in \hat{\mathcal{P}}$, then 
    \begin{align*}
    \exists \hat{\alpha}_{([1:{\gamma+h_\mathcal{P}+\gamma h_\mathcal{P}}])}&:  \hat p =\nonumber\hat c + \sum_{i=1}^{\gamma+h_\mathcal{P}+\gamma h_\mathcal{P}}\bigg( \prod_{k=1}^{a}
    \hat{\alpha}_{(k)}^{{E}_\mathcal{YP}^{(k,i)}} \bigg)  G_{f}^{(i)}.
    \end{align*}
    By partitioning
    \begin{multline*}
    \hat{\alpha}_{([1:{\gamma+h_\mathcal{P}+\gamma h_\mathcal{P}}])}=\bigg[\prod_{k=1}^{a}
    \hat{\alpha}_{(k)}^{\overline{E}_\mathcal{Y}^{(k,[1:\gamma])}},\prod_{k=1}^{a}
    \hat{\alpha}_{(k)}^{\overline{E}_\mathcal{P}^{(k,[1:h_\mathcal{P}])}},\\
    \prod_{k=1}^{a}
    \hat{\alpha}_{(k)}^{\overline{E}_\mathcal{Y}^{(k,1)}+  
    \overline{E}_\mathcal{P}^{(k,1)}}, \dots,\prod_{k=1}^{a}
    \hat{\alpha}_{(k)}^{\overline{E}_\mathcal{Y}^{(k,1)}+  
    \overline{E}_\mathcal{P}^{(k,h_\mathcal{P})}},\dots,\\
    \prod_{k=1}^{a}
    \hat{\alpha}_{(k)}^{\overline{E}_\mathcal{Y}^{(k,\gamma)}+  \overline{E}_\mathcal{P}^{(k,1)}}, \dots,\prod_{k=1}^{a}
    \hat{\alpha}_{(k)}^{\overline{E}_\mathcal{Y}^{(k,\gamma)}+  \overline{E}_\mathcal{P}^{(k,h_\mathcal{P})}}\bigg]  
    \label{eq:beta_c1_3}
    \end{multline*}
    it follows that there exist $Y\in \mathcal{Y}$ and $p \in \mathcal{P}$ such that $\hat{p} = Yp$.
    Meanwhile, since $\hat{p}\in\hat{\mathcal{P}}$, it holds that
    \begin{align}
    \sum_{i=1}^{q_\mathcal{P}+\gamma} \bigg( \prod_{k=1}^{a}
    \hat{\alpha}_{(k)}^{{R}_\mathcal{YP}^{(k,i)}} \bigg)A_{\mathcal{YP}}^{{(\cdot,i)}}=\begin{bmatrix} \text{vec}(B_{\mathcal{Y}}) \\  b_{\mathcal{P}}\end{bmatrix}
    \end{align}
    which satisfies the constraints in \eqref{constraint3}. Therefore, $\hat{p} \in \mathcal{Y} \mathcal{P}$ and thus $ \hat{\mathcal{P}} \subseteq \mathcal{Y}   \mathcal{P}$.
\end{proof}

As noted before, $\mathcal{N}=\zono{C_{\mathcal{N}}, G_{\mathcal{N}}, A_{\mathcal{N}}, B_{\mathcal{N}}, \id_{\mathcal{N}}}_\text{CMZ}$ is a special case of CPMZ and thus can be reformulated as $\mathcal{Y}=\zono{C_{\mathcal{N}}, G_{\mathcal{N}}, E_{\mathcal{N}}, A_{\mathcal{N}}, B_{\mathcal{N}}, R_{\mathcal{N}}, \id_{\mathcal{N}}}_\text{CPMZ}$ with $E_{\mathcal{N}} = R_{\mathcal{N}} = I_{\gamma_{\mathcal{N}}}$. 
Then, according to Proposition~\ref{prop:multi}, the CPMZ is multiplied exactly with a CPZ to yield a new CPZ, as illustrated by Fig.~\ref{fig:proof-exact-mult-cartoon}.

\begin{figure}[!h]
    \centering
    \includegraphics[width=0.8\linewidth]{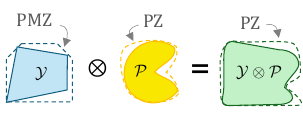}
    \caption{Illustration of the proof of Proposition \ref{prop:multi}, where the dashed lines indicate the boundary of PMZ and PZs, respectively. Let $\textrm{c}(\mathcal{Y})$ and $\textrm{c}(\mathcal{P})$ denote the constraints associated with CPMZ $\mathcal{Y}$ and CPZ $\mathcal{P}$, respectively. The exact multiplication of $\mathcal{Y}:=\{\textrm{PMZ}:\textrm{c}(\mathcal{Y}) \textrm{ hold}\}$ with $\mathcal{P}:=\{\textrm{PZ}:\textrm{c}(\mathcal{P}) \textrm{ hold}\}$ is a CPZ $\mathcal{Y}\otimes \mathcal{P}=\{\textrm{PZ}:\textrm{c}(\mathcal{Y}\otimes \mathcal{P}) \textrm{ hold}\}$, with constraints $\textrm{c}(\mathcal{Y}\otimes \mathcal{P}):= \textrm{Proj}_{\R^{n}} c(\mathcal{Y})$ and $c(\mathcal{P})$, where $\textrm{Proj}_{\R^{n}} c(\mathcal{Y})$ is the projection of $c(\mathcal{Y})$ from $\R^{n_x\times n}$ to $\R^{n}$.}
    \label{fig:proof-exact-mult-cartoon}
\end{figure}

\Comp The quantities \(\mathrm{vec}(B_{\mathcal{Y}})\), \(A_{\mathcal{Y}\mathcal{P}}\), \(E_{\mathcal{Y}\mathcal{P}}\), \(C_{\mathcal{Y}}c_{\mathcal{P}}\), \(G_{\mathcal{Y}}c_{\mathcal{P}}\), \(C_{\mathcal{Y}}G_{\mathcal{P}}\), and \(G_f\) are obtained using standard block concatenation, vectorization, and matrix multiplications, including matrix--vector multiplications and matrix--matrix multiplications. In particular, \(\mathrm{vec}(B_{\mathcal{Y}})\) has complexity \(\mathcal{O}(n_c n_a)\), \(A_{\mathcal{Y}\mathcal{P}}\) has complexity \(\mathcal{O}(n_c n_a \gamma)\), \(E_{\mathcal{Y}\mathcal{P}}\) has complexity \(\mathcal{O}((|\mathcal{H}|+\gamma)\gamma h_{\mathcal{P}})\), \(C_{\mathcal{Y}}c_{\mathcal{P}}\), \(G_{\mathcal{Y}}c_{\mathcal{P}}\), and \(C_{\mathcal{Y}}G_{\mathcal{P}}\) have complexities \(\mathcal{O}(n_x n)\), \(\mathcal{O}(\gamma n_x n)\), and \(\mathcal{O}(n_x n h_{\mathcal{P}})\), respectively, and \(G_f\) has complexity \(\mathcal{O}(\gamma n_x n h_{\mathcal{P}})\). Therefore, the exact multiplication in \(\eqref{eq:matczono}\) has overall complexity \(\mathcal{O}\!\bigl(n_c n_a \gamma + (|\mathcal{H}| + \gamma)\gamma h_{\mathcal{P}} + \gamma n_x n h_{\mathcal{P}}\bigr)\). \hfill \(\diamond\)

\begin{remark}
Proposition~\ref{prop:multi} provides an exact set operation in the sense that the resulting set is represented without introducing additional conservatism. However, repeated application of this exact multiplication typically increases the representation size of the resulting CPMZ and CPZ objects, which can lead to a rapidly growing computational burden. In the numerical experiments, we therefore apply a representation-size reduction step for CPZs, following the reduction procedures in \cite{kochdumper2023constrained}. This reduction step is the only source of over-approximation in our implementation; all other operations are carried out exactly as stated in Proposition~\ref{prop:multi}.
\end{remark}

\begin{remark}
The complexity estimates above are per propagation step. In multi-step reachability, repeated set operations may increase the number of generators, constraints, and dependent identifiers, which in turn increases the per-step computational cost. This growth can be controlled without introducing conservatism by applying exact simplification procedures that remove redundant constraints and eliminate inactive variables in the constraint description, while preserving the represented set; see, e.g., \cite{scott2016constrained,marechal2017efficient}.
\end{remark}

To preserve dependencies between generators during the addition of two CPZs and prevent over-approximation, we adopt the exact addition operation proposed by \cite{kochdumper2020sparse}.
By retaining dependency information, the resulting sets are less conservative, improving the computational efficiency and reliability of the resulting CPZ.

\begin{proposition}(Exact Addition between CPZs)
Let $\mathcal{P}_1 = \zono{c_{1},G_{1},E_{1},A_{1},b_{1},R_{1},\id_{1}}_\text{CPZ} \subset \R^{n}$ and $\mathcal{P}_2 = \zono{c_{2},G_{2},E_{2},A_{2},b_{2},R_{2},\id_{2}}_\text{CPZ} \subset \R^{n}$. Then, the exact addition is given by 
\begin{align}
\mathcal{P}_1 \boxplus \mathcal{P}_2
&= \bigzono{
\left[\begin{array}{l}
c_{1} \\
c_{2}
\end{array}\right],
\left[\begin{array}{cc}
G_{1} & 0_{n \times h_{2}} \\
0_{n \times h_{1}} & G_{2}
\end{array}\right],\notag\\
&\qquad
\left[\begin{array}{cc}
\overline{E}_{1} & \overline{E}_{2}
\end{array}\right],
\left[\begin{array}{cc}
A_{1} & 0_{n_{c_{1}} \times q_{2}} \\
0_{n_{c_{2}} \times q_{1}} & A_{2}
\end{array}\right],\notag\\
&\qquad
\left[\begin{array}{l}
b_{1} \\
b_{2}
\end{array}\right],
\left[\begin{array}{cc}
\overline{R}_{1} & \overline{R}_{2}
\end{array}\right],
\overline \id
}_\text{CPZ}.
\end{align}
where $\overline{E}_{1},\overline{E}_{2},
\overline{R}_{1},\overline{R}_{2}$ and 
$\overline {\id}$ are induced by the 
$\operatorname{mergeID}(\mathcal{P}_1,\mathcal{P}_2)$ operation
as defined in Proposition~\ref{prop:mergeID}.
   \label{prop:exactAddition}
\end{proposition}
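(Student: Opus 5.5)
The plan is a direct double-inclusion argument on the factor level, mirroring the proof of Proposition~\ref{prop:multi}. I first fix what the right-hand side represents. Its center is $[c_1;c_2]$ and its generators are block diagonal, so it is a set in $\R^{2n}$. I will therefore read $\mathcal{P}_1 \boxplus \mathcal{P}_2$ as the dependency-preserving stacked set: all vectors $[p_1;p_2]$ in which $p_1\in\mathcal{P}_1$ and $p_2\in\mathcal{P}_2$ are generated by one common assignment of the factors carrying the same identifier. When $\id_1$ and $\id_2$ are disjoint this is exactly $\mathcal{P}_1\times\mathcal{P}_2$. Under this reading the Minkowski sum is recovered afterwards by the linear map $[I_n\; I_n]$, which is itself exact.

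First, I apply Proposition~\ref{prop:mergeID} to obtain $\bar{\mathcal{P}}_1=\mathcal{P}_1$ and $\bar{\mathcal{P}}_2=\mathcal{P}_2$, both written over the common factor vector $\hat\alpha\in[-1,1]^a$ with identifier $\overline{\id}$. Padding zero rows in $\overline{E}_i,\overline{R}_i$ only appends factors with exponent $0$, so each set is unchanged. Moreover, a shared identifier is, by construction, mapped to the same row of $\hat\alpha$ in both sets.

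For the inclusion of the left side in the right-hand side, I take $[p_1;p_2]$ generated by some admissible $\hat\alpha$. Then
\[
[p_1;p_2]=[c_1;c_2]+\sum_{i=1}^{h_1}\Big(\prod_k\hat\alpha_{(k)}^{\overline{E}_1^{(k,i)}}\Big)\begin{bmatrix}G_1^{(\cdot,i)}\\0\end{bmatrix}+\sum_{j=1}^{h_2}\Big(\prod_k\hat\alpha_{(k)}^{\overline{E}_2^{(k,j)}}\Big)\begin{bmatrix}0\\G_2^{(\cdot,j)}\end{bmatrix}.
\]
This is precisely the generator part of the right-hand side with exponent matrix $[\overline{E}_1\;\overline{E}_2]$. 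The two constraint systems $\sum_i(\prod_k\hat\alpha_{(k)}^{\overline{R}_1^{(k,i)}})A_1^{(\cdot,i)}=b_1$ and the analogous one for $\mathcal{P}_2$ stack into the block-diagonal system with $[\overline{R}_1\;\overline{R}_2]$ and right-hand side $[b_1;b_2]$, as in \eqref{constraint3}.

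For the converse inclusion, I take a point of the right-hand side with its factor vector $\hat\alpha$. I read off the first $n$ rows and the last $n$ rows separately; the block structure decouples both the generators and the constraints. The first $n$ rows give a point of $\bar{\mathcal{P}}_1=\mathcal{P}_1$ and the last $n$ rows a point of $\mathcal{P}_2$, both evaluated at the same $\hat\alpha$, which is exactly the defining condition of the left side.

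The main obstacle is not algebraic but interpretive. The statement is labelled ``addition'' yet has stacked centers, so the proof must pin down that the object is the dependency-preserving stacking. The key subtle point is that shared factors must be evaluated identically in both blocks. This is guaranteed only because \operator{mergeID} places each common identifier in a single row of $\overline{E}_1,\overline{E}_2,\overline{R}_1,\overline{R}_2$. That row alignment is the step I would verify carefully.
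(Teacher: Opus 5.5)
You correctly noticed that the printed right-hand side, with center $[c_1;c_2]$ and block-diagonal generators, is a set in $\R^{2n}$. It therefore cannot literally be a sum of two subsets of $\R^n$. However, you resolved this by redefining $\boxplus$ as a dependency-preserving stacking, and that is not the operation the proposition is about.

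The paper gives no proof of its own. It adopts the exact addition of \cite{kochdumper2020sparse}, namely $\mathcal{P}_1\boxplus\mathcal{P}_2=\{f_1(\hat\alpha)+f_2(\hat\alpha)\}$ over the merged factor vector $\hat\alpha$, with both constraint systems imposed. It then uses $\boxplus$ in Algorithm~\ref{alg:offline-online} to add $\mathcal{Z}_w$ to a subset of $\R^{n_x}$, so the result must live in $\R^{n}$. For that operation the printed center and generator blocks are wrong. The correct representation is $\langle c_1+c_2,\,[G_1\;\,G_2],\,[\overline{E}_1\;\,\overline{E}_2],\,\mathrm{blkdiag}(A_1,A_2),\,[b_1;b_2],\,[\overline{R}_1\;\,\overline{R}_2],\,\overline{\id}\rangle_{\text{CPZ}}$.

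Your double inclusion is sound. Its mechanics (merge identifiers, substitute the common $\hat\alpha$, stack the two constraint systems) are exactly those of the cited argument. But what it establishes is the dependency-preserving Cartesian product $\{[f_1(\hat\alpha);f_2(\hat\alpha)]\}$, not the exact addition. Incidentally, that is the form of Cartesian product needed when the stacked sets share factors, as the monomial sets in Algorithms~\ref{alg:PolyReachability}--\ref{alg:DDPolyReachability_update_intersect} do.

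The missing step is the one you mention only in passing. Apply the linear map $[I_n\;\,I_n]$, which acts exactly on a CPZ by $c\mapsto[I_n\;\,I_n]c$ and $G\mapsto[I_n\;\,I_n]G$, leaving $E,A,b,R,\id$ unchanged. Equivalently, rerun your factor-level argument with $p_1+p_2$ in place of $[p_1;p_2]$; both inclusions go through verbatim, because the constraint system is identical. This yields the corrected formula above and shows that the printed one is not $\mathcal{P}_1\boxplus\mathcal{P}_2$.

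Also, what $[I_n\;\,I_n]$ recovers is not the Minkowski sum. With overlapping identifiers, $\{f_1(\hat\alpha)+f_2(\hat\alpha)\}$ is in general a strict subset of $\mathcal{P}_1\oplus\mathcal{P}_2$. For example, $\{\alpha\}$ and $\{-\alpha\}$ with a shared identifier give $\{0\}$ rather than $[-2,2]$. The two coincide only when $\id_1$ and $\id_2$ are disjoint, and that difference is precisely what makes the addition ``exact''.
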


\begin{algorithm}[h]
\caption{Data-driven Non-Convex Reachability Analysis for LTI system}
\label{alg:offline-online}
\textbf{Input:} 
Initial input-state data $D_0 = (X_0^+, X_0^-, U_0^-)$, initial set $\mathcal{X}_0$, noise zonotope $\mathcal{Z}_w$, input zonotopes $\mathcal{U}_k$, batch length \(L\ge n_x+n_u\); \\
\textbf{Output:} 
Reachable sets $\hat{\mathcal{R}}_k$ for $k \in \N$.

\noindent\textbf{Offline: Initialization}
\begin{algorithmic}[1]
  \State $\hat{\mathcal{R}}_0 \gets \mathcal{X}_0$
  \State $\mathcal{M}_0^\Sigma \gets ( X_0^+ - \mathcal{M}_w ) \begin{bmatrix} X_0^- \\ U_0^- \end{bmatrix}^\dagger$
 \State Set $\hat{\mathcal{M}}^\Sigma \gets \mathcal{M}_0^\Sigma$
\State Set $X^+ \gets [~]$, $X^- \gets [~]$, $U^- \gets [~]$
\end{algorithmic}

\noindent\textbf{Online: Set Refinement and Reachability Analysis }
\begin{algorithmic}[1]

\While{$k\geq 1$}
    \State Construct online data matrices $X^+ = [X^+ ~ x_{(k)}]$, $X^- = [X^- ~ x_{(k-1)}]$, $U^- =[U^- ~ u_{(k-1)}]$
    
    \If{\(\mathrm{cols}(X^-)=L\) \textbf{and} $\textrm{rank}\left(\begin{bmatrix}X^- \\ U^-\end{bmatrix}\right)=n_x + n_u$}  
    
        \State $\mathcal{M}^\Sigma \gets ( X^+ - \mathcal{M}_w ) \begin{bmatrix} X^- \\ U^- \end{bmatrix}^\dagger$
    \State $\hat{\mathcal{M}}^\Sigma \gets \mathcal{M}^\Sigma \cap \hat{\mathcal{M}}^\Sigma$
    \State $X^+ \gets [~]$, $X^- \gets [~]$, $U^- \gets [~]$
    
\EndIf
\State $\hat{\mathcal{R}}_{k+1} \gets \hat{\mathcal{M}}^\Sigma\otimes ( \hat{\mathcal{R}}_k \times \mathcal{U}_k) \boxplus \mathcal{Z}_w$

\State $k \gets k+1$.
\EndWhile
  
\end{algorithmic}
\end{algorithm}

The proposed data-driven reachability analysis procedure for LTI systems is summarized in Algorithm~\ref{alg:offline-online}.
The algorithm consists of an offline initialization phase followed by an online
set refinement and reachable set propagation phase.
During initialization, an initial set of admissible system matrices
$\hat{\mathcal{M}}^\Sigma$ is computed from the available input--state data and
the reachable set is initialized to the given initial set.

As new input--state measurements become available online, the set of admissible
system matrices is recursively refined through set intersection whenever the
rank condition is satisfied.
At each time step, the reachable set is propagated forward by applying the
refined matrix zonotope $\hat{\mathcal{M}}^\Sigma$ to the Cartesian product of
the current reachable set and the input set, followed by the incorporation of
the process noise zonotope.

The use of CPMZs in combination with
CPZs enables an exact set-based propagation of
reachable sets for LTI systems.
In contrast to earlier approaches relying solely on CMZ--CZ multiplication,
the proposed formulation exploits the richer algebraic structure of CPMZs,
leading to a less conservative and more expressive reachability analysis
framework.

\begin{remark}
\label{rem:cmz_sideinfo_lti}
In Algorithm~\ref{alg:offline-online}, the model set construction in Line~2 and the subsequent refinement via intersection in Line~4 can leverage enhanced CMZ descriptions as proposed in~\cite{Alanwar2023Datadriven}. In particular, the CMZ characterization can be tightened by using an improved noise MZ description (e.g., $\mathcal{N}_w$), and by incorporating additional \emph{side information} on the true system matrices, formulated as the element-wise inequality $\big|\bar{Q}\begin{bmatrix}A_{\mathrm{tr}} & B_{\mathrm{tr}}\end{bmatrix}-\bar{Y}\big|\le \bar{R}$, where $\bar{Q}\in\mathbb{R}^{n_s\times n_x}$ and $\bar{Y},\bar{R}\in\mathbb{R}^{n_s\times (n_x+n_u)}$ encode known structural constraints (e.g., bounds, sparsity, or sign patterns) on $\begin{bmatrix}A_{\mathrm{tr}} & B_{\mathrm{tr}}\end{bmatrix}$. These refinements yield a tighter data-consistent model set and can thereby reduce conservatism in the resulting reachable set over-approximations. Since CMZs are a special case of CPMZs, such enhancements can be incorporated seamlessly without modifying the subsequent reachability propagation steps of our framework.
\end{remark}

\section{Model Based and Data-Driven Reachability for Polynomial Systems}\label{sec:reachnonlinear}

This section extends the proposed algebraic framework to polynomial dynamical systems.
The focus is on exact set-based propagation of reachable sets under polynomial dynamics. The notion of exactness adopted in the model-based polynomial reachability framework follows the same algebraic principle introduced in Section~\ref{sec:preliminaries}, thereby allowing polynomial dynamics to be evaluated at the set level using exactly the same set operations as in the data-driven case.

\begin{remark}[Exactness for polynomial dynamics]
In the polynomial case, exactness is understood in the algebraic sense defined in Section~\ref{sec:preliminaries}, and refers to the direct evaluation of monomials at the set level using CPZ arithmetic.
Accordingly, no interval arithmetic, bounding, or relaxation of polynomial terms is introduced during propagation.
\end{remark}

We consider a discrete-time nonlinear control system described by a polynomial mapping
\begin{align}
    x_{(k+1)} = f_p(x_{(k)},u_{(k)}) + w_{(k)},
    \label{eq:sysnonlin_poly}
\end{align}
where the function $f_p:\mathbb{R}^{n_x}\times\mathbb{R}^{n_u}\rightarrow\mathbb{R}^{n_x}$ is assumed to be polynomial.
For notational convenience, the explicit dependence of the signals on the time index $k$ is omitted whenever it is clear from the context.

Define $n_z = n_x + n_u$ and introduce the stacked vector
\begin{align}
    z=\begin{bmatrix} x^T & u^T \end{bmatrix}^T
    = \begin{bmatrix} z_1^T & \dots & z_{n_z}^T \end{bmatrix}^T
    \in \mathbb{R}^{n_z},
\end{align}
where, with a slight abuse of notation, the components of $z$ collect both the state and input variables.

By a polynomial system, we mean that $f_p(z)\in\mathbb{R}[z]^{n_x}$, i.e., each component of $f_p(z)$ belongs to the space of multivariate polynomials in the variables $z_1,\dots,z_{n_z}$.
In particular, the $i$-th component of $f_p(z)$ admits the representation
\begin{align*}
    f_p^{(i)}(z)
    = \sum_{j=1}^{m_i} \theta_j
    z_1^{\alpha_{j,1}} z_2^{\alpha_{j,2}} \dots z_{n_z}^{\alpha_{j,n_z}},
\end{align*}
where $m_i$ denotes the number of monomial terms in $f_p^{(i)}(z)$, $\theta_j\in\mathbb{R}$ are scalar coefficients, and the exponent vectors
\[
\alpha_j = \begin{bmatrix} \alpha_{j,1} & \dots & \alpha_{j,n_z} \end{bmatrix}^T \in \mathbb{N}_0^{n_z}
\]
satisfy the degree constraint $\sum_{i=1}^{n_z}\alpha_{j,i}\le d$ for all $j\in\{1,\dots,m_i\}$, with $d>0$ denoting the maximum polynomial degree.

Following \cite{conf:polyDissipat}, the polynomial mapping $f_p(z)$ can be written in the compact form
\begin{align}
    f_p(z) = \Theta_{\text{tr}}\, h(z),
    \label{eq:pcg}
\end{align}
where $h(z)\in\mathbb{R}[z]^{m_a}$ is a vector collecting a set of monomials that includes all monomials appearing in $f_p(z)$, and $\Theta_{\text{tr}}\in\mathbb{R}^{n_x\times m_a}$ contains the corresponding (unknown) coefficients. 
The monomial basis $h(z)$ can be constructed, for example, based on a known upper bound on the polynomial degree of $f_p(z)$.
If additional structural information on $f_p(z)$ is available, $h(z)$ may be chosen to consist exactly of the monomials present in the polynomial mapping.

\begin{remark}
The results are presented for polynomial systems using the monomial basis in \eqref{eq:pcg}. The analysis primarily exploits the fact that the unknown coefficients enter linearly through \(\Theta_{\mathrm{tr}}\). Accordingly, the same framework extends directly to linearly parameterized models \(f(z)=\Theta h(z)\) with other choices of known basis functions \(h(\cdot)\), provided that the corresponding set operations can be carried out in the adopted set representation. Moreover, certain nonlinear parameterizations can be handled when they admit an equivalent linear-in-coefficients representation via a suitable reparameterization.
\end{remark}

\begin{remark}
The basis dimension \(m_a\) depends on \(n_z=n_x+n_u\) and the degree bound \(d\). If \(h(z)\) contains all monomials of total degree at most \(d\), then \(m_a=\binom{n_z+d}{d}\). When additional structure is available, \(h(z)\) can be chosen as a reduced monomial set that matches the assumed support of \(f_p(z)\), which lowers \(m_a\) and the associated computational cost.
\end{remark}

In the following, we first present an exact model-based reachable set computation method for polynomial systems.
We then introduce an exact data-driven reachable set computation framework, which computes reachable sets using only data and mild structural assumptions.

\subsection{Model-based reachability analysis}\label{sec:poly}
\begin{proposition}
	(Exact Cartesian Product) Given $\mathcal{P}_1 = \langle c_{1}, G_{1}, E_{1}, A_{1},b_{1}, \CE_{1}, {\id}_{1} \rangle_{CPZ} \linebreak[3] \subset \Rn$ and $\mathcal{P}_2 = \langle c_{2}, G_{1}, E_{1}, A_{1}, b_{1}, \CE_{1} ,{\id}_{2}\rangle_{CPZ} \subset \R^w$, their exact Cartesian product is
	\begin{align}
			\mathcal{P}_1 \boxtimes \mathcal{P}_2 =& \bigg \langle \begin{bmatrix} c_{1} \\ c_{2} \end{bmatrix} \begin{bmatrix} G_{1} & \mathbf{0} \\ \mathbf{0} & G_{1} \end{bmatrix}, \begin{bmatrix} \bar{E}_{1} & \mathbf{0} \\ \mathbf{0} & \bar{E}_{1} \end{bmatrix},\nonumber\\& \quad \begin{bmatrix} A_{1} & \mathbf{0} \\ \mathbf{0} & A_{1} \end{bmatrix},\begin{bmatrix} b_{1} \\ b_{1} \end{bmatrix}, \begin{bmatrix} \bar{\CE}_{1} & \mathbf{0} \\ \mathbf{0} & \bar{\CE}_{1} \end{bmatrix}, \overline {\id}  \bigg \rangle_{\text{CPZ}},
	\end{align}	\label{prop:cartProduct}
where $\overline{E}_{1},\overline{E}_{2},
\overline{R}_{1},\overline{R}_{2}$ and 
$\overline {\id}$ are induced by the 
$\operatorname{mergeID}(\mathcal{P}_1,\mathcal{P}_2)$ operation
as defined in Proposition~\ref{prop:mergeID}.
\end{proposition}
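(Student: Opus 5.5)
I read the statement with the second operand's data as $(c_2,G_2,E_2,A_2,b_2,\CE_2)$; the repeated subscript $1$ looks like a typographical slip. I also read the exponent and constraint-exponent matrices as the merged blocks $[\overline{E}_1~\overline{E}_2]$ and $[\overline{\CE}_1~\overline{\CE}_2]$ over the common identifier $\overline{\id}$. After \operator{mergeID}, both $\overline{E}_1$ and $\overline{E}_2$ have $a$ rows indexed by $\overline{\id}$. A block-diagonal arrangement would only be appropriate when $\id_1$ and $\id_2$ are disjoint, in which case the two forms coincide up to row ordering.

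The proof is a double inclusion, modelled on the proofs of Propositions~\ref{Intersection} and~\ref{prop:multi}. Throughout, $\mathcal{P}_1\boxtimes\mathcal{P}_2$ denotes the set of pairs $\{[p_1^\top~p_2^\top]^\top : p_1\in\mathcal{P}_1,\ p_2\in\mathcal{P}_2\}$ with shared dependent factors kept consistent.

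\textbf{Step 1: common factor vector.} Apply Proposition~\ref{prop:mergeID}. Every $p_1\in\mathcal{P}_1$ and every $p_2\in\mathcal{P}_2$ can be written with a single vector $\hat\alpha\in[-1,1]^a$, as
$p_\ell=c_\ell+\sum_i\big(\prod_{k=1}^a\hat\alpha_{(k)}^{\overline{E}_\ell^{(k,i)}}\big)G_\ell^{(\cdot,i)}$, subject to $\sum_i\big(\prod_k\hat\alpha_{(k)}^{\overline{\CE}_\ell^{(k,i)}}\big)A_\ell^{(\cdot,i)}=b_\ell$, for $\ell=1,2$. Factors that appear in $\mathcal{P}_1$ but not in $\mathcal{P}_2$ occur with zero exponent in $\overline{E}_2$ and $\overline{\CE}_2$, and conversely, so they do not affect the other set.

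\textbf{Step 2: forward inclusion.} Stack the two expressions. The offset becomes $[c_1^\top~c_2^\top]^\top$. Each generator column is padded with zeros in the other block, which gives the block-diagonal generator matrix while keeping the monomial $\prod_k\hat\alpha_{(k)}^{\overline{E}_\ell^{(k,i)}}$ unchanged. Hence the exponent matrix is $[\overline{E}_1~\overline{E}_2]$. Stacking the two constraint equations in the same way gives the block-diagonal $A$, the stacked $b$, and the constraint exponent $[\overline{\CE}_1~\overline{\CE}_2]$. Therefore every admissible pair lies in the right-hand side.

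\textbf{Step 3: reverse inclusion.} Take a point of the right-hand side with its factor vector $\hat\alpha$. Because the generators and constraints are block diagonal, the first $n$ coordinates and the first constraint block involve only the $\mathcal{P}_1$ columns, and the remaining ones involve only the $\mathcal{P}_2$ columns. So the same $\hat\alpha$ certifies $p_1\in\mathcal{P}_1$ and $p_2\in\mathcal{P}_2$, and the point lies in $\mathcal{P}_1\boxtimes\mathcal{P}_2$.

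\textbf{Main obstacle.} The main obstacle is the bookkeeping in Step 1. I must check that the row permutation and zero-padding performed by \operator{mergeID} leave each set unchanged, which Proposition~\ref{prop:mergeID} guarantees. I must also make explicit that shared identifiers are intentionally linked. With disjoint identifiers the construction gives the ordinary Cartesian product. With shared identifiers it gives the dependency-preserving product. The latter is exactly the notion needed to form $\hat{\mathcal{R}}_k\times\mathcal{U}_k$ and the monomial compositions used later.
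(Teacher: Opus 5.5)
Your proposal is correct and follows the paper's route. The paper's proof likewise uses \operator{mergeID} to place both sets on the common identifier $\overline{\id}$ without changing them, then substitutes the merged representations into the Cartesian-product construction; your double inclusion, with $\boxtimes$ taken as the dependency-preserving product, just spells out what the paper leaves implicit. Your horizontal concatenations $[\overline{E}_1~\overline{E}_2]$ and $[\overline{\CE}_1~\overline{\CE}_2]$ are the dimensionally consistent reading, since the paper's block-diagonal arrangement of $\overline{E}_1,\overline{E}_2$ has $2a$ rows against an identifier of length $a$, and this shared-factor form is what the monomial stacking $\mathcal{M}_{k,1}\boxtimes\dots\boxtimes\mathcal{M}_{k,m_a}$ in the polynomial algorithms requires (one small imprecision on your side: for disjoint identifiers your form equals the block-diagonal of the unmerged $E_1,E_2$, so it matches the paper's only after deleting $a$ all-zero rows, not merely up to row ordering).
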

\begin{proof}
By Proposition~\ref{prop:mergeID}, the operator $\operatorname{mergeID}$ does not alter the
underlying sets but only aligns their identifiers. Consequently,
$\mathcal{P}_1=\bar{\mathcal{P}}_1$ and $\mathcal{P}_2=\bar{\mathcal{P}}_2$, and
both resulting representations share the identifier vector $\overline {\id}$.

Substituting the representations of $\bar{\mathcal{P}}_1$ and $\bar{\mathcal{P}}_2$
into the definition of the Cartesian product of CPZs yields
\begin{align*}
\mathcal{P}_1 \boxtimes \mathcal{P}_2
=&\bigg \langle \begin{bmatrix} c_{1} \\ c_{2} \end{bmatrix} \begin{bmatrix} G_{1} & \mathbf{0} \\ \mathbf{0} & G_{2} \end{bmatrix}, \begin{bmatrix} E_{1} & \mathbf{0} \\ \mathbf{0} & E_{2} \end{bmatrix},\\& \begin{bmatrix} A_{1} & \mathbf{0} \\ \mathbf{0} & A_{2} \end{bmatrix}, \begin{bmatrix} b_{1} \\ b_{2} \end{bmatrix}, \begin{bmatrix} \CE_{1} & \mathbf{0} \\ \mathbf{0} & \CE_{2} \end{bmatrix}  \bigg \rangle_\text{CPZ},\\=&\bigg\langle
\begin{bmatrix} c_{1} \\ c_{2} \end{bmatrix},
\begin{bmatrix} G_{1} & \mathbf{0} \\ \mathbf{0} & G_{2} \end{bmatrix},
\begin{bmatrix} \bar{E}_{1} & \mathbf{0} \\ \mathbf{0} & \bar{E}_{2} \end{bmatrix}, \\&
\begin{bmatrix} A_{1} & \mathbf{0} \\ \mathbf{0} & A_{2} \end{bmatrix},
\begin{bmatrix} b_{1} \\ b_{2} \end{bmatrix},
\begin{bmatrix} \bar{\CE}_{1} & \mathbf{0} \\ \mathbf{0} & \bar{\CE}_{2} \end{bmatrix},
\overline {\id}
\bigg\rangle_{\text{CPZ}}.
\end{align*}
which concludes the proof.
\end{proof}

We now derive an exact representation for the Cartesian product of two CPZs. This operation constitutes a fundamental building
block for subsequent set-based computations involving polynomial
dynamics.

\begin{proposition}(Exact Multiplication between CPZs) \label{prop:multi1}  
Given $\mathcal{P}_1 = \langle c_{1}, G_{1}, E_{1}, A_{1},b_{1}, \CE_{1} \rangle_{CPZ} \linebreak[3] \subset \Rn$ and $\mathcal{P}_2 = \langle c_{2}, G_{2}, E_{2}, A_{2}, b_{2}, \CE_{2} \rangle_{CPZ} \subset \R^n$, the following identity holds
\begin{multline}
\mathcal{P}_1 \odot \mathcal{P}_2 =  \bigzono{c_{1}\odot c_{2},\begin{bmatrix} G_{1}\odot c_{2} & c_{1} \odot G_{2}& G_{f} \end{bmatrix} ,\\ \hat E, 
    \hat A, \hat {B} , \hat R ,\hat  \id}_\text{CPZ},  \label{eq:cpzcpz}
\end{multline}
where $\mathcal{P}_1\otimes \mathcal{P}_2 \subset \R^{n_x}$ and
\begin{align*}
    \hat A &= \begin{bmatrix} A_{1} & \mathbf{0} \\
     \mathbf{0} & A_{2}
    \end{bmatrix},
    \hat {B} =\begin{bmatrix} b_{1} \\  b_{2}\end{bmatrix} \\
    \hat E &= \bigg[ \overline{E}_{1}, \overline{E}_{2}, 
    \Big[\overline{E}_{1}^{(\cdot,1)} + \overline{E}_{2}^{(\cdot,1)}\Big],
    \dots,
    \Big[\overline{E}_{1}^{(\cdot,1)} + \overline{E}_{2}^{(\cdot,h_{2})}\Big], \\
    & \qquad \dots, \Big[\overline{E}_{1}^{(\cdot,h_{1})} + \overline{E}_{2}^{(\cdot,1)}\Big],
    \dots, \Big[\overline{E}_{1}^{(\cdot,h_{1})}+\overline{E}_{2}^{(\cdot,h_{2})}\Big] \bigg]  
    \\
    \hat {R} &= \begin{bmatrix} \overline{R}_{1} & \overline{R}_{2} \end{bmatrix} \\
    G_{f} &= \begin{bmatrix} g_{f}^{(1)} & \dots & g_{f}^{(h_{1}h_{2})} \end{bmatrix}
\end{align*}
with $\overline{E}_{1}\in\R^{a \times h_{1}}$, $\overline{E}_{2}\in\R^{a \times h_{2}}$, $\overline{R}_{1}\in\R^{a \times q_{1}}$, $\overline{R}_{2}\in\R^{a \times q_{2}}$ and 
$\hat {\id}\in\N^{1 \times a }$ are induced by the 
$\operatorname{mergeID}(\mathcal{P}_1,\mathcal{P}_2)$ operation
as defined in Proposition~\ref{prop:mergeID}.
And
\begin{displaymath}
    g_{f}^{(k)} =  G^{(\cdot,i)}_{1} \odot G_{2}^{(\cdot,j)}, \quad k=h_{2} (i-1) + j,
\end{displaymath}
for $i=1,\dots,h_{1}$,\;$j=1,\dots,h_{2}$.
\end{proposition}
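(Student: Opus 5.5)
The plan is to mirror the proof of Proposition~\ref{prop:multi}, with the matrix--vector product replaced by the elementwise (Hadamard) product $\odot$. The set on the left is understood as $\{p_1\odot p_2 \mid p_1\in\mathcal{P}_1,\ p_2\in\mathcal{P}_2\}$, where both points are generated by one common factor vector after identifiers are aligned. First I apply $\operator{mergeID}(\mathcal{P}_1,\mathcal{P}_2)$ from Proposition~\ref{prop:mergeID}. This changes neither set, and it lets $p_1$ and $p_2$ be written with a single factor vector $\hat{\alpha}_{([1:a])}\in[-1,1]^a$ and exponent matrices $\overline{E}_1,\overline{E}_2,\overline{R}_1,\overline{R}_2$. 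Factors carrying the same identifier in both sets are then the same variable, which is what makes the product exact rather than an over-approximation.

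For the inclusion ``$\subseteq$'', I take $p_1\in\mathcal{P}_1$ and $p_2\in\mathcal{P}_2$ with the shared $\hat\alpha$ and expand $p_1\odot p_2$ using bilinearity of $\odot$. This gives four groups of terms:
\begin{itemize}
\item the centre $c_1\odot c_2$;
\item the terms $\big(\prod_k\hat\alpha_{(k)}^{\overline{E}_1^{(k,i)}}\big)\,G_1^{(\cdot,i)}\odot c_2$;
\item the terms $\big(\prod_k\hat\alpha_{(k)}^{\overline{E}_2^{(k,j)}}\big)\,c_1\odot G_2^{(\cdot,j)}$;
\item the cross terms $\big(\prod_k\hat\alpha_{(k)}^{\overline{E}_1^{(k,i)}}\big)\big(\prod_k\hat\alpha_{(k)}^{\overline{E}_2^{(k,j)}}\big)\,G_1^{(\cdot,i)}\odot G_2^{(\cdot,j)}$.
\end{itemize}
Because scalars commute with $\odot$, each cross term can be combined into the single monomial $\prod_k\hat\alpha_{(k)}^{\overline{E}_1^{(k,i)}+\overline{E}_2^{(k,j)}}$. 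Ordering the pairs $(i,j)$ by $k=h_2(i-1)+j$ then reproduces exactly the columns of $\hat E$ and $G_f$. The constraints of $\mathcal{P}_1$ and $\mathcal{P}_2$, rewritten in $\hat\alpha$ via $\overline{R}_1,\overline{R}_2$, stack into the block-diagonal system $\hat A$, $\hat B$ with exponent matrix $\hat R=[\overline R_1\ \overline R_2]$. So the same $\hat\alpha$ certifies that $p_1\odot p_2$ lies in the right-hand side of \eqref{eq:cpzcpz}.

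For the inclusion ``$\supseteq$'', I take a point of the right-hand side with a feasible $\hat\alpha\in[-1,1]^a$. I use the same $\hat\alpha$ to define $p_1$ and $p_2$ through $\overline{\mathcal{P}}_1$ and $\overline{\mathcal{P}}_2$. The block rows of $\hat A\,(\cdot)=\hat B$ are exactly the constraints of each set, so $p_1\in\mathcal{P}_1$ and $p_2\in\mathcal{P}_2$. Reading the expansion above backwards shows that the given point equals $p_1\odot p_2$.

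The main obstacle is the bookkeeping that makes both directions a genuine equality. The polynomial expansion must match the column ordering of $\hat E$ and $G_f$, and the factor vector must be the shared $\hat\alpha\in[-1,1]^a$, not a lifted vector of monomials. Keeping one common $\hat\alpha$ throughout, and never introducing new independent factors, is the step I would check most carefully. Doing so avoids the independence issue and keeps the result algebraically exact in the sense of Remark~\ref{rem:exactness}.
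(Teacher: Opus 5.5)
Your proposal is correct and follows the paper's proof essentially step for step: \operator{mergeID} to obtain a common factor vector $\hat\alpha$, bilinear expansion of $p_1\odot p_2$ into centre, linear and cross terms with merged exponents $\overline{E}_1^{(\cdot,i)}+\overline{E}_2^{(\cdot,j)}$, block-diagonal stacking of the constraints via $\overline{R}_1,\overline{R}_2$, and the converse by reusing the same feasible $\hat\alpha$. Your explicit reading of the left-hand side as the dependency-preserving product is necessary, not just a notational choice: the paper defines $S_1\odot S_2$ with independently chosen $s_1,s_2$ and then asserts a common $\hat\alpha$, which is only valid when the identifiers of $\mathcal{P}_1$ and $\mathcal{P}_2$ are disjoint. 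For example, with $\mathcal{P}_1=\mathcal{P}_2=\{\alpha_{(1)}\mid \alpha_{(1)}\in[-1,1]\}$ carrying the same identifier, the formula gives $\{\alpha_{(1)}^2\mid \alpha_{(1)}\in[-1,1]\}=[0,1]$, whereas the independent product is $[-1,1]$.
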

\begin{proof}
Let $\widehat{\mathcal{P}}$ denote the right-hand side of \eqref{eq:cpzcpz}. We prove
$\mathcal{P}_1 \otimes \mathcal{P}_2 \subseteq \widehat{\mathcal{P}}$ and $\widehat{P} \subseteq \mathcal{P}_1 \otimes \mathcal{P}_2$.
Take arbitrary $p_1 \in \mathcal{P}_1$ and $p_2 \in \mathcal{P}_2$. By applying
$\operatorname{mergeID}(\mathcal{P}_1,\mathcal{P}_2)$, $p_1$ and $p_2$ admit
representations with a common identifier vector $\hat \id$, i.e.,
there exists $\hat{\alpha}([1:a])$ such that
\begin{align}
p_1 &= c_{\mathcal{P} _1} + \sum_{i=1}^{h_{1}}
\Big( \prod_{k=1}^{a} \hat{\alpha}_{(k)}^{\overline{E}_{1}(k,i)} \Big)\, G_{1}^{(\cdot,i)}, \label{eq:p1-merged1}\\
p_2 &= c_{P_2} + \sum_{i=1}^{h_{P_2}}
\Big( \prod_{k=1}^{a} \hat{\alpha}_{(k)}^{\overline{E}_{2}(k,i)} \Big)\, G_{2}^{(\cdot,i)}, \label{eq:p2-merged1}
\end{align}
where $a = |\mathcal{H}|+h_{1}$, $\mathcal{H} = \left\{ i~ |~ \id_{{\mathcal{P}_1}(i)} \not\in \id_{2} \right\}$, and the constraints are satisfied simultaneously. From \eqref{mergetwoexact}, we have
    \begin{multline}
    \sum_{i=1}^{q_{1}} \bigg( \prod_{k=1}^{h_{1}}
(\alpha_{1}^{(k)})^{R_{1}^{(k,i)}} \bigg)A_{1}^{{(\cdot,i)}} = \\
    \sum_{i=1}^{q_{1}} \bigg( \prod_{k=1}^{a}
    \hat{\alpha}_{(k)}^{\overline{R}_{1}^{(k,i)}} \bigg)A_{1}^{{(\cdot,i)}}=b_{1}.
    \label{constraint21}
    \end{multline}
    and
    \begin{multline}
    \sum_{i=1}^{q_{2}} \bigg( \prod_{k=1}^{h_{2}}
(\alpha_{2}^{(k)})^{R_{2}^{(k,i)}} \bigg)A_{2}^{{(\cdot,i)}} = \\
    \sum_{i=1}^{q_{2}} \bigg( \prod_{k=1}^{a}
    \hat{\alpha}_{(k)}^{\overline{R}_{2}^{(k,i)}} \bigg)A_{2}^{{(\cdot,i)}}=b_{2}.
    \label{constraint22}
    \end{multline}
Given two sets $S_1, S_2 \subset \mathbb{R}^n$, define the operation as follows:
Element-wise product:
\[
S_1 \odot S_2 := \{\, s_1 \odot s_2 \mid s_1 \in S_1,\ s_2 \in S_2 \,\},
\]
Then, expanding the Hadamard product yields
   \begin{multline}
    p_1 \odot p_2 = c_{1} \odot c_{2} +  \sum_{i=1}^{h_{1}}\bigg( \prod_{k=1}^{a}
    \hat{\alpha}_{(k)}^{\overline{E}_{1}^{(k,i)}} \bigg)  G_{1}^{(\cdot,i)} \odot c_{2} \\
    + c_{1} \odot \sum_{i=1}^{h_{2}} \bigg( \prod_{k=1}^{a}
    \hat{\alpha}_{(k)}^{\overline{E}_{2}^{(k,i)}} \bigg)  G_{2}^{(\cdot,i)} \\
    + \sum_{i=1}^{h_{1}} \sum_{j=1}^{h_{2}} \bigg( \prod_{k=1}^{a}
    \hat{\alpha}_{(k)}^{\overline{E}_{1}^{(k,i)}} \bigg) \bigg( \prod_{k=1}^{a}
    \hat{\alpha}_{(k)}^{\overline{E}_{2}^{(k,j)}} \bigg)  G_{1}^{(\cdot,i)} \odot  G_{2}^{(\cdot,j)} . \label{eq:P_details1}
    \end{multline}
    For the second and third terms on the right-hand side of \eqref{eq:P_details1}, we have two sets of factors, each containing $h_{1}$ and $h_{2}$ elements. Consequently, the first $h_{1} + h_{2}$ entries of $\hat{\alpha}$ and the columns of $\hat E$ are defined accordingly as follows:
    \begin{align}
    \hat{\alpha}_{([1:h_{1} + h_{2}])} &=\bigg[\prod_{k=1}^{a}
    \hat{\alpha}_{(k)}^{\overline{E}_{1}^{(k,1)}}, \dots,\prod_{k=1}^{a}
    \hat{\alpha}_{(k)}^{\overline{E}_{1}^{(k,h_{1})}} \nonumber\\
    & \hspace{1cm} 
    \prod_{k=1}^{a} \hat{\alpha}_{(k)}^{\overline{E}_{2}^{(k,1)}}, \dots, 
    \prod_{k=1}^{a} \hat{\alpha}_{(k)}^{\overline{E}_{2}^{(k,h_{2})}}\bigg] \\ \hat {E}^{([1:h_{1} + h_{2}])} &= \Big[ \overline{E}_{1},\overline{E}_{2}\Big].
\end{align} 
Because of $\operator{mergeID}(\mathcal{P}_1,\mathcal{P}_2)$, the fourth term on the right-hand side of \eqref{eq:P_details1} can be expressed as:
\begin{multline}
\sum_{i=1}^{h_{1}} \sum_{j=1}^{h_{2}} \bigg( \prod_{k=1}^{a}
  \hat{\alpha}_{(k)}^{\overline{E}_{1}^{(k,i)}}  
  \hat{\alpha}_{(k)}^{\overline{E}_{2}^{(k,j)}} \bigg)  G_{1}^{(\cdot,i)} \odot G_{2}^{(\cdot,j)}= \\
  \sum_{i=1}^{h_{1}} \sum_{j=1}^{h_{2}} \bigg( \prod_{k=1}^{a}
  \hat{\alpha}_{(k)}^{\overline{E}_{1}^{(k,i)}+  
  \overline{E}_{2}^{(k,j)}} \bigg)  G_{1}^{(\cdot,i)}  \odot G_{2}^{(\cdot,j)}. \label{1:n+p1}
\end{multline} 
Concatenating the factors in \eqref{1:n+p1}, we have
\begin{multline*}
\hat{\alpha}_{([h_{1}+h_{2}+1:h_{1}+h_{2}+h_{1} h_{2}])} \\ 
=\bigg[\prod_{k=1}^{a}
  \hat{\alpha}_{(k)}^{\overline{E}_{1}^{(k,1)}+  
  \overline{E}_{2}^{(k,1)}}, \dots, \prod_{k=1}^{a}
  \hat{\alpha}_{(k)}^{\overline{E}_{1}^{(k,1)}+  
  \overline{E}_{2}^{(k,h_{2})}}, \dots, \\
  \prod_{k=1}^{a}
  \hat{\alpha}_{(k)}^{\overline{E}_{1}^{(k,h_{1})}+  \overline{E}_{2}^{(k,1)}}, \dots,\prod_{k=1}^{a}
  \hat{\alpha}_{(k)}^{\overline{E}_{1}^{(k,h_{1})}+  \overline{E}_{2}^{(k,h_{2})}}\bigg],
\end{multline*}
 which results in $\hat E^{([h_{1}+h_{2}+1:h_{1}+h_{2}+h_{1} h_{2}])}$ and $G_{f}$ as follows:
 \begin{align*}
    &\hat E^{([h_{1}+h_{2} + 1 : h_{1}+h_{2} + h_{1} h_{2}])}= \bigg[  
    \Big[\overline{E}_{1}^{(\cdot,1)} + \overline{E}_{2}^{(\cdot,1)}\Big],
    \dots, \\
    &
    \Big[\overline{E}_{1}^{(\cdot,1)}  + \overline{E}_{2}^{(\cdot,h_{2})}\Big], 
    \dots, \Big[\overline{E}_{1}^{(\cdot,h_{1})} + \overline{E}_{2}^{(\cdot,1)}\Big],\dots, \\
    &\Big[\overline{E}_{1}^{(\cdot,h_{1})}+\overline{E}_{2}^{(\cdot,h_{2})}\Big] \bigg] \\
    &G_{f} = \bigg[ G^{(\cdot,1)}_{1} G_{2}^{(\cdot,1)},{...}, G^{(\cdot,1)}_{1} G_{2}^{(\cdot,h_{2})},{...},G^{(\cdot, h_{1})}_{\mathcal{Y}} G_{\mathcal{P}}^{(\cdot,h_{2})} \bigg].
    \end{align*}
Thus, $p_1 \odot p_2 \in \hat{\mathcal{P}}$ and therefore $\mathcal{P}_1 \odot \mathcal{P}_2 \subseteq \hat{\mathcal{P}}$. 
    Conversely, let $\hat{p} \in \hat{\mathcal{P}}$, then there exists $ \hat{\alpha}_{([1:{h_{1}+h_{2}+h_{1} h_{2}}])}:$
    \begin{align*}
      \hat p =\nonumber\hat c + \sum_{i=1}^{h_{1}+h_{2}+h_{1} h_{2}}\bigg( \prod_{k=1}^{a}\hat{\alpha}_{(k)}^{\hat {E}^{(k,i)}} \bigg)  G_{f}^{(i)}.
    \end{align*}
    By partitioning
    \begin{multline*}
    \hat{\alpha}_{([1:{h_{1}+h_{2}+h_{1} h_{2}}])}=\bigg[\prod_{k=1}^{a}
    \hat{\alpha}_{(k)}^{\overline{E}_{1}^{(k,[1:h_{1}])}},\prod_{k=1}^{a}
    \hat{\alpha}_{(k)}^{\overline{E}_{2}^{(k,[1:h_{2}])}},\\
    \prod_{k=1}^{a}
    \hat{\alpha}_{(k)}^{\overline{E}_{1}^{(k,1)}+  
    \overline{E}_{2}^{(k,1)}}, \dots,\prod_{k=1}^{a}
    \hat{\alpha}_{(k)}^{\overline{E}_{1}^{(k,1)}+  
    \overline{E}_{2}^{(k,h_{2})}},\dots,\\
    \prod_{k=1}^{a}
    \hat{\alpha}_{(k)}^{\overline{E}_{1}^{(k,h_{1})}+  \overline{E}_{2}^{(k,1)}}, \dots,\prod_{k=1}^{a}
    \hat{\alpha}_{(k)}^{\overline{E}_{1}^{(k,h_{1})}+  \overline{E}_{2}^{(k,h_{2})}}\bigg]  
    \label{eq:beta_c1_3}
    \end{multline*}
    it follows that there exist $p_1\in \mathcal{P}_1$ and $p_2 \in \mathcal{P}_2$ such that $\hat{p} = p_1 \odot p_2$.
    Meanwhile, since $\hat{p}\in\hat{\mathcal{P}}$, it holds that
    \begin{align}
    \sum_{i=1}^{q_{2}+q_{1}} \bigg( \prod_{k=1}^{a}
    \hat{\alpha}_{(k)}^{\hat {R}^{(k,i)}} \bigg)\hat A^{{(\cdot,i)}}=\begin{bmatrix}  b_{1} \\  b_{2}\end{bmatrix}
    \end{align}
    which satisfies the constraints in \eqref{constraint3}. Therefore, $\hat{p} \in \mathcal{P}_1 \odot \mathcal{P}_2$ and thus $ \hat{\mathcal{P}} \subseteq \mathcal{P}_1  \odot  \mathcal{P}_2$.
\end{proof}

\Comp The quantities \(\hat B\), \(\hat A\), \(\hat E\), \(c_{1}\odot c_{2}\), \(G_{1}\odot c_{2}\), \(c_{1}\odot G_{2}\), and \(G_f\) are obtained using standard block concatenation, Hadamard products, and matrix multiplications. In particular, \(\hat B\) has complexity \(\mathcal{O}(m_{1}+m_{2})\), \(\hat A\) has complexity \(\mathcal{O}((m_{1}+m_{2})(q_{1}+q_{2}))\), \(\hat E\) has complexity \(\mathcal{O}((|\mathcal{H}|+h_{1})h_{1}h_{2})\), \(c_{1}\odot c_{2}\) has complexity \(\mathcal{O}(n)\), \(G_{1}\odot c_{2}\) and \(c_{1}\odot G_{2}\) have complexities \(\mathcal{O}(n h_{1})\) and \(\mathcal{O}(n h_{2})\), respectively, and \(G_f\) has complexity \(\mathcal{O}(n h_{1}h_{2})\). Therefore, the exact multiplication in \(\eqref{eq:cpzcpz}\) has overall complexity \(\mathcal{O}\!\bigl((m_{1}+m_{2})(q_{1}+q_{2})+(|\mathcal{H}|+h_{1})h_{1}h_{2}+n(h_{1}+h_{2})+n h_{1}h_{2}\bigr)\). \hfill \(\diamond\)

\begin{proposition}[Subset projection]
\label{prop:coordinate_extraction}
Let $\mathcal{Z} \subset \mathbb{R}^n$ be a (constrained polynomial) zonotope describing a set of vectors
\[
x = \begin{bmatrix} x_1 & x_2 & \dots & x_n \end{bmatrix}^T.
\]
Let $\mathcal{I} = \{ i_1, i_2, \dots, i_k \} \subseteq \{1,\dots,n\}$ be an arbitrary index set with $k \leq n$.
Define the linear map
\[
E_{\mathcal{I}} : \mathbb{R}^n \rightarrow \mathbb{R}^k,
\qquad
E_{\mathcal I} x = \big[\, x_{i_1},\, x_{i_2},\,\ldots,\,x_{i_k}\,\big]^\top .
\]
Equivalently, $E_{\mathcal{I}}$ can be represented by a matrix in $\mathbb{R}^{k\times n}$ obtained by selecting the rows of the identity matrix $I_n$ indexed by $\mathcal{I}$.
Then, the image of $\mathcal{Z}$ under $E_{\mathcal{I}}$ is given by
\[
\mathcal{Z}_{\mathcal{I}} = E_{\mathcal{I}}(\mathcal{Z})
= \left\{ E_{\mathcal{I}}x \in \mathbb{R}^k \;\middle|\; x \in \mathcal{Z} \right\},
\]
that is, $\mathcal{Z}_{\mathcal{I}}$ consists exactly of all subvectors of $x$ formed by the components indexed by $\mathcal{I}$.
Moreover, if $\mathcal{Z}$ is a constrained polynomial zonotope, then $\mathcal{Z}_{\mathcal{I}}$ is also a constrained polynomial zonotope of dimension $k$, and no approximation is introduced by this operation.
\end{proposition}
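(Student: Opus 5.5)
The plan is to prove the statement by direct substitution into the CPZ definition \eqref{eq:con-poly-zono}, using only that $E_{\mathcal{I}}$ is linear. Write $\mathcal{Z}=\zono{c,G,E,A,b,R,\id}_\text{CPZ}$. The candidate representation of the image is
\[
\mathcal{Z}_{\mathcal{I}} = \zono{E_{\mathcal{I}}c,\, E_{\mathcal{I}}G,\, E,\, A,\, b,\, R,\, \id}_\text{CPZ}.
\]
Here the offset and generator matrix are replaced by their row-selected versions $E_{\mathcal{I}}c\in\R^k$ and $E_{\mathcal{I}}G\in\R^{k\times h}$. The exponent matrix $E$, the constraint data $A,b,R$ and the identifier $\id$ are kept unchanged. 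The first claim, that $\mathcal{Z}_{\mathcal{I}}$ consists exactly of the subvectors $[x_{i_1},\dots,x_{i_k}]^\top$, holds by definition of $E_{\mathcal{I}}$ as the row selection of $I_n$.

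First, for $\subseteq$: take $y\in E_{\mathcal{I}}(\mathcal{Z})$, so $y=E_{\mathcal{I}}x$ for some $x\in\mathcal{Z}$. Pick factors $\alpha\in[-1,1]^p$ that satisfy the constraint $\sum_{i}\big(\prod_k\alpha_{(k)}^{R_{(k,i)}}\big)A_{(\cdot,i)}=b$ and give $x=c+\sum_i\big(\prod_k\alpha_{(k)}^{E_{(k,i)}}\big)G_{(\cdot,i)}$. Linearity of $E_{\mathcal{I}}$ moves it inside the finite sum, giving $y=E_{\mathcal{I}}c+\sum_i\big(\prod_k\alpha_{(k)}^{E_{(k,i)}}\big)E_{\mathcal{I}}G_{(\cdot,i)}$. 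The same $\alpha$ satisfies the unchanged constraint, so $y$ lies in the candidate CPZ.

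Second, for $\supseteq$: take $y$ in the candidate CPZ with witnessing factors $\alpha$. Use the same $\alpha$ to build the corresponding $x\in\mathcal{Z}$; this $x$ satisfies the constraint and $E_{\mathcal{I}}x=y$.

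Together these give set equality. Hence the image is again a CPZ of dimension $k$, and it is obtained without approximation, exact in the sense of Remark~\ref{rem:exactness}. No \operator{mergeID} step is needed because only one set is involved. Keeping $\id$ also preserves dependencies with other sets sharing factors, which matters when the projection is later combined with Propositions~\ref{prop:multi1} or~\ref{prop:exactAddition}.

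There is no real obstacle here. The one point to state carefully is that the constraint system depends only on the factors $\alpha$, not on the ambient coordinates, so projecting the generators leaves the feasible factor domain untouched. Finally, note the special cases. For a zonotope or polynomial zonotope the constraints are empty and the same argument applies verbatim. Zero columns of $E_{\mathcal{I}}G$ may be removed, together with the corresponding columns of $E$, without changing the set.
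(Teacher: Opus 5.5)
Your proof is correct and essentially the only natural one. The paper gives no proof for this proposition; it is implicitly the standard closure of CPZs under linear maps from \cite{kochdumper2023constrained}. Your two-inclusion verification of $E_{\mathcal I}(\mathcal{Z})=\zono{E_{\mathcal I}c,\,E_{\mathcal I}G,\,E,\,A,\,b,\,R,\,\id}_\text{CPZ}$ is exactly that standard argument. Your observation that $\id$ must be kept unchanged is also worth stating: it is what allows the coordinate sets $E_\ell(\mathcal{Z}_k)$ in Algorithm~\ref{alg:PolyReachability} to be multiplied via Proposition~\ref{prop:multi1} without losing their mutual dependence.
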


Based on the results established above, Algorithm~\ref{alg:PolyReachability}
summarizes the proposed model-based reachable set computation method
for polynomial systems.
The algorithm implements the exact set propagation of the polynomial dynamics
by evaluating the monomial vector $h(\cdot)$ at the set level and applying the
corresponding linear mapping.
The reachable sets computed by
Algorithm~\ref{alg:PolyReachability} are exact and do not rely on outer approximations.

\subsection{Data-driven Reachability Analysis for Polynomial System}
Similarly to the definition of $\mathcal{N}_\Sigma$ in \eqref{eq:Nsig}, we denote the set of unknown coefficients consistent with the data, including the true coefficients $\Theta_{\text{tr}}$, by $\mathcal{N}^{\Sigma_p}$. 
From \eqref{eq:pcg}, let
\begin{align*}
\Omega=\begin{bmatrix} h(x(0),u(0)) \, \dots \, h(x(T-1),u(T-1))\end{bmatrix}
\end{align*}
Then, the following result computes a set of coefficients that is consistent with the data and includes the true coefficients~$\Theta_{\text{tr}}$.
\begin{algorithm}[t!]
  \caption{Model-based Polynomial Reachable Set Computation}
  \label{alg:PolyReachability}
  \textbf{Input}: coefficient matrix $\Theta_{\text{tr}}$ of the polynomial system in \eqref{eq:sysnonlin_poly}, initial set $\mathcal{X}_{0}$, process noise zonotope $\mathcal{Z}_w$, and input zonotope $\mathcal{U}_k$, $\forall  k \in \{0,1,\dots,N-1\}$. \\
  \textbf{Output}: reachable sets $\hat{\mathcal{R}}_{k}^p$, $\forall  k \in \{0,1,\dots,N\}$
 \begin{algorithmic}[1]
  \State $\hat{\mathcal{R}}_{0}^p = \mathcal{X}_{0}$
  \For{$k = 0:N-1$}
    \State $\mathcal{Z}_k = \hat{\mathcal{R}}_k^p \boxtimes \mathcal{U}_k$
    \For{$\ell = 1:n_z$}
        \State $\mathcal{Z}_{k,\ell} = E_\ell(\mathcal{Z}_k)$
    \EndFor
    \For{$j = 1:m_a$}
        \State $\mathcal{M}_{k,j} = \{1\}$
        \For{$\ell = 1:n_z$}
            \State $\mathcal{M}_{k,j} = \mathcal{M}_{k,j} \odot (\mathcal{Z}_{k,\ell})^{\alpha^{(j)}_\ell}$
        \EndFor
    \EndFor
    \State $h(\mathcal{Z}_k) = \mathcal{M}_{k,1} \boxtimes \mathcal{M}_{k,2} \boxtimes \dots \boxtimes \mathcal{M}_{k,m_a}$
    \State $\hat{\mathcal{R}}_{k+1}^p = \Theta_{\text{tr}} \, h(\mathcal{Z}_k) + \mathcal{Z}_w$
  \EndFor
 \end{algorithmic}
\end{algorithm}

\begin{algorithm}[t!]
\caption{Data-driven Polynomial Reachable Set Computation with Online Update and Intersection}
\label{alg:DDPolyReachability_update_intersect}
\textbf{Input:}
Initial input--state data $D_0=(X_0^+,X_0^-,U_0^-)$,
initial set $\mathcal{X}_0$,
process noise zonotope $\mathcal{Z}_w$,
matrix zonotope $\mathcal{M}_w$, batch size $L \ge m_a$, input zonotopes $\mathcal{U}_k$, $\forall k \in \{0,1,\dots,N-1\}$. \\
\textbf{Output:}
Reachable sets $\hat{\mathcal{R}}_k^p$, $\forall  k \in \{0,1,\dots,N\}$
\begin{algorithmic}[1]
\State $\hat{\mathcal{R}}_0^p = \mathcal{X}_0$
\State $\Omega_0 = \begin{bmatrix} h(x(0),u(0)) & \dots & h(x(T_0-1),u(T_0-1)) \end{bmatrix}$
\State $\mathcal{M}_0^{\Sigma_p} = (X_0^+ - \mathcal{M}_w)\,\Omega_0^\dagger$
\State $\hat{\mathcal{M}}^{\Sigma_p} = \mathcal{M}_0^{\Sigma_p}$
\State $X^+ = [~],\;\Omega = [~]$
\For{$k = 0:N-1$}
    \State $X^+ = [X^+~ x_{(k)}]$
    \State $\Omega = [\Omega~ h(x_{(k-1)},u_{(k-1)})]$
    \If{$\mathrm{cols}(\Omega)\ge L$ \textbf{and} $\mathrm{rank}(\Omega)=m_a$}
        \State $\mathcal{M}^{\Sigma_p} = (X^+ - \mathcal{M}_w)\,\Omega^\dagger$
        \State $\hat{\mathcal{M}}^{\Sigma_p} = \hat{\mathcal{M}}^{\Sigma_p} \cap \mathcal{M}^{\Sigma_p}$
        \State $X^+ = [~],\;\Omega = [~]$
    \EndIf
    \State $\mathcal{Z}_k = \hat{\mathcal{R}}_k^p \boxtimes \mathcal{U}_k$
    \For{$\ell = 1:n_z$}
        \State $\mathcal{Z}_{k,\ell} = E_\ell(\mathcal{Z}_k)$
    \EndFor
    \For{$j = 1:m_a$}
        \State $\mathcal{M}_{k,j}^{\Sigma_p} = \{1\}$
        \For{$\ell = 1:n_z$}
            \State $\mathcal{M}_{k,j}^{\Sigma_p} = \mathcal{M}_{k,j}^{\Sigma_p} \odot (\mathcal{Z}_{k,\ell})^{\alpha^{(j)}_\ell}$
        \EndFor
    \EndFor
    \State $h(\mathcal{Z}_k) = \mathcal{M}_{k,1}^{\Sigma_p} \boxtimes \mathcal{M}_{k,2}^{\Sigma_p} \boxtimes \dots \boxtimes \mathcal{M}_{k,m_a}^{\Sigma_p}$
    \State $\hat{\mathcal{R}}_{k+1}^p = \hat{\mathcal{M}}^{{\Sigma_p}}\, h(\mathcal{Z}_k) + \mathcal{Z}_w$
\EndFor
\end{algorithmic}
\end{algorithm}



\begin{lemma}[\cite{Alanwar2023Datadriven}]
\label{lm:sigmaM_p}
Given a matrix $\Omega$ of the polynomial system in \eqref{eq:sysnonlin_poly} with a full row rank, then the matrix zonotope 
\begin{align}
    \mathcal{M}^{\Sigma_p} = (X_{+} - \mathcal{M}_w) \Omega^\dagger \label{eq:Msigma}
\end{align} 
 contains all matrices $\Theta$ that are consistent with the data and the noise bound, i.e., $\mathcal{M}^{\Sigma_p} \supseteq \mathcal{N}^{\Sigma_p}$. 
\end{lemma}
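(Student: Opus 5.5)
The argument follows the linear case in Lemma~\ref{lm:sigmaM1}. Take an arbitrary $\Theta \in \mathcal{N}^{\Sigma_p}$ and show $\Theta \in \mathcal{M}^{\Sigma_p}$.

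By the definition of $\mathcal{N}^{\Sigma_p}$, which is analogous to \eqref{eq:Nsig} with the regressor $\begin{bmatrix} X_0^{-\top} & U_0^{-\top}\end{bmatrix}^\top$ replaced by $\Omega$, there exists a noise realization $W^- \in \mathcal{M}_w$ such that
\begin{equation*}
X_+ = \Theta\,\Omega + W^- .
\end{equation*}
Rearranging gives $\Theta\,\Omega = X_+ - W^-$.

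The key step is to use the full row rank of $\Omega \in \R^{m_a \times T}$. Full row rank means $\Omega\Omega^\top$ is invertible, so $\Omega^\dagger = \Omega^\top(\Omega\Omega^\top)^{-1}$ is a right inverse, i.e., $\Omega\Omega^\dagger = I_{m_a}$. Right-multiplying the previous identity by $\Omega^\dagger$ yields
\begin{equation*}
\Theta = (X_+ - W^-)\,\Omega^\dagger .
\end{equation*}

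It remains to identify the right-hand side as an element of $(X_+ - \mathcal{M}_w)\Omega^\dagger$. This set is the image of $\mathcal{M}_w$ under the affine map $W \mapsto (X_+ - W)\Omega^\dagger$. That image is obtained exactly by negating and right-multiplying the center and every generator matrix, and by shifting the center by $X_+\Omega^\dagger$. The exponent matrix, the constraints, and the identifiers are left unchanged.

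Since $W^- \in \mathcal{M}_w$, there is a choice of factors $\alpha_{(k)}\in[-1,1]$ satisfying the constraints of $\mathcal{M}_w$ that represents $W^-$. The same factors then represent $(X_+ - W^-)\Omega^\dagger$ in $\mathcal{M}^{\Sigma_p}$. Hence $\Theta \in \mathcal{M}^{\Sigma_p}$, which gives $\mathcal{N}^{\Sigma_p} \subseteq \mathcal{M}^{\Sigma_p}$. In particular $\Theta_{\text{tr}} \in \mathcal{M}^{\Sigma_p}$, because the true noise sequence lies in $\mathcal{M}_w$ by Assumption~\ref{ass:zon-noise}.

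No step is genuinely hard. The only point needing care is the right-inverse property, which requires rank $m_a$ and hence at least $m_a$ data columns; this matches the batch condition $L \ge m_a$ in Algorithm~\ref{alg:DDPolyReachability_update_intersect}. Note that the inclusion may be strict: $\Theta = (X_+-W^-)\Omega^\dagger$ need not hold for every $W^-\in\mathcal{M}_w$, so only $\supseteq$ is claimed.
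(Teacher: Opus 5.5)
Your proof is correct and is essentially the argument the paper relies on. The paper gives no proof of its own and defers to \cite{Alanwar2023Datadriven}, whose argument (mirrored in Lemma~\ref{lm:sigmaM1}) likewise takes a data-consistent $\Theta$ with $\Theta\Omega = X_+ - W$ for some $W \in \mathcal{M}_w$, then right-multiplies by $\Omega^\dagger$, using $\Omega\Omega^\dagger = I$ from the full row rank of $\Omega$.
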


\begin{remark}
The condition in Lemma~\ref{lm:sigmaM_p} of requiring $\Omega$ with a full row rank implies that there exists a right-inverse of the matrix $\Omega$. This condition can be easily checked given the data. 
\end{remark}


\begin{proposition}
\label{prop:reach_poly}
Consider a sequence of input--state data sets
$D_i = (X_i^+, \Omega_i)$ for $i \geq 0$, where $D_0$ denotes an initial data set
and $D_1, D_2, \dots$ are subsequently collected data generated by the
polynomial system \eqref{eq:sysnonlin_poly}.
Suppose that each data set $D_i$ satisfies Assumption~\ref{ass:rank_D},
and induces a corresponding set of polynomial models
$\mathcal{M}_i^{p}$ according to Lemma~\ref{lm:sigmaM_p}.
Define the recursively refined model set as
\begin{align}
\hat{\mathcal{M}}_i^{\Sigma_p}
=
\mathcal{M}_i^{\Sigma_p} \cap \hat{\mathcal{M}}_{i-1}^{\Sigma_p},
\quad i \geq 1,
\qquad
\hat{\mathcal{M}}_0^{\Sigma_p}
=
\mathcal{M}_0^{\Sigma_p}.
\label{eq:poly_model_intersection}
\end{align}
Then, the set $\hat{\mathcal{M}}_i^{\Sigma_p}$ is a constrained polynomial zonotope
and contains the true polynomial model matrix $\Theta_{\mathrm{tr}}$.
\end{proposition}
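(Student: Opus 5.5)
The plan is induction on $i$, relying on Lemma~\ref{lm:sigmaM_p} and Proposition~\ref{Intersection}. Two things must be shown: a \emph{representation} claim (each $\hat{\mathcal{M}}_i^{\Sigma_p}$ is a constrained polynomial (matrix) zonotope, i.e. a CPMZ in the sense of Definition~\ref{def:conmatpolyzonotopes}) and a \emph{containment} claim ($\Theta_{\mathrm{tr}}\in\hat{\mathcal{M}}_i^{\Sigma_p}$). The hypothesis that each $D_i$ satisfies Assumption~\ref{ass:rank_D} is read here as the full row rank of $\Omega_i$ required by Lemma~\ref{lm:sigmaM_p}. This is the polynomial analogue of the full-row-rank condition, and it guarantees that the right inverse $\Omega_i^\dagger$ exists.

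\textbf{Base case.} Take $i=0$. The noise matrix set $\mathcal{M}_w$ is a CPMZ built from $\mathcal{Z}_w$. The set $X_0^+-\mathcal{M}_w$ is again a CPMZ: its center becomes $X_0^+-\tilde C$, its generators are negated, and its exponent and constraint data are unchanged. Right multiplication by the constant matrix $\Omega_0^\dagger$ maps $C\mapsto C\Omega_0^\dagger$ and $G^{(i)}\mapsto G^{(i)}\Omega_0^\dagger$, leaving $E,A,B,R$ untouched. Both maps are linear in the matrix variable and act on each generator separately, so the image is exactly a CPMZ. Hence $\hat{\mathcal{M}}_0^{\Sigma_p}=\mathcal{M}_0^{\Sigma_p}$ is a CPMZ.

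For containment, the true model satisfies $X_0^+=\Theta_{\mathrm{tr}}\Omega_0+W_0^-$ with $W_0^-\in\mathcal{M}_w$ by Assumption~\ref{ass:zon-noise}. So $\Theta_{\mathrm{tr}}\in\mathcal{N}^{\Sigma_p}\subseteq\mathcal{M}_0^{\Sigma_p}$ by Lemma~\ref{lm:sigmaM_p}. The same two arguments apply verbatim to every $\mathcal{M}_i^{\Sigma_p}$, because the true system generates every batch $D_i$. Hence each $\mathcal{M}_i^{\Sigma_p}$ is a CPMZ containing $\Theta_{\mathrm{tr}}$.

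\textbf{Inductive step.} Assume $\hat{\mathcal{M}}_{i-1}^{\Sigma_p}$ is a CPMZ containing $\Theta_{\mathrm{tr}}$. Both $\mathcal{M}_i^{\Sigma_p}$ and $\hat{\mathcal{M}}_{i-1}^{\Sigma_p}$ lie in $\mathbb{R}^{n_x\times m_a}$, so Proposition~\ref{Intersection} applies and writes their intersection as the CPMZ in \eqref{generalintersection}. Containment then follows from elementary set theory:
\[
\Theta_{\mathrm{tr}}\in\mathcal{M}_i^{\Sigma_p}\cap\hat{\mathcal{M}}_{i-1}^{\Sigma_p}=\hat{\mathcal{M}}_i^{\Sigma_p}.
\]

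The main obstacle is not the set theory but making sure the representation really stays in the CPMZ class at each step. After one intersection the constraint data has $n_a=1$, and the $\hat R$ matrix stacks the old exponent matrices into the constraint exponents. I would check that the output of Proposition~\ref{Intersection} is a valid input to it again: it is a CPMZ with $\hat{B}$ a column vector, which the proposition permits, and the factor count is $p_1+p_2$. This shows the induction is closed under repeated intersection, optionally reshaping the constraints with $\operator{Convert}$ as in the remark following Proposition~\ref{Intersection}.
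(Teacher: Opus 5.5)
Your proof is correct and follows the route the paper intends. Each batch model set $\mathcal{M}_i^{\Sigma_p}$ contains $\Theta_{\mathrm{tr}}$ by Lemma~\ref{lm:sigmaM_p}, Proposition~\ref{Intersection} keeps the recursively intersected set in the CPMZ class, and induction on $i$ gives both claims; the paper gives no separate proof here (it dismisses the linear counterpart, Proposition~\ref{prop:reach_lin}, as a direct consequence of the data-consistency lemma), so your argument is a more explicit version of that reasoning.
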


After identifying a set of polynomial coefficients that is consistent with the
available data, the remaining challenge is the forward propagation of the
reachable set.
In the linear case, this task can
be addressed using linear mappings and Minkowski sum operations, which are
naturally supported by the zonotope representation.
For polynomial systems, however, forward propagation requires the evaluation of
monomial terms of the reachable sets, as shown in \eqref{eq:pcg}, which cannot be
carried out exactly within the class of standard zonotopes.

Existing data-driven approaches for polynomial systems, such as
\cite{Alanwar2023Datadriven}, address this difficulty by over-approximating the
reachable and input sets with interval representations, enabling monomial
evaluation via interval arithmetic.
While this strategy ensures computational tractability, it introduces additional
conservatism and prevents exact set propagation.

In contrast, the approach proposed in this work exploits the algebraic structure
of CPMZs and CPZs.
By leveraging exact multiplication operations between CPMZs and CPZs, as well as
exact multiplication between CPZs, monomials can be evaluated directly
at the set level without resorting to interval over-approximations.
As a consequence, both model-based and data-driven polynomial systems admit an
exact reachable set propagation within the zonotopic framework.

The algorithm is summarized in Algorithm~\ref{alg:DDPolyReachability_update_intersect}.
We first initialize the reachable set $\hat{\mathcal{R}}_0^p$ and compute an initial
set of models $\hat{\mathcal{M}}^{\Sigma_p}$ that is consistent with the
available input--state data.
Subsequently, as new data become available, the model set is refined online through
recursive set intersection.
At each time step $k=0,\dots,N-1$, the reachable set is propagated forward by
evaluating the polynomial dynamics directly at the set level.
To this end, the reachable set and the input set are first combined into the joint
set $\mathcal{Z}_k = \hat{\mathcal{R}}_k^p \boxtimes \mathcal{U}_k$.
The monomial vector $h(\mathcal{Z}_k)$ is then computed exactly by coordinate
extraction and exact set multiplication within the class of CPZs.
Finally, in the last step of the iteration, the reachable set is updated by applying
the refined model set $\hat{\mathcal{M}}^{\Sigma_p}$ and incorporating
the process noise zonotope $\mathcal{Z}_w$, yielding the next reachable set
$\hat{\mathcal{R}}_{k+1}^p$.

\begin{remark}
In Algorithm~\ref{alg:DDPolyReachability_update_intersect}, the model set construction in Line~3 and the subsequent refinement via intersection in Line~10 can be enhanced in the same manner as in the LTI case; see Remark~\ref{rem:cmz_sideinfo_lti}.
\end{remark}

\begin{figure*}[t]
    \centering
    \begin{subfigure}[h]{0.32\textwidth}
        \includegraphics[width=\linewidth, trim=45 45 45 45, clip]{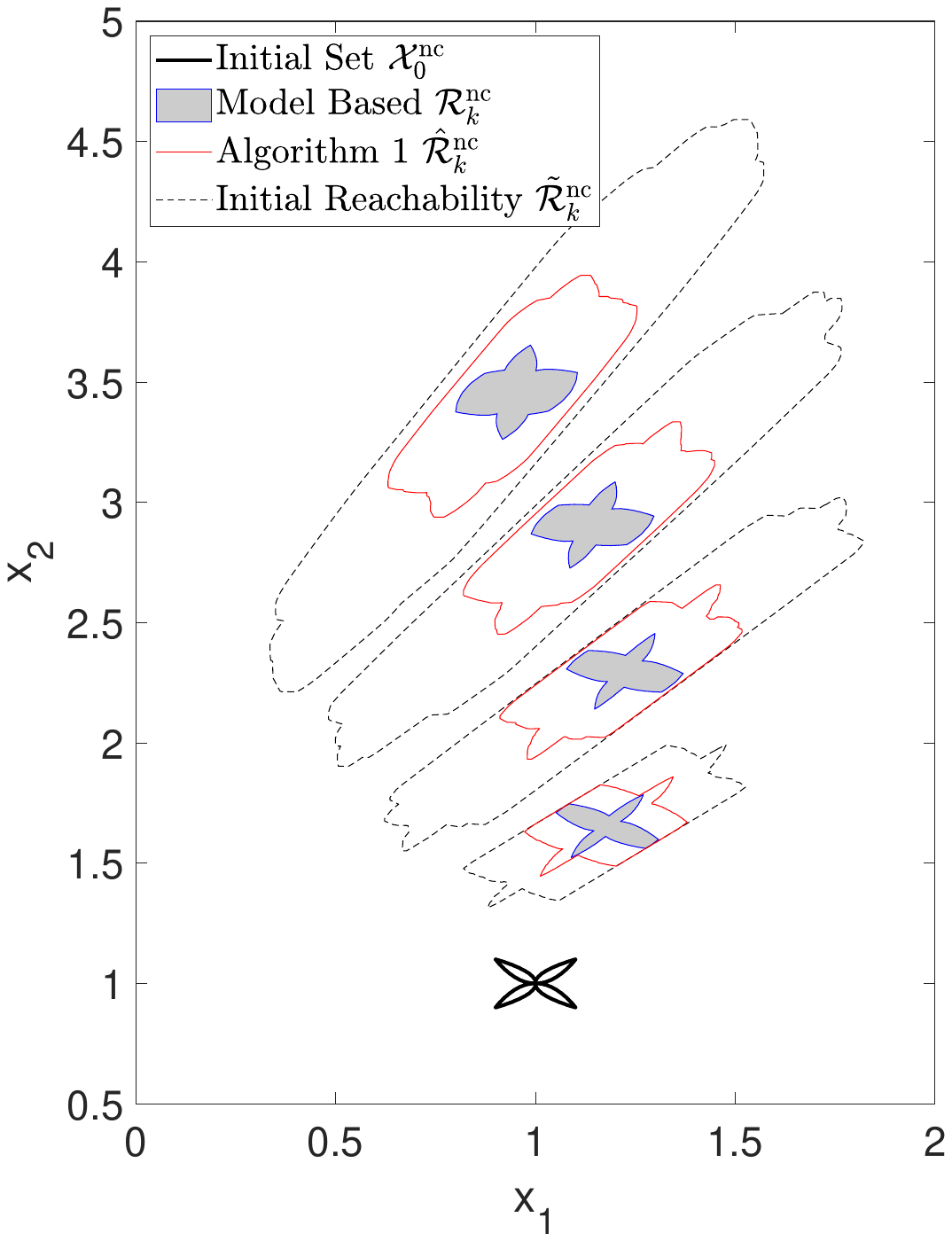}
        \caption{}
        \label{fig:x1x2_a}
    \end{subfigure}
    \begin{subfigure}[h]{0.32\textwidth}
        \includegraphics[width=\linewidth, trim=45 45 45 45, clip]{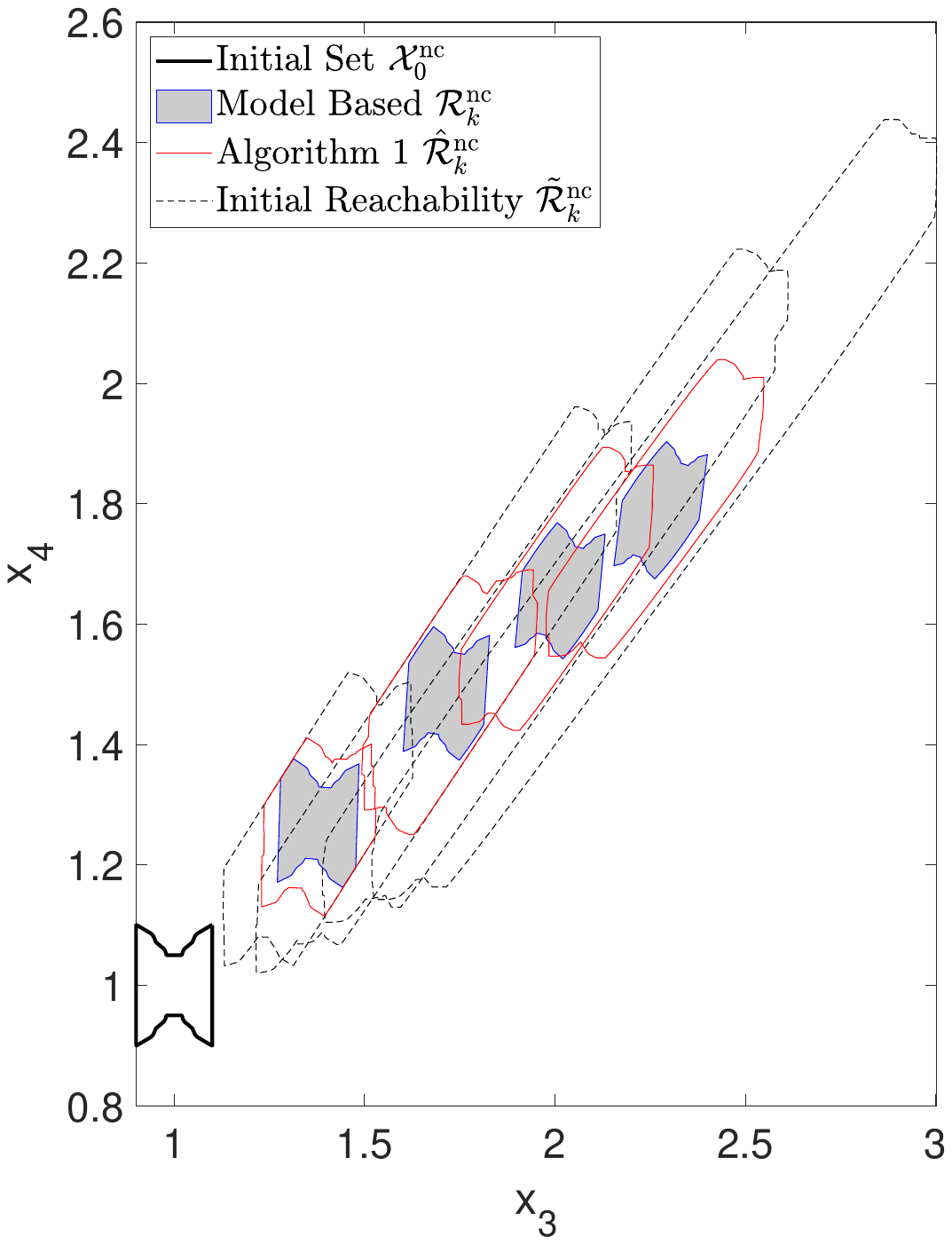}
        \caption{}
        \label{fig:x3x4_a}
    \end{subfigure}
    \begin{subfigure}[h]{0.32\textwidth}
        \includegraphics[width=\linewidth, trim=45 45 45 45, clip]{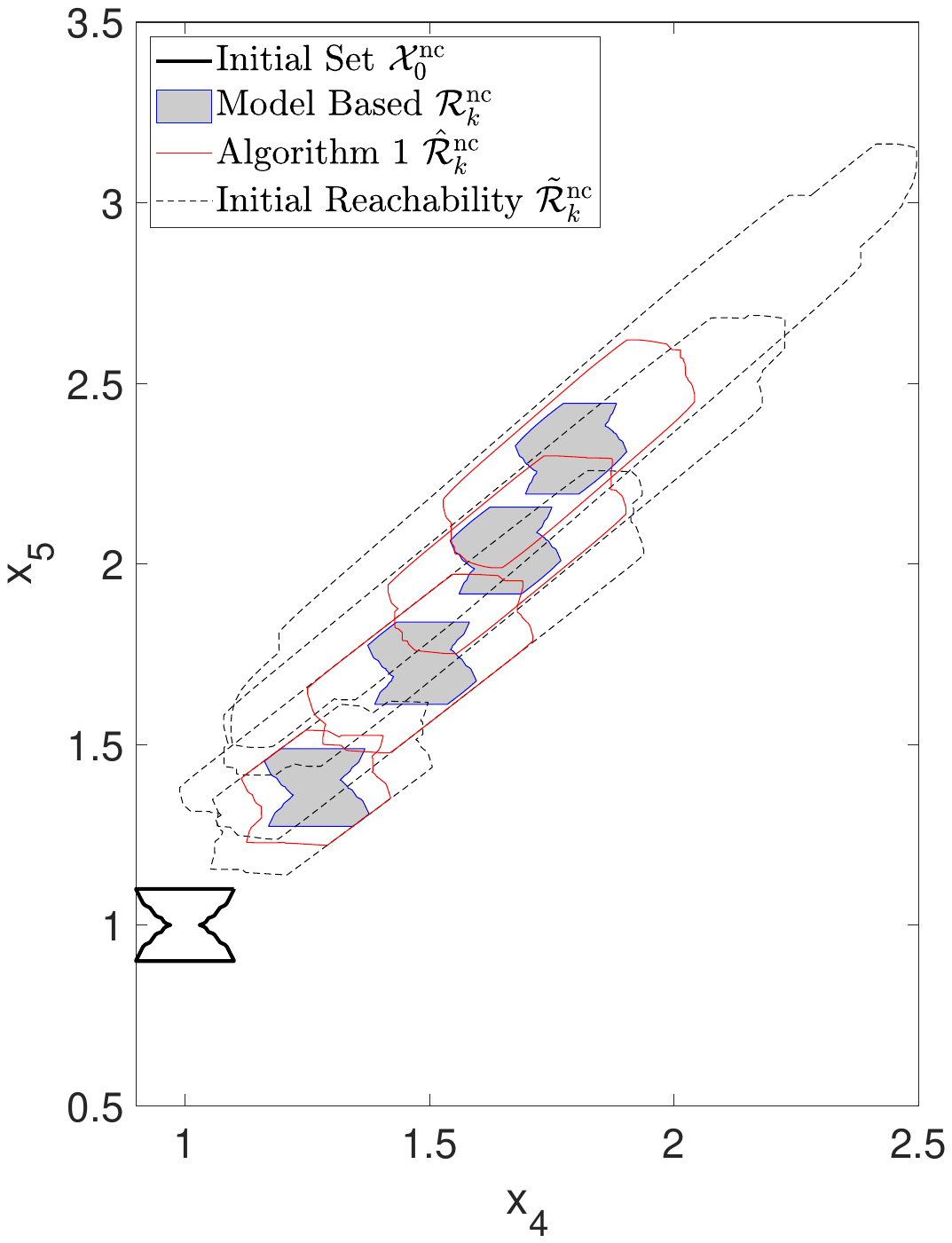}
        \caption{}
        \label{fig:x4x5_a}
    \end{subfigure}
    \caption{Projection of Reachable Sets Computed from Input--State Data Using Algorithm \ref{alg:offline-online} with a Nonconvex Initial Set}
    \label{fig:projSetA}
\end{figure*}
\begin{figure*}[!htbp]
    \centering
    \begin{subfigure}[h]{0.32\textwidth}
        \includegraphics[width=\linewidth, trim=45 45 45 45, clip]{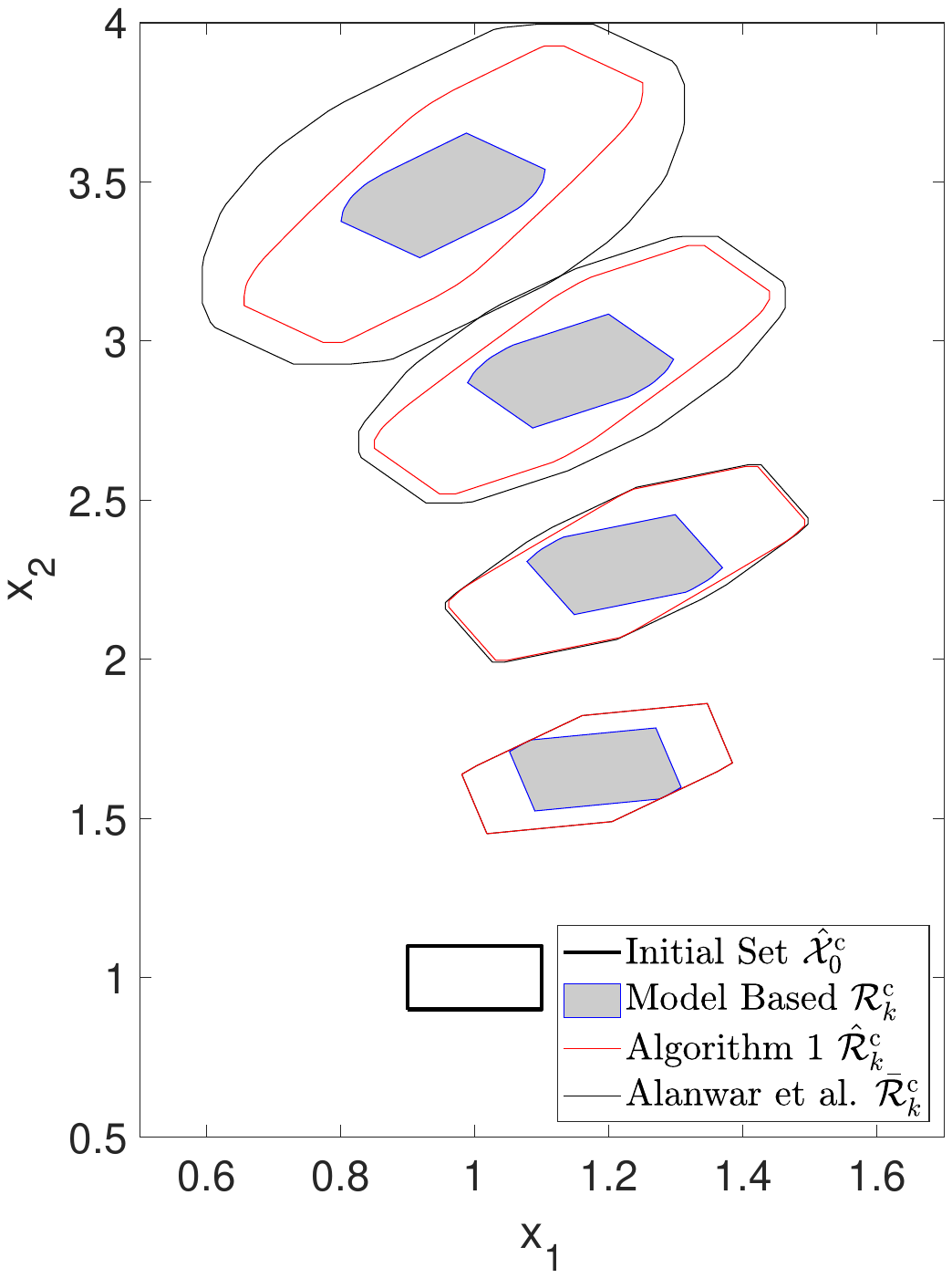}
        \caption{}
        \label{fig:x1x2_b}
    \end{subfigure}
    \begin{subfigure}[h]{0.32\textwidth}
        \includegraphics[width=\linewidth, trim=45 45 45 45, clip]{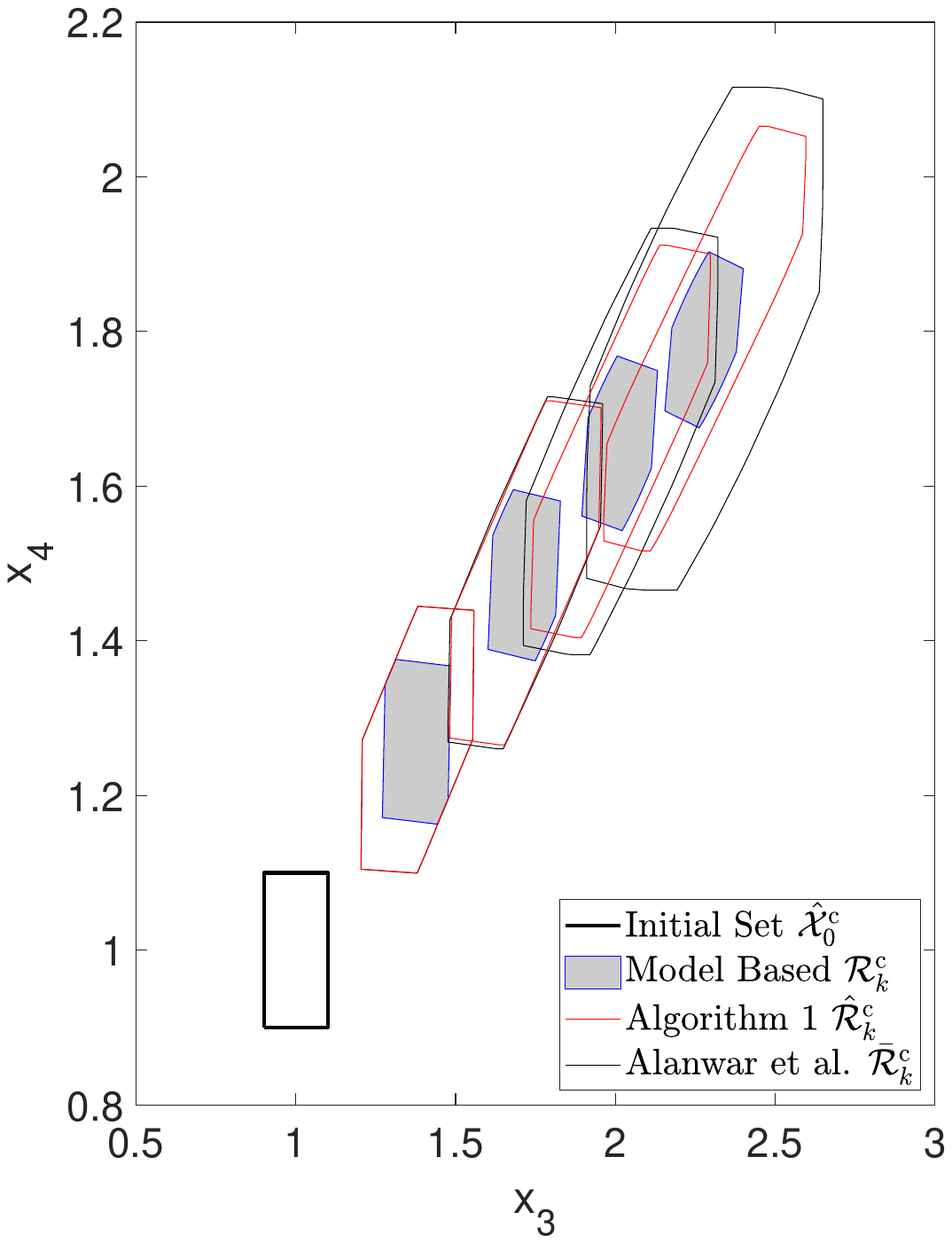}
        \caption{}
        \label{fig:x3x4_b}
    \end{subfigure}
    \begin{subfigure}[h]{0.32\textwidth}
        \includegraphics[width=\linewidth, trim=45 45 45 45, clip]{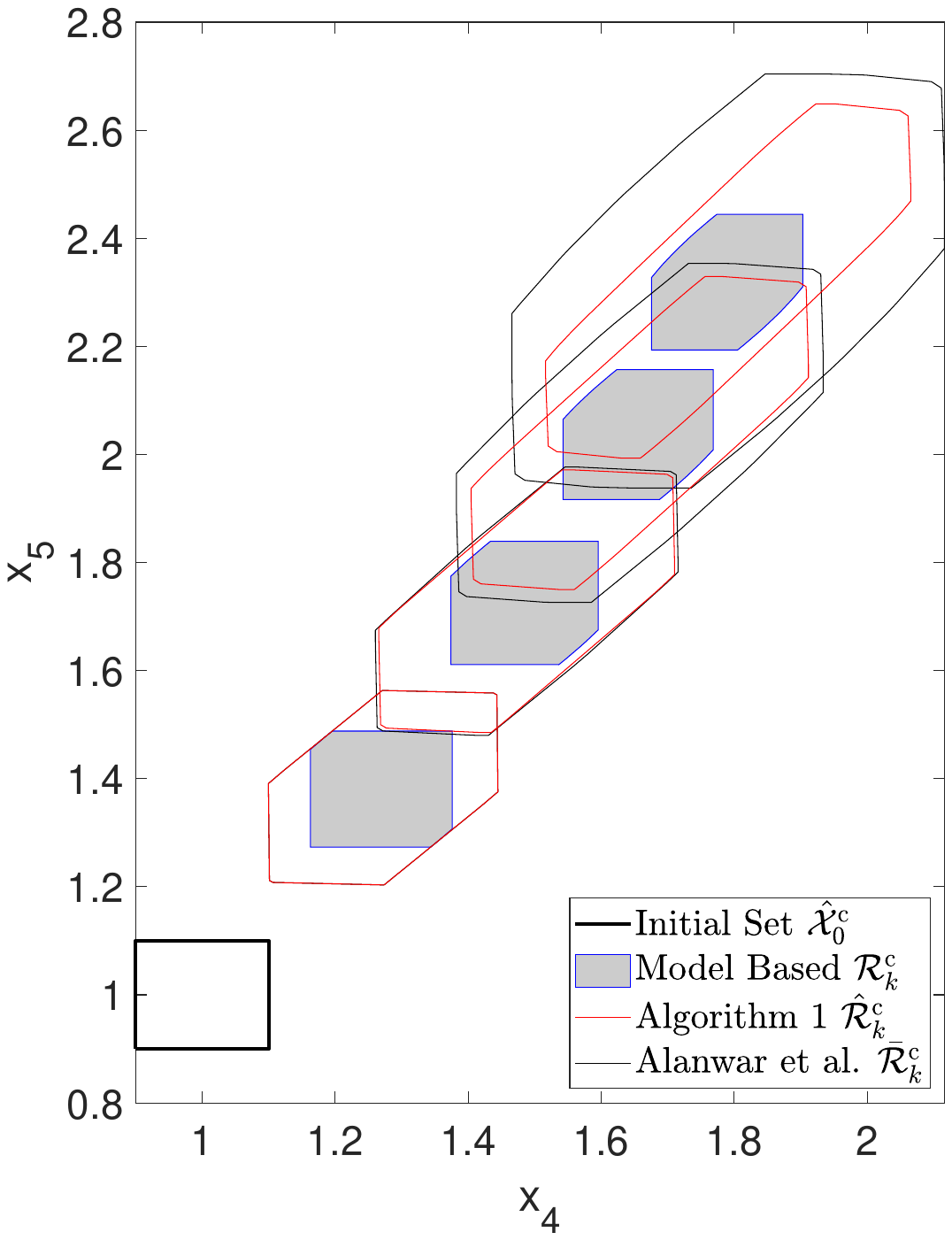}
        \caption{}
        \label{fig:x4x5_b}
    \end{subfigure}
    \caption{Comparative Projection of Reachable Sets from Input--State Data Using Algorithm \ref{alg:offline-online} and Alanwar et al.~\cite{Alanwar2023Datadriven}}
    \label{fig:projSetB}
\end{figure*}
\begin{figure*}[!htbp]
    \centering
    \begin{subfigure}[h]{0.32\textwidth}
        \includegraphics[width=\linewidth, trim=45 45 45 45, clip]{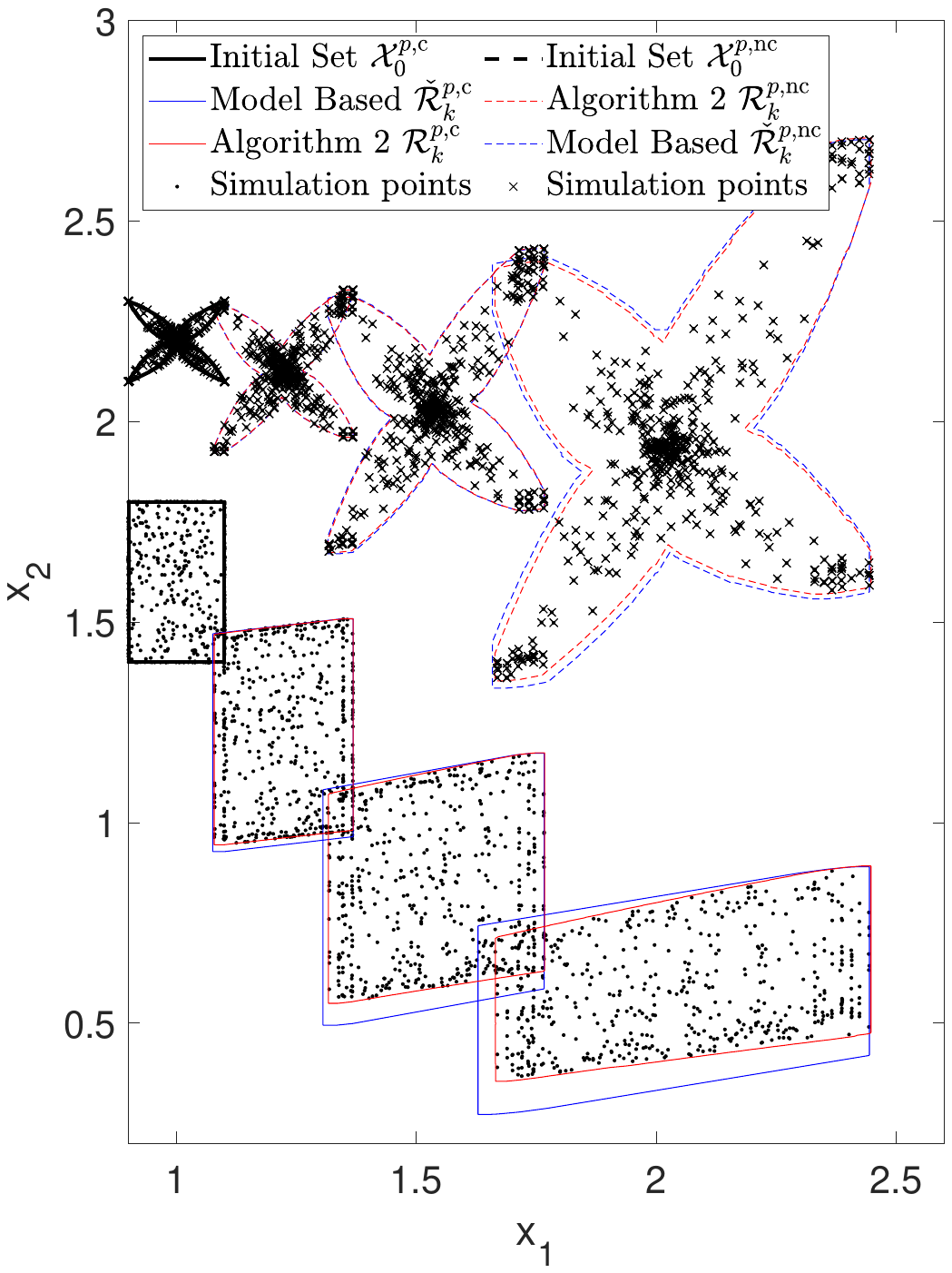}
        \caption{}
        \label{fig:x1x2_c}
    \end{subfigure}
    \begin{subfigure}[h]{0.32\textwidth}
        \includegraphics[width=\linewidth, trim=45 45 45 45, clip]{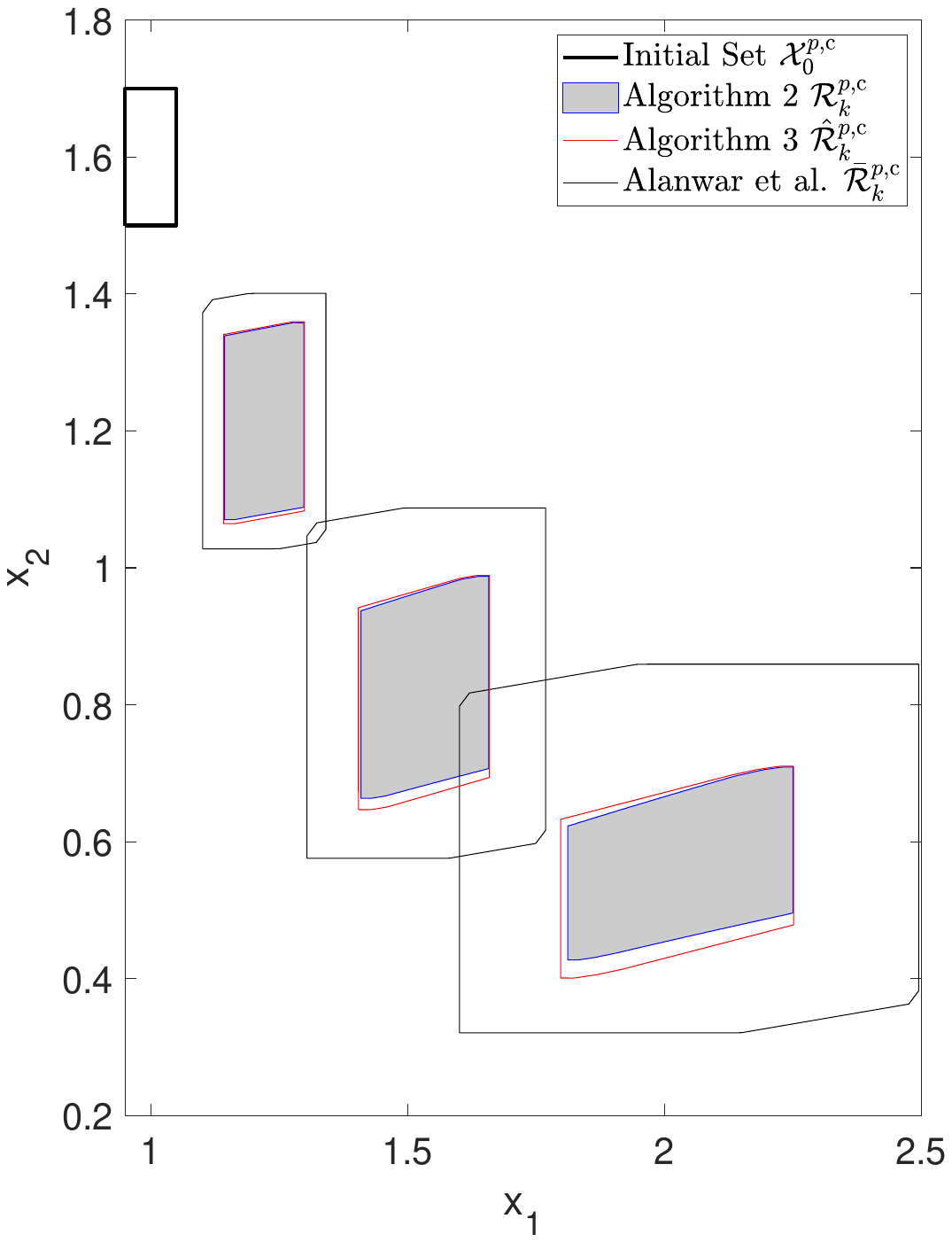}
        \caption{}
        \label{fig:x3x4_c}
    \end{subfigure}
    \begin{subfigure}[h]{0.32\textwidth}
        \includegraphics[width=\linewidth, trim=45 45 45 45, clip]{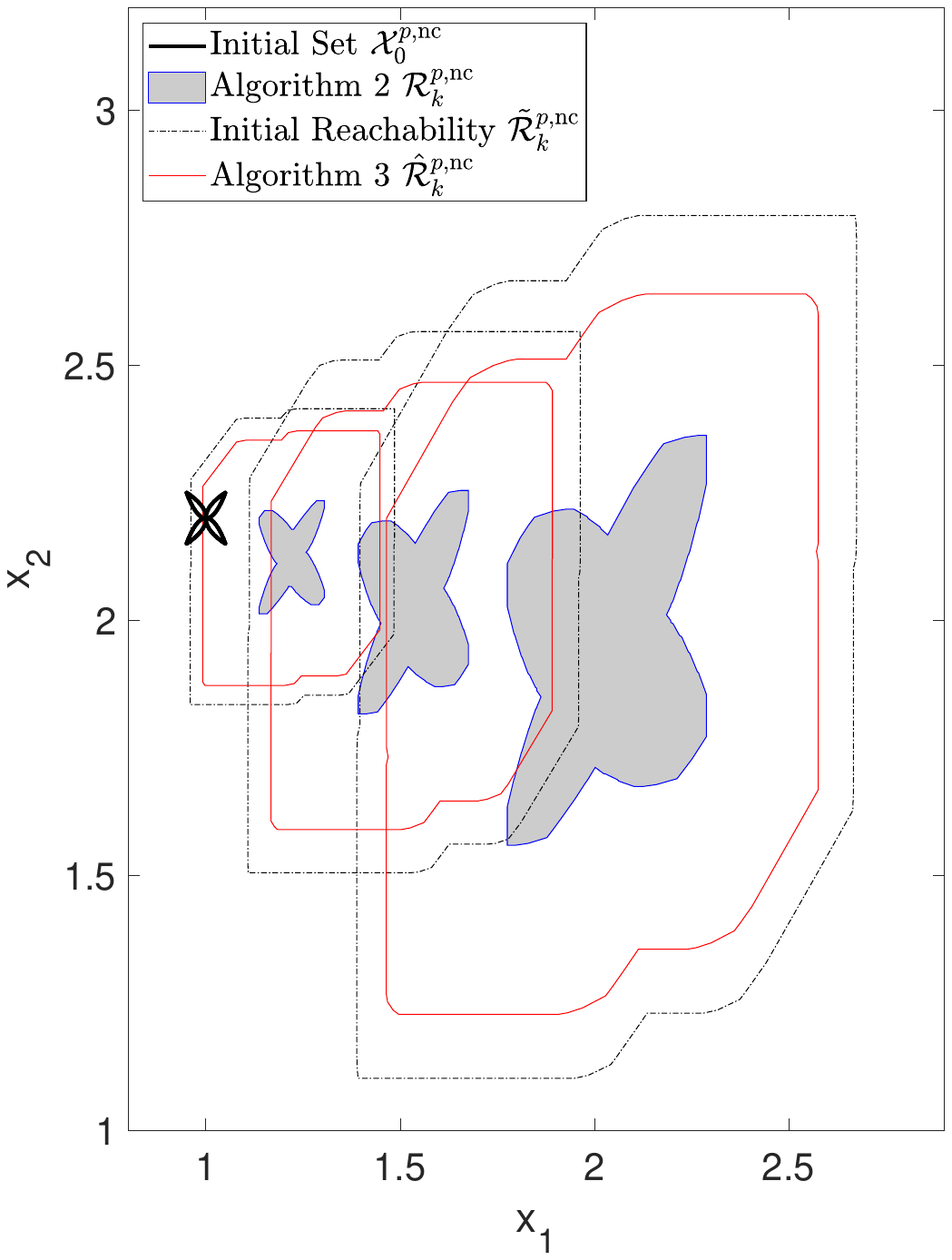}
        \caption{}
        \label{fig:x4x5_c}
    \end{subfigure}
   \caption{Projections of reachable sets for a polynomial system.
(a) Model-based reachable sets computed using
Algorithm~\ref{alg:PolyReachability}, compared with the model-based reachability
analysis results obtained using CORA~\cite{althoff2015introduction}, for both convex and nonconvex initial
sets, together with Monte Carlo simulation trajectories.
(b) Data-driven reachable sets computed using
Algorithm~\ref{alg:DDPolyReachability_update_intersect} for a convex initial set
under a small disturbance magnitude, compared with the interval-based
data-driven method of Alanwar et al.~\cite{Alanwar2023Datadriven}.
(c) Data-driven reachable sets for a nonconvex initial set under a larger
disturbance magnitude, illustrating the effect of online set refinement.}

    \label{fig:projSetB1}
\end{figure*}



\section{Numerical simulations}\label{sec:numerical-simulations}


\subsection{LTI Systems}
Consider a five-dimensional system which is a discretization of the system used in \cite{Alanwar2023Datadriven} with sampling time $0.05$ sec.  All used code to reproduce our results are publicly available\footnotemark. 
The system has the following matrices.
\footnotetext{\href{https://github.com/TUM-CPS-HN/Data-Driven-Nonconvex-Reachability-Analysis}{https://github.com/TUM-CPS-HN/Data-Driven-Nonconvex-Reachability-Analysis}}
\begin{align*}
\Phi_\tr&=\begin{bmatrix}
    0.9323 &  -0.1890   &      0   &      0   &      0 \\
    0.1890 &   0.9323  &       0  &       0   &      0 \\
         0 &        0  &  0.8596  &   0.0430  &        0 \\
         0 &         0   & -0.0430    & 0.8596      &    0 \\
         0 &         0  &        0    &      0   &  0.9048
\end{bmatrix},\\
\Gamma_\tr&=\begin{bmatrix} 
    0.0436&
    0.0533&
    0.0475&
    0.0453&
    0.0476
    \end{bmatrix}^\top.
\end{align*}

In the first experiment, the initial nonconvex set is chosen to be $\mathcal{X}_0^{\mathrm{nc}} = \zono{c_0,G_0,E_0,[~],[~],[~],\id_0}_{\text{CPZ}}$ , where
\begin{align*}
\begin{split}
c_0=1_{5 \times1},~ G_0=0.1I_5,~ E_0&=\begin{bmatrix}
    2 &  1   &      0   &      0   &      0 \\
    1 &   2  &       0  &       0   &      0 \\
         0 &        0  &  2  &   1  &        0 \\
         0 &         0   & 1    & 2      &    1 \\
         0 &         0  &        0    &      1   &  2
\end{bmatrix}.
\end{split}
\end{align*}
For the second experiment, a convex initial set $\hat{\mathcal{X}}_0^{\mathrm{c}}=\zono{1_{5 \times1},0.1 I_5}$ is used to compare reachable sets computed via Algorithm~\ref{alg:offline-online} and the method of Alanwar et al.~\cite{Alanwar2023Datadriven}. In both experiments, the input set is defined as $\mathcal{U}_k = \zono{10, 0.25}$, and random noise is sampled from the zonotope $\mathcal{Z}_w = \zono{0, [0.005 , \dots , 0.005]^\top}$. 


Firstly, the reachable sets obtained from a nonconvex initial set using both the nominal system model and Algorithm~\ref{alg:offline-online} are illustrated in Fig.~\ref{fig:projSetA}. The reachable set $\tilde{\mathcal{R}}_k^{\mathrm{nc}}$ corresponds to the one computed using the model set $\mathcal{M}_0^\Sigma$, without applying any set refinement. $\hat{\mathcal{R}}_k^{\mathrm{nc}}$ represents the reachable set computed after performing the first online set refinement using new incoming data through Algorithm~\ref{alg:offline-online}. Owing to the exact set multiplication employed in the algorithm, the nonconvexity of the initial set is preserved throughout the reachability propagation process. Furthermore, Proposition~\ref{prop:multi} guarantees that the initial reachability computation remains applicable to nonconvex initial sets by employing exact multiplication, thus addressing the limitation of the method proposed by Alanwar et al.~\cite{Alanwar2023Datadriven}, which is confined to convex cases.

Secondly, to evaluate the performance of the proposed algorithm, a comparative analysis was conducted under two scenarios. First, trajectory data collected over the interval $[0, T_1]$ with $N_1$ trajectories were used to perform the offline initialization step of Algorithm~\ref{alg:offline-online}. At time $T_1$, new data over $[T_1, T_2]$ containing $N_2$ trajectories became available, triggering the online set refinement step of Algorithm~\ref{alg:offline-online} and yielding the updated reachable set $\hat{\mathcal{R}}_k^{\mathrm{nc}}$. For consistency and fair comparison with the method of Alanwar et al.~\cite{Alanwar2023Datadriven}, we set $N_1 = N_2$. The reachable set $\bar{\mathcal{R}}_k^{\mathrm{nc}}$ computed using the method of Alanwar et al.~\cite{Alanwar2023Datadriven} is based on the combined dataset $D_a$ over the full interval $[0, T_2]$.

As illustrated in Fig.~\ref{fig:projSetB}, the reachable set $\hat{\mathcal{R}}_k^{\mathrm{c}}$ computed using the Algorithm~\ref{alg:offline-online} is less conservative than $\bar{\mathcal{R}}_k^{\mathrm{c}}$ obtained using the method of Alanwar et al.~\cite{Alanwar2023Datadriven}. This improvement stems from the incremental refinement of the model set, where newly acquired data are used to further constrain the previously computed set of models. As a result, a more accurate representation of system behavior is achieved in the form of a CPMZ, rather than a CMZ.
The use of exact set multiplication between CPMZ and CPZ in Algorithm~\ref{alg:offline-online} ensures exact set propagation, thereby further reducing conservativeness.
These results highlight the advantages of the proposed incremental approach over direct one-shot data-driven methods, demonstrating its improved ability to capture system dynamics and perform set propagation.

\subsection{Polynomial Systems}
We consider the problem of computing the reachable sets of the nonlinear discrete-time system described by
\begin{align}
    f_p(x,u) &= \begin{bmatrix}
    0.7x_1 + u_1 + 0.32x_1^2\\
    0.09x_1 + 0.32 u_2x_1 + 0.4 x_2^2
    \end{bmatrix}.
    \label{eq:pnonlinearexample}
\end{align}
The input set is $\mathcal{U}_k=\bigzono{\begin{bmatrix} 0.2 & 0.3 \end{bmatrix}^\top,\text{diag}(\begin{bmatrix} 0.01 & 0.02 \end{bmatrix})}$. We consider computing the reachable sets under zonotopic disturbances of
different magnitudes to enable a clear comparison of the algorithmic behavior
across noise levels. We used as input data 140 data points ($T=140$) from 20 trajectories, of length seven.

Fig.~\ref{fig:projSetB1}(a) illustrates the model-based reachable sets ${\mathcal{R}}^{p,\mathrm{c}}_k$ and ${\mathcal{R}}^{p,\mathrm{nc}}_k$ computed
using Algorithm~\ref{alg:PolyReachability}.
The disturbance is modeled as a zonotope
$\mathcal{Z}_w=\bigzono{\mathbf{0},\,0.7\times10^{-4}\mathbf{1}}$, and two types of
initial sets are considered, namely a convex zonotope $
\mathcal{X}_0^{p,\mathrm{c}}
=
\bigzono{
\begin{bmatrix} 1 & 1.6 \end{bmatrix}^\top,
\;
2\,\mathrm{diag}\left(\begin{bmatrix} 0.05 & 0.1 \end{bmatrix}\right)
}.
$ and a nonconvex 
CPZ $
\mathcal{X}_0^{p,\mathrm{nc}}
=
\bigzono{
\begin{bmatrix}
1 \\
2.2
\end{bmatrix},
\;
0.1\,I_2,
\;
\begin{bmatrix}
2 & 1 \\
1 & 2
\end{bmatrix}
}_{\text{CPZ}}.
$
The reachable sets ${\mathcal{R}}^{p,\mathrm{c}}_k$ and ${\mathcal{R}}^{p,\mathrm{nc}}_k$ are propagated using the exact polynomial reachable set
computation procedure described in Algorithm~\ref{alg:PolyReachability}.
For comparison, the reachable sets $\check{\mathcal{R}}^{p,\mathrm{c}}_k$ and $\check{\mathcal{R}}^{p,\mathrm{nc}}_k$ are also computed using the model-based
reachability analysis tools provided by CORA~\cite{althoff2015introduction},
which rely on state-of-the-art over-approximation techniques to handle nonlinear
dynamics.
Due to the nonlinearity of the system, CORA computes an over-approximation of the
reachable set, as exact reachable set computation is generally not supported by
existing model-based methods.
In addition, $500$ random trajectories are generated by simulating the true
polynomial system under random inputs and disturbances.
As shown in Fig.~\ref{fig:projSetB1}(a), all simulated state trajectories are fully
contained within the reachable sets ${\mathcal{R}}^{p,\mathrm{c}}_k$ and ${\mathcal{R}}^{p,\mathrm{nc}}_k$ computed by Algorithm~\ref{alg:PolyReachability}.
Notably, several extreme simulation points lie directly on the boundary of the
computed reachable set, indicating that the obtained set tightly captures the
true reachable states.
This observation suggests that the reachable sets ${\mathcal{R}}^{p,\mathrm{c}}_k$ and ${\mathcal{R}}^{p,\mathrm{nc}}_k$ computed by
Algorithm~\ref{alg:PolyReachability} tightly encloses the simulated trajectories under the chosen uncertainty bounds, indicating that no additional relaxation is introduced during propagation.

Fig.~\ref{fig:projSetB1}(b) illustrates the reachable sets computed for a convex
initial set under a small disturbance, where the noise is modeled as the
zonotope $\mathcal{Z}_w=\bigzono{\mathbf{0},\,0.7\times10^{-5}\mathbf{1}}$.
The reachable sets $\hat{\mathcal{R}}_k^{p,\mathrm{c}}$ obtained using Algorithm~\ref{alg:DDPolyReachability_update_intersect}
are compared with those $\bar{\mathcal{R}}_k^{p,\mathrm{c}}$ computed using the data-driven polynomial reachability
method proposed by Alanwar et al.~\cite{Alanwar2023Datadriven}.
As shown in the figure, the reachable sets $\hat{\mathcal{R}}_k^{p,\mathrm{c}}$ computed by
Algorithm~\ref{alg:DDPolyReachability_update_intersect} are significantly less
conservative than those $\bar{\mathcal{R}}_k^{p,\mathrm{c}}$ obtained using the interval-based approach of
\cite{Alanwar2023Datadriven}.
In particular, under small disturbance levels, the reachable sets $\hat{\mathcal{R}}_k^{p,\mathrm{c}}$ produced by
the proposed data-driven algorithm closely match the model-based reachable sets ${\mathcal{R}}_k^{p,\mathrm{c}}$
computed using Algorithm~\ref{alg:PolyReachability}.
This observation indicates that the proposed method is capable of accurately
capturing the system dynamics from data and performing tight forward
propagation without introducing unnecessary conservatism.
Overall, the results in Fig.~\ref{fig:projSetB}(b) demonstrate the effectiveness
of the proposed data-driven polynomial reachability framework, especially in
low-noise scenarios, and highlight its advantage over existing interval-based
data-driven approaches.

Fig.~\ref{fig:projSetB1}(c) illustrates the reachable sets $\tilde{\mathcal{R}}^{p,\mathrm{nc}}_k$ and $\hat{\mathcal{R}}_k^{p,\mathrm{nc}}$ computed for a nonconvex
initial set ${\mathcal{X}}_0^{p,\mathrm{nc}}$ under a larger disturbance level, where the noise is modeled as the
zonotope $\mathcal{Z}_w=\bigzono{\mathbf{0},\,7\times10^{-3}\mathbf{1}}$.
The reachable sets $\tilde{\mathcal{R}}^{p,\mathrm{nc}}_k$ and $\hat{\mathcal{R}}_k^{p,\mathrm{nc}}$ are computed using
Algorithm~\ref{alg:DDPolyReachability_update_intersect}, and the nonconvex initial
set ${\mathcal{X}}_0^{p,\mathrm{nc}}$ is represented as a CPZ.
As in the previous experiments, the input-state data consist of $140$ data points
($T=140$) collected from $20$ trajectories of length seven.
To evaluate the effect of online model refinement, the data are divided into two
segments.
In the first stage, data collected over the interval $[0,T_1]$ with $N_1$
trajectories are used to perform the offline initialization step of
Algorithm~\ref{alg:DDPolyReachability_update_intersect} and yielding initial reachable sets $\tilde{\mathcal{R}}^{p,\mathrm{nc}}_k$.
At time $T_1$, additional data over the interval $[T_1,T_2]$ with $N_2$ trajectories
become available, triggering the online set refinement step and yielding
updated reachable sets $\hat{\mathcal{R}}_k^{p,\mathrm{nc}}$.
For consistency, we set $N_1 = N_2$, corresponding to an equal partition of the
available trajectories.
As shown in Fig.~\ref{fig:projSetB1}(c), under the larger disturbance level, the
reachable sets $\tilde{\mathcal{R}}^{p,\mathrm{nc}}_k$ computed prior to model refinement are relatively conservative.
After performing a single set refinement step using newly acquired data, the
reachable sets $\hat{\mathcal{R}}_k^{p,\mathrm{nc}}$ become significantly tighter.
This demonstrates that the proposed algorithm is capable of effectively reducing
conservatism through online model refinement even in the presence of
large-magnitude disturbances.
Moreover, the results confirm that
Algorithm~\ref{alg:DDPolyReachability_update_intersect} can accurately propagate
complex nonconvex reachable sets while maintaining a zonotopic representation,
highlighting its robustness and flexibility in challenging polynomial systems.

Table~\ref{tab:timing} reports the runtime in seconds for computing the first-step reachable set in Fig.~\ref{fig:x3x4_b} and Fig.~\ref{fig:x3x4_c}, in order to avoid the influence of reduction operations that may be applied during multi-step set propagation.
 In Fig.~\ref{fig:x3x4_b}, employing exact multiplication incurs a modest additional computational cost compared with the baseline and yields a visibly tighter reachable set. In Fig.~\ref{fig:x3x4_c}, the model-based computation is slower than the approach of Alanwar et al.~\cite{Alanwar2023Datadriven}, since Algorithm~\ref{alg:PolyReachability} computes an exact reachable set for the polynomial dynamics. Although the runtime increases more substantially in this case, this is primarily due to the use of exact set propagation via the proposed data-driven exact multiplication, and it yields a markedly less conservative reachable set, as reflected by the tighter reachable sets. Overall, the timing results quantify the expected trade-off: employing exact multiplication in the data-driven propagation increases runtime, but can substantially reduce conservativeness in the computed reachable sets.

\begin{table}[t]
\centering
\begin{tabular}{lccc}
\hline
Scenario & Model Based & Exact Multiplication & Alanwar et al.~\cite{Alanwar2023Datadriven} \\
\hline
Fig.~\ref{fig:x3x4_b} & 0.0033523 & 0.022212  & 0.0059221 \\
Fig.~\ref{fig:x3x4_c} & 0.0081695 & 1.999900  & 0.0044745 \\
\hline
\end{tabular}
\caption{Comparison of reachable set computation times (in seconds).}
\label{tab:timing}
\end{table}

\section{Conclusion} \label{sec:conclusion}


This paper investigated data-driven reachability analysis for systems with unknown dynamics, with a particular focus on the exact propagation of nonconvex reachable sets. The main difficulty addressed is the lack of closure of commonly used set representations under the multiplication between data-consistent model sets and nonconvex state sets, which inevitably leads to conservative over-approximations.
To resolve this issue, we introduced CPMZs and showed that their multiplication with CPZs admits an algebraically exact CPZ representation. This property enabled an exact set-based time-update operation that preserves the nonconvex geometry of reachable sets without relying on relaxation or enclosure techniques. Since CMZs are a special case of CPMZs, the proposed framework naturally generalizes existing data-driven reachability methods for linear systems.
The framework further supports iterative refinement of data-consistent model sets via set intersection, enabling the online incorporation of newly acquired data and yielding progressively tighter reachable sets. Moreover, the proposed algebraic structure enables algebraically exact set propagation for polynomial systems within CPZ arithmetic by evaluating monomial dynamics directly at the set level through exact CPZ multiplication and exact Cartesian products, thereby avoiding over-approximations.

Future work will focus on extending the proposed framework to Lipschitz nonlinear systems, where nonlinear dynamics admit bounded incremental behavior but are not polynomial. Integrating Lipschitz bounds into the proposed exact or minimally conservative set propagation framework remains an important direction for further research.

\bibliographystyle{IEEEtran}
\bibliography{ref}

@article{alanwar2025polynomial,
  title={Polynomial logical zonotope: A set representation for reachability analysis of logical systems},
  author={Alanwar, Amr and Jiang, Frank J and Johansson, Karl H},
  journal={Automatica},
  volume={171},
  pages={111896},
  year={2025},
  publisher={Elsevier}
}

@article{Alanwar2023Datadriven,
    title = {{Data-driven reachability analysis from noisy data}},
    year = {2023},
    journal = {IEEE Transactions on Automatic Control},
    author = {Alanwar, Amr and Koch, Anne and Allgower, Frank and Johansson, Karl Henrik},
    number = {5},
    pages = {3054--3069},
    volume = {68}
}

@inproceedings{alanwar2021data,
  title={Data-driven reachability analysis using matrix zonotopes},
  author={Alanwar, Amr and Koch, Anne and Allg{\"o}wer, Frank and Johansson, Karl Henrik},
  booktitle={Learning for Dynamics and Control},
  pages={163--175},
  year={2021}
}

@inproceedings{alanwar2022enhancing,
  title={Enhancing data-driven reachability analysis using temporal logic side information},
  author={Alanwar, Amr and Jiang, Frank J and Sharifi, Maryam and Dimarogonas, Dimos V and Johansson, Karl H},
  booktitle={International Conference on Robotics and Automation (ICRA)},
  pages={6793--6799},
  year={2022}
}

@inproceedings{choi2023data,
  title={Data-driven forward stochastic reachability analysis for human-in-the-loop systems},
  author={Choi, Joonwon and Byeon, Sooyung and Hwang, Inseok},
  booktitle={62nd IEEE Conference on Decision and Control (CDC)},
  pages={1730--1735},
  year={2023}
}

@article{devonport2023data,
  title={Data-driven reachability and support estimation with {Christoffel} functions},
  author={Devonport, Alex and Yang, Forest and El Ghaoui, Laurent and Arcak, Murat},
  journal={IEEE Transactions on Automatic Control},
  volume={68},
  number={9},
  pages={5216--5229},
  year={2023},
  publisher={IEEE}
}

@inproceedings{dietrich2024nonconvex,
  title={Nonconvex scenario optimization for data-driven reachability},
  author={Dietrich, Elizabeth and Devonport, Alex and Arcak, Murat},
  booktitle={6th Annual Learning for Dynamics \& Control Conference},
  pages={514--527},
  year={2024}
}

@article{djeumou2022fly,
  title={On-the-fly control of unknown systems: From side information to performance guarantees through reachability},
  author={Djeumou, Franck and Vinod, Abraham P and Goubault, Eric and Putot, Sylvie and Topcu, Ufuk},
  journal={IEEE Transactions on Automatic Control},
  volume={68},
  number={8},
  pages={4857--4872},
  year={2022},
  publisher={IEEE}
}

@inproceedings{govindarajan2017data,
  title={Data-driven reachability analysis for human-in-the-loop systems},
  author={Govindarajan, Vijay and Driggs-Campbell, Katherine and Bajcsy, Ruzena},
  booktitle={56th IEEE Conference on Decision and Control (CDC)},
  pages={2617--2622},
  year={2017}
}

@inproceedings{griffioen2023data,
  title={Data-Driven Reachability Analysis for Gaussian Process State Space Models},
  author={Griffioen, Paul and Arcak, Murat},
  booktitle={62nd IEEE Conference on Decision and Control (CDC)},
  pages={4100--4105},
  year={2023}
}

@article{kochdumper2020sparse,
  title={Sparse polynomial zonotopes: A novel set representation for reachability analysis},
  author={Kochdumper, Niklas and Althoff, Matthias},
  journal={IEEE Transactions on Automatic Control},
  volume={66},
  number={9},
  pages={4043--4058},
  year={2020},
  publisher={IEEE}
}

@article{kochdumper2023constrained,
  title={Constrained polynomial zonotopes},
  author={Kochdumper, Niklas and Althoff, Matthias},
  journal={Acta Informatica},
  volume={60},
  number={3},
  pages={279--316},
  year={2023},
  publisher={Springer}
}

@article{park2024data,
  title={Data-driven Reachability Analysis for Nonlinear Systems},
  author={Park, Hyunsang and Vijay, Vishnu and Hwang, Inseok},
  journal={IEEE Control Systems Letters},
  year={2024},
  publisher={IEEE}
}

@article{scott2016constrained,
  title={Constrained zonotopes: A new tool for set-based estimation and fault detection},
  author={Scott, Joseph K and Raimondo, Davide M and Marseglia, Giuseppe Roberto and Braatz, Richard D},
  journal={Automatica},
  volume={69},
  pages={126--136},
  year={2016},
  publisher={Elsevier}
}

@article{thorpe2019model,
  title={Model-free stochastic reachability using kernel distribution embeddings},
  author={Thorpe, Adam J and Oishi, Meeko MK},
  journal={IEEE Control Systems Letters},
  volume={4},
  number={2},
  pages={512--517},
  year={2019},
  publisher={IEEE}
}

@inproceedings{thorpe2021sreachtools,
  title={{SReachTools} kernel module: Data-driven stochastic reachability using {Hilbert} space embeddings of distributions},
  author={Thorpe, Adam J and Ortiz, Kendric R and Oishi, Meeko MK},
  booktitle={60th IEEE Conference on Decision and Control (CDC)},
  pages={5073--5079},
  year={2021}
}

@inproceedings{wang2023data,
  title={Data-driven reachability analysis of Lipschitz nonlinear systems via support vector data description},
  author={Wang, Zheming and Chen, Bo and Jungers, Rapha{\"e}l M and Yu, Li},
  booktitle={62nd IEEE Conference on Decision and Control (CDC)},
  pages={7043--7048},
  year={2023}
}

@phdthesis{Althoff2010PhD,
    title = {{Reachability Analysis and its Application to the Safety Assessment of Autonomous Cars}},
    year = {2010},
    booktitle = {Fakult{\"{a}}t f{\"{u}}r Elektrotechnik und Informationstechnik},
    author = {Althoff, Matthias},
    pages = {221},
    school = {Technische Universit{\"{a}}t M{\"{u}}nchen},
}

@article{Kuhn1998,
    title = {{Rigorously computed orbits of dynamical systems without the wrapping effect}},
    year = {1998},
    journal = {Computing},
    author = {K{\"{u}}hn, W.},
    number = {1},
    month = {3},
    pages = {47--67},
    volume = {61},
}

@article{farjadnia2024robust,
  title={Robust Data-Driven Tube-Based Zonotopic Predictive Control with Closed-Loop Guarantees},
  author={Farjadnia, Mahsa and Fontan, Angela and Alanwar, Amr and Molinari, Marco and Johansson, Karl Henrik},
  journal={arXiv preprint arXiv:2409.14366},
  year={2024}
}

@book{boyd2004convex,
  title={Convex optimization},
  author={Boyd, Stephen and Vandenberghe, Lieven},
  year={2004},
  publisher={Cambridge university press}
}

@inproceedings{luo2023reachability,
  title={Reachability analysis for linear systems with uncertain parameters using polynomial zonotopes},
  author={Luo, Ertai and Kochdumper, Niklas and Bak, Stanley},
  booktitle={Proceedings of the 26th ACM International Conference on Hybrid Systems: Computation and Control},
  pages={1--12},
  year={2023}
}

@inproceedings{sivaramakrishnan2024stochastic,
  title={Stochastic reachability of uncontrolled systems via probability measures: Approximation via deep neural networks},
  author={Sivaramakrishnan, Karthik and Sivaramakrishnan, Vignesh and Devonport, Rosalyn A and Oishi, Meeko MK},
  booktitle={2024 IEEE 63rd Conference on Decision and Control (CDC)},
  pages={7534--7541},
  year={2024},
  organization={IEEE}
}

@article{hashemi2024statistical,
  title={Statistical reachability analysis of stochastic cyber-physical systems under distribution shift},
  author={Hashemi, Navid and Lindemann, Lars and Deshmukh, Jyotirmoy V},
  journal={IEEE Transactions on Computer-Aided Design of Integrated Circuits and Systems},
  volume={43},
  number={11},
  pages={4250--4261},
  year={2024},
  publisher={IEEE}
}

@inproceedings{althoff2015introduction,
  title={An introduction to CORA 2015},
  author={Althoff, Matthias},
  booktitle={Proc. of the workshop on applied verification for continuous and hybrid systems},
  pages={120--151},
  year={2015}
}

@book{rierson2017developing,
  title={Developing safety-critical software: a practical guide for aviation software and DO-178C compliance},
  author={Rierson, Leanna},
  year={2017},
  publisher={CRC Press}
}

@article{conf:polyDissipat,
  title={Dissipativity verification with guarantees for polynomial systems from noisy input-state data},
  author={Martin, Tim and Allg{\"o}wer, Frank},
  journal={IEEE Control Systems Letters},
  volume={5},
  number={4},
  pages={1399--1404},
  year={2020}
}

@article{dang2012reachability,
  title={Reachability Analysis for Polynomial Dynamical Systems Using the Bernstein Expansion.},
  author={Dang, Thao and Testylier, Romain},
  journal={Reliab. Comput.},
  volume={17},
  number={2},
  pages={128--152},
  year={2012}
}

@article{dreossi2017reachability,
  title={Reachability computation for polynomial dynamical systems},
  author={Dreossi, Tommaso and Dang, Thao and Piazza, Carla},
  journal={Formal Methods in System Design},
  volume={50},
  number={1},
  pages={1--38},
  year={2017},
  publisher={Springer}
}

@article{lin2022reachable,
  title={Reachable set estimation and safety verification of nonlinear systems via iterative sums of squares programming},
  author={Lin, Wang and Yang, Zhengfeng and Ding, Zuohua},
  journal={Journal of Systems Science and Complexity},
  volume={35},
  number={3},
  pages={1154--1172},
  year={2022},
  publisher={Springer}
}

@article{jones2019using,
  title={Using SOS and sublevel set volume minimization for estimation of forward reachable sets},
  author={Jones, Morgan and Peet, Matthew M},
  journal={IFAC-PapersOnLine},
  volume={52},
  number={16},
  pages={484--489},
  year={2019},
  publisher={Elsevier}
}

@inproceedings{marechal2017efficient,
  title={Efficient elimination of redundancies in polyhedra by raytracing},
  author={Mar{\'e}chal, Alexandre and P{\'e}rin, Micha{\"e}l},
  booktitle={International Conference on Verification, Model Checking, and Abstract Interpretation},
  pages={367--385},
  year={2017},
  organization={Springer}
}

\end{document}